\newlist{parts}{enumerate}{1}
\crefname{partsi}{Part}{Parts}
\setlist[parts,1]{label=\arabic*.,ref=\arabic*}
\newtheorem{definition}{Definition}
\newtheorem{lemma}{Lemma}
\newtheorem{theorem}{Theorem}
\newtheorem{remark}{Remark}
\newcommand{\existsv}{\exists{\veee}}
\newcommand{\knownf}[1]{\ensuremath{F\node{#1}}}
\newcommand{\Majvals}[1]{\ensuremath{\mathit{Maj}\node{#1}}}
\newcommand{\eqdef}{\triangleq}
\newcommand{\OptMaj}{\mbox{$\mbox{\sc Opt}_{\Maj}$}}
\newcommand{\Crash}{\mathsf{Crash}}
\newcommand{\defemph}[1]{\textbf{\textit{#1}}}
\newcommand{\sat}{\models}
\newcommand{\exv}{\exists\valv}
\newcommand{\exZ}{\exists 0}
\newcommand{\decidei}{\mathsf{decide_i}}
\newcommand{\decideiZ}{\mathsf{decide_i(0)}}
\newcommand{\decideiO}{\mathsf{decide_i(1)}}
\newcommand{\gammacr}{\gamma^{\tee}_{\mathrm{cr}}}
\newcommand{\Proc}{\mathsf{Procs}}
\newcommand{\veee}{\mathtt{v}}
\newcommand{\tee}{\,\defemph{t}}
\newcommand{\Pz}{P_0}
\newcommand{\UPz}{\mbox{$\mbox{{\sc u-}}P_0$}}
\newcommand{\Maj}{\mathsf{Maj}}
\newcommand{\OptO}{\mbox{$\mbox{{\sc Opt}}_1$}}
\newcommand{\OptZ}{\mbox{$\mbox{{\sc Opt}}_0$}}
\newcommand{\OptZs}{\mbox{$\mbox{{\sc Opt}}^\mathrm{std}_0$}}
\newcommand{\UOptZ}{\mbox{$\mbox{{\sc u-Opt}}_0$}}
\newcommand{\sfa}{\alpha}
\newcommand{\Fmodel}{{\cal F}}
\newcommand{\CG}{{\cal G}}
\newcommand{\Vals}{{\tt V}}
\newcommand{\Vecs}{\vec{\Vals}}
\newcommand{\dom}{\,{\preceq}\,}
\newcommand{\hmwopt}{P0_{\mathrm{opt}}}
\newcommand{\FP}{\mathsf{F}}
\newcommand{\notnz}{\mathsf{not\hbox{-}known}(\exists 0)}
\newcommand{\node}[1]{\langle#1\rangle}
\newcommand{\Ga}{\CG_\alpha}
\newcommand{\dec}{\mathsf{d}}
\newcommand{\cv}{\exists\mathsf{correct}(\veee)}
\newcommand{\cz}{\exists\mathsf{correct}(0)}
\newcommand{\Agreement}{{\bf Agreement}}
\newcommand{\UniAg}{{\bf Uniform Agreement}}
\newcommand{\Decision}{{\bf Decision}}
\newcommand{\Validity}{{\bf Validity}}
\newcommand{\valv}{\veee}
\newcommand{\fip}{{\it fip}}
\newcommand{\noactnv}{\mathsf{no\mbox{-}decided}(\bar\valv)}
\newcommand{\noactnO}{\mathsf{no\mbox{-}decided}(0)}
\newcommand{\KoP}{\mbox{{\bf K}$\!\!\!\!\;\;${\it o}{\bf P}}\/}
\newcommand{\val}{v}
\begin{document}
 
\title{Unbeatable Consensus\thanks{Part of the results of this paper were announced in~\cite{AYY-PODC-BA}.}}

\author{Armando Casta\~{n}eda\thanks{Universidad Nacional Aut\'onoma de M\'exico (UNAM), \mbox{\emph{E-mail}: \href{mailto:armando@cs.technion.ac.il}{armando@cs.technion.ac.il}}.}
\and Yannai A.~Gonczarowski\thanks{The Hebrew University of Jerusalem and Microsoft Research, \mbox{\emph{E-mail}: \href{mailto:yannai@gonch.name}{yannai@gonch.name}}.}
\and Yoram Moses\thanks{Technion, \mbox{\emph{E-mail}: \href{mailto:moses@ee.technion.ac.il}{moses@ee.technion.ac.il}}.}
}

\date{September 15, 2014}

\maketitle

\begin{abstract}
The \defemph{unbeatability} of a consensus protocol, introduced by Halpern, Moses and Waarts in~\cite{HalMoWa2001}, 
is a stronger notion of optimality than the accepted notion of early stopping protocols. 
Using a novel knowledge-based analysis, this paper derives the first practical unbeatable consensus protocols in the literature, for the standard synchronous message-passing model with crash failures. These protocols strictly dominate the best known protocols for uniform and for non-uniform consensus, in some case beating them by a large margin. 
The analysis provides a new understanding of the logical structure of consensus, and of the distinction between uniform and nonuniform consensus. Finally, the first (early stopping and) unbeatable protocol that treats decision values ``fairly'' is presented. All of these protocols have very concise descriptions, and  are shown to be efficiently implementable.

\vspace{1em}

\noindent {\bf Keywords:} Consensus, uniform consensus, optimality, knowledge
\end{abstract}

\section{Introduction}

Following \cite{HMT11}, we say that a protocol~$P$ is a \defemph{worst-case optimal} solution to a decision task~$S$ in a given model 
if it solves~$S$, and decisions in~$P$ are always taken no later than the {\em worst-case} lower bound for decisions in this problem, in that model.
Here we consider standard synchronous message-passing models with $n$ processes and
at most $\tee < n$ crash failures per run; it will be convenient to denote the number of \emph{actual} failures in a given run by~$f$. Processes proceed in a sequence of synchronous rounds.
The very first consensus protocols were worst-case optimal, deciding in exactly $\tee+1$ rounds in all runs \cite{DS,PSL}.
It was soon realized, however, that they could be strictly improved upon by \defemph{early stopping} protocols~\cite{DRS}, 
which are also worst-case optimal, but can often decide much faster than the original ones. 
This paper presents a number of consensus
protocols that are not only worst-case optimal and early stopping, but furthermore cannot be strictly improved upon, 
and are thus optimal in a much stronger sense. 

In benign failure models it is typically possible to define the behaviour of the environment (i.e., the adversary) in a manner that is independent of the protocol, in terms of a pair $\alpha=(\vec{v},\FP)$ consisting of a vector $\vec{v}$ 
of initial values and a failure pattern~$\FP$. (A formal definition is given in Section~\ref{sec:model}.)
A failure model~$\Fmodel$  is identified with a set of (possible) failure patterns. 
For ease of exposition, we will think of such a pair $\alpha=(\vec{v},\FP)$ as a particular {\em adversary}. 
In a synchronous environment, 
a deterministic protocol~$P$ and an adversary~$\alpha$ uniquely define a run $r=P[\alpha]$. 
With this terminology, we can compare the performance of different decision protocols solving a particular task in a given context $\gamma=(\Vecs,\Fmodel)$, where $\Vecs$ is a set of possible vectors of initial values.
A decision protocol $Q$ \defemph{dominates} a protocol~$P$ in~$\gamma$, denoted by $Q\boldsymbol{\dom_\gamma} P$ if, for all adversaries $\alpha$ and every process~$i$, if $i$ decides in~$P[\alpha]$ at time 
$m_i$, then $i$ decides in $Q[\alpha]$ at some time 
$m'_i\le m_i$. Moreover, we say that $Q$  \defemph{strictly dominates} $P$
if $Q\dom_\gamma P$ and  $P\!\!\boldsymbol{\not}\!\!\!\dom_\gamma Q$. I.e., if~$Q$ dominates~$P$ and for some $\alpha\in\gamma$ there exists a process~$i$ that decides in $Q[\alpha]$ {\em strictly before} it does so in $P[\alpha]$. 
In the crash failure model, the early-stopping protocols of \cite{DRS} strictly dominate the original protocols of \cite{PSL}, in which decisions are always performed at time $\tee+1$. Nevertheless, these early stopping protocols may not be optimal solutions to consensus. 
Following \cite{HMT11} a protocol~$P$ is said to be an  \defemph{all-case optimal} solution to a decision task~$S$ in a context~$\gamma$ if it solves~$S$ and, moreover, it dominates every protocol~$P'$ that solves~$S$ in~$\gamma$. 
Dwork and Moses presented all-case optimal solutions to the {\em simultaneous} variant of consensus
\cite{DM}.
 For the standard ({\em eventual}) variant of consensus, in which decisions 
 are not required to 
 occur simultaneously, Moses and Tuttle showed that no all-case optimal solution exists~\cite{MT}. 
Consequently, Halpern, Moses and Waarts in \cite{HalMoWa2001} initiated the study of a natural notion of optimality 
that is achievable by eventual consensus protocols:

\begin{definition}[Halpern, Moses and Waarts]
A protocol $P$ is an \defemph{unbeatable} solution to a decision task~$S$ in a context~$\gamma$ if $P$ solves~$S$ in~$\gamma$ and no protocol $Q$ solving~$S$ in~$\gamma$ strictly dominates~$P$.%
\footnote{All-case optimal protocols are called {\em ``optimal in all runs''} in \cite{DM}. 
 They are termed  {\em ``optim\defemph{um}''} in \cite{HalMoWa2001}, while unbeatable protocols are simply called {\em ``optim\defemph{al}''} there. We prefer the term {\em unbeatable} because ``optimal'' is used very broadly, and inconsistently, in the literature.}
\end{definition}

Halpern, Moses and Waarts 
observed that for every consensus protocol $P$ there exists an unbeatable protocol $Q_P$ that dominates~$P$. Moreover, they showed a two-step transformation that defines such a protocol~$Q_P$ based on~$P$. 
This transformation and the resulting protocols are based on a notion of {\em continual} common knowledge that is computable, but not 
efficiently: 
in the resulting protocol, each process executes exponential time (PSPACE) local 
computations in every round. The logical transformation is not applied in \cite{HalMoWa2001} to an actual protocol. 
As an example of an unbeatable protocol, they present a particular protocol, called~$\hmwopt$, and argue that it is unbeatable in the crash failure model.
Unfortunately, as we will show,  $\hmwopt$ is in fact beatable. This does not refute the general analysis and transformation defined in \cite{HalMoWa2001}; they remain correct. Rather, the fault is in an unsound step in the proof of optimality of $\hmwopt$ (Theorem~{6.2} of \cite{HalMoWa2001}), in which an inductive step is not explicitly detailed, and does not hold.

\vspace{1em}

The main contributions of this paper are:

\begin{enumerate}
\item 
A knowledge-based analysis is applied to the classical consensus protocol, and is shown to yield solutions that are optimal in a much stronger sense than all previous solutions. Much simpler and more intuitive than the framework used in \cite{HalMoWa2001}, it illustrates how the knowledge-based approach can yield a structured approach to the derivation of efficient protocols. 

\item $\OptZ$, the first explicit unbeatable protocol for nonuniform consensus is presented. It is computationally efficient, and its unbeatability is established by way of a succinct proof. Moreover,  $\OptZ$ is shown to strictly dominate the $\hmwopt$ protocol from \cite{HalMoWa2001}, proving that the latter is in fact beatable.
\item An analysis of uniform consensus gives rise to~$\UOptZ$, the first explicit unbeatable protocol for uniform consensus. The analysis used in the design of~$\UOptZ$ sheds light on the inherent difference and similarities between the uniform and nonuniform variants of consensus in this model. 

\item 
Early stopping protocols for consensus are traditionally one-sided,  preferring to decide on~0 (or on~1) if possible.  deciding on a predetermined value (say, 0) if possible, we present an 
An unbeatable (and early stopping) majority consensus protocol~$\OptMaj$  is presented, that prefers the \emph{majority} value.
\item We identify the notion of a {\em hidden path} as being crucial to decision in the consensus task. 
If a process identifies that no hidden path exists, then it can decide. 
In the fastest early-stopping protocols, a process decides after the first round in which it does not detect a new failure. 
By deciding based on the nonexistence of a hidden path, our unbeatable protocols can stop up to $\tee-3$ rounds  faster than the best early stopping protocols in the literature. 
\end{enumerate}

We  now sketch the intuition behind, our unbeatable consensus protocols. 

In the standard version of consensus,  every process~$i$ starts with an initial value $v_i\in\{0,1\}$, and the following properties must hold in every run~$r$: 

\vspace{\topsep}
\noindent
\underline{~(Nonuniform) {\bf  Consensus:~}}
\begin{itemize}
\item[]{\bf Decision:}\quad Every correct process must decide on some value, 
\item[]{\bf Validity:}\quad If all initial values are~$\valv$ then the correct processes decide~$\valv$, and 
\item[]{\bf Agreement:}\quad All correct processes decide on the same value.
\end{itemize}

The connection between knowledge and distributed computing was proposed in~\cite{HM1} and has been used in the analysis of a variety of problems, including consensus (see \cite{FHMV} for more details and references). 
In this paper, we employ simpler techniques to perform a more direct knowledge-based analysis.
Our approach is based on a simple principle recently formulated by Moses in~\cite{Mono}, called the \defemph{knowledge of preconditions} principle (\KoP), which captures an essential connection between knowledge and action in distributed and multi-agent systems. 
Roughly speaking, the \KoP\ principle says that 
if~$C$ is a necessary condition for an action~$\sfa$ to be performed by process~$i$, then $K_i(C)$ --- $i$~knowing~$C$ --- is a  necessary condition for~$i$ performing~$\sfa$. E.g., it is not enough for a client to have positive credit in order to receive cash from an ATM; the ATM must \defemph{know} that the client has positive credit. 

Problem specifications typically state or imply a variety of necessary conditions. 
In the crash failure model studied in this paper,  we will say that a process is \defemph{active} at time~$m$ in a given run, if it does not crash before time~$m$. For $\valv\in\{0,1\}$, we denote by 
${\decidei(\valv)}$ the action of~$i$ deciding~$\valv$, and use $\bar \valv$ as shorthand for $1-\valv$. 

\begin{lemma}
\label{lem:necessary}
Consensus implies the following necessary conditions for ${\decidei(\valv)}$ in the crash failure model: 
\begin{enumerate}
\item[(a)] ``at least one processes had initial value~$\valv$'' (we denote this by~$\boldsymbol{\existsv}$), and 
\item[(b)] ``no currently active process has decided, or is currently deciding, 
$\bar\valv$''
(we denote this by $\noactnv$).
\end{enumerate}
\end{lemma}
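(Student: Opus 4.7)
The plan is to prove both parts by contradiction via the standard indistinguishability construction for synchronous crash-failure models. Suppose $i$ performs $\decidei(\valv)$ at some time~$m$ in a run $r = P[\alpha]$ with $\alpha = (\vec{v},\FP)$. Let $\alpha' = (\vec{v},\FP')$, where $\FP'$ is obtained from $\FP$ by deleting every failure occurring strictly after time~$m$. Since the number of failures can only decrease, $\alpha'$ remains admissible, and the runs $r$ and $r' \eqdef P[\alpha']$ are indistinguishable through time~$m$ to every process that is active at time~$m$ in~$r$. In particular, $i$ still performs $\decidei(\valv)$ at time~$m$ in $r'$; moreover, $i$ is \emph{correct} in~$r'$, because no failures occur after time~$m$.

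For part~(a), assume toward contradiction that $\neg\existsv$ holds in~$r$; that is, the initial vector $\vec{v}$ contains only $\bar\valv$. The same $\vec{v}$ is used in~$r'$, so by \Validity\ applied to~$r'$, every correct process --- and in particular~$i$ --- must decide $\bar\valv$. This contradicts the fact that $i$ decides $\valv\neq\bar\valv$ in $r'$.

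For part~(b), assume that some process~$j$ is active at time~$m$ in~$r$ and has either already performed $\mathsf{decide}_j(\bar\valv)$ before time~$m$, or is performing it at time~$m$. Since $j$ is active at time~$m$, the same indistinguishability argument shows that $j$ performs this decision action in~$r'$ as well, and $j$ too is correct in~$r'$. Thus $r'$ exhibits two correct processes, $i$ and~$j$, deciding the opposing values $\valv$ and $\bar\valv$, directly contradicting \Agreement. The only real subtlety in carrying out this outline is the formal justification of the indistinguishability step: one must verify both that truncating $\FP$ beyond time~$m$ yields an admissible failure pattern (immediate, since the number of failures only decreases) and that no process active at time~$m$ can detect any removed post-$m$ failure during the first~$m$ rounds (immediate, since such failures cannot causally affect any event occurring by time~$m$). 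Both facts are routine for synchronous message-passing models but should be stated explicitly in the paper's notation for adversaries and runs.
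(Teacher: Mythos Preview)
Your proof is correct and follows essentially the same approach as the paper's. Both argue by contradiction via an indistinguishable run in which the deciding process~$i$ (and, for part~(b), the process~$j$) is correct, then invoke \Validity\ and \Agreement\ respectively. The only cosmetic difference is that the paper simply asserts the existence of such a run~$r'$ with the required properties, whereas you supply an explicit construction by truncating all failures beyond time~$m$; this is a perfectly valid concrete realization of the paper's existential step.
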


Both parts follow from observing that if~$i$ decides~$\valv$ at a point where either (a) or (b) does not hold, then the execution can be extended to a run in which $i$ (as well as $j$, for (b)) is correct (does not crash), and this run violates \Validity\ for (a) or \Agreement\ for (b). 

Given \cref{lem:necessary}, \KoP\ implies that  $K_i\existsv$ and $K_i\noactnv$ are also necessary conditions for $\decidei(\valv)$. 
In this paper, we will explore how this insight can be exploited in order to design efficient consensus protocols. 
Indeed, our first unbeatable protocol will be one in which, roughly speaking,  the rule for $\decideiZ$ will be 
$K_i\exZ$, and the rule for $\decideiO$ will be  $K_i\noactnO$. 
As we will show, if the rule for $\decideiZ$ is $K_i\exZ$, then $\noactnO$ reduces to the fact $\notnz$, 
which is true at a given time if $K_j\exZ$ holds for no currently-active process~$j$. 
Thus, $K_i\noactnO$ --- our candidate rule for deciding~1 --- then becomes $K_i\notnz$. 
While $K_i\exZ$ involves the knowledge a process has about initial values, $K_i\notnz$ is concerned with~$i$'s knowledge about the knowledge of others. We will review the formal definition of knowledge in the next section, in order to turn this into a rigorous condition. 

Converting the above description into an actual protocol essentially amounts to providing concrete tests for when these knowledge conditions hold. 
 It is straightforward to show (and quite intuitive) that in a full-information protocol $K_i\exZ$ holds exactly if there is a message chain from some process~$j$ whose initial value is~0, to process~$i$. 
To determine that $\notnz$, a process must have proof that no such chain can exist. 
Our technical analysis identifies a notion of a {\em hidden path} with respect to~$i$ at a time~$m$, which implies that a 
message chain 
could potentially be communicating a value unbeknownst to~$i$. 
It is shown that hidden paths are key to evaluating whether $K_i\notnz$ holds. 
In fact, it turns out that hidden paths are key to obtaining additional  unbeatable protocols in the crash failure model. We present two such protocols; one is a consensus protocol in which a process that sees a majority value can decide on this value, and the other is an unbeatable protocol for the {\em uniform} variant of consensus. In uniform consensus, 
any two processes that decide must decide on the same value, even if one (or both) of them crash soon after deciding. 
 
This paper is structured as follows: 
The next section reviews the definitions of the synchronous crash-failure model and of knowledge in this model. 
\cref{sec:PA-con} presents~$\OptZ$, our unbeatable consensus protocol, proves its unbeatability, and shows that it beats the protocol~$\hmwopt$ of~\cite{HalMoWa2001}.
It then derives an  unbeatable consensus protocol, $\OptMaj$, that treats~0 and~1 in a balanced way.
Both unbeatable protocols decide in no more than~$f+1$ rounds in runs in which~$f$ processes actually fail 
but they can decide much earlier than that. 
\cref{sec:uniform} studies uniform consensus, and derives $\UOptZ$, an unbeatable protocol for uniform consensus. 
Finally, \cref{sec:discussion} concludes with a discussion. The Appendix contains full proofs to all claims that are not proved in the main text. 

\section{Preliminary Definitions}
\label{sec:model}

Our model of computation is the standard synchronous message-passing model with 
benign crash failures.
A system has~\mbox{$n\!\ge\!2$} processes denoted by  
$\Proc=\{1,2,\ldots,n\}$. 
Each pair of processes is connected by a two-way communication link,
and each message is tagged with the identity of the sender.
They share a discrete global clock that starts out at time~$0$ and
advances by increments of one. Communication in the system proceeds in
a sequence of \emph{rounds}, with round~$m+1$ taking place between
time~$m$ and time~$m+1$.
Each process starts in some \emph{initial state} at time~$0$,
usually with an \emph{input value} of some kind.
In every round, each process first performs a local computation, and performs local actions,
then  
it sends a set of messages to other processes, and finally receives messages sent to it
by other processes during the same round. 
We consider the 
local computations and sending actions of round~$m+1$ as being performed at time~$m$, 
and the messages are received at time~$m+1$.

A faulty process fails by \emph{crashing} in some round~$m\ge 1$. 
It behaves correctly in the first~$m-1$ rounds and 
sends no messages from round~$m+1$ on. 
During its crashing round~$m$, the process may succeed in
sending messages on an arbitrary subset of its links. 
At most~$\tee \leq n-1$ processes fail in any given execution.

It is convenient to consider the state and behaviour of processes at different 
(process-time) nodes, where a \defemph{node}  is a pair $\node{i,m}$ referring to process~$i$ at time~$m$.
A \defemph{failure pattern} describes how processes fail in an execution.
It is a layered graph~$\FP$ whose vertices are 
nodes~$\node{i,m}$
 for $i\in\Proc$ and $m\ge 0$. 
Such a vertex denotes process~$i$ and time~$m$. 
An edge has the form $(\node{i,m-1},\node{j,m})$ 
and it denotes the fact that a message sent by~$i$ to~$j$ in round~$m$ would be delivered successfully. 
Let~$\Crash(\tee)$ denote the set of failure patterns in which all failures are crash failures, and no more than~$\tee$ crash failures occur. 
An \defemph{input vector} describes the initial values that the processes receive in an 
execution. The only inputs we consider are initial values that processes obtain at time~0. 
An input vector is thus a tuple $\vec{v}=(v_1,\ldots,v_n)$ where~$v_j$ is the input to process~$j$. 
We think of the input vector and the failure pattern as being determined by an external scheduler, and thus a  pair $\alpha=(\vec{v},\FP)$ is called an {\em adversary}. 

A \defemph{protocol} describes what messages a process sends and what decisions it takes, 
as a deterministic function
 of its local state
at the start of a round and the messages received during a round.
We assume that a protocol~$P$ has access to the values of~$n$ and~$\tee$,
typically passed to~$P$ as parameters.

A \defemph{run} is a description of an infinite behaviour of the system.
Given a run~$r$ and a time~$m$, 
we denote by $r_i(m)$
the \defemph{local state} of process~$i$ at time~$m$ in~$r$,
and the \defemph{global state} at time $m$ 
is defined to be $r(m)=\node{r_1(m),r_2(m),\ldots,r_n(m)}$.
A protocol~$P$ and an adversary~$\alpha$ uniquely determine a run, 
and we write $r = P[\alpha]$.

Since we restrict attention to benign failure models and focus on decision times and solvability in this paper, it is sufficient to consider {\em full-information} protocols ({\em fip}'s for short), defined below \cite{Coan}. 
There is a convenient way to consider such protocols in our setting. 
With an adversary $\alpha=(\vec{v},\FP)$ we associate a \defemph{communication graph} $\CG_\alpha$, 
consisting of the graph~$\FP$ extended by labelling the initial nodes $\node{j,0}$ with the initial states $v_j$ according to~$\alpha$. 
Every node $\node{i,m}$ is associated with a subgraph  $\Ga(i,m)$ of~$\CG_\alpha$, which we think of as $i$'s {\em view} at $\node{i,m}$.
Intuitively, this graph will represent all nodes $\node{j,\ell}$ from which $\node{i,m}$ has heard, and the initial values it has seen. 
Formally, $\Ga(i,m)$ is defined by induction on~$m$. 
$\Ga(i,0)$ consists of the node $\node{i,0}$, labelled by the initial value~$v_i$. 
Assume that $\Ga(1,m),\ldots,\Ga(n,m)$ have been defined, and let $J\subseteq\Proc$ be the set of processes~$j$ such that $j=i$ or $e_j=(\node{j,m},\node{i,m+1})$ is an edge of~$\FP$. Then $\Ga(i,m+1)$ consists of the node $\node{i,m+1}$, the union of all graphs $\Ga(j,m)$ with $j\in J$, and the edges 
$e_j=(\node{j,m},\node{i,m+1})$ for all $j\in J$. 
We say that $(j,\ell)$ is {\em seen} by $\node{i,m}$ if $(j,\ell)$ is a node of $\Ga(i,m)$. Note that this occurs exactly if the failure pattern~$\FP$ allows a (Lamport) message chain from $\node{j,\ell}$ to $\node{i,m}$. 

A full-information protocol $P$ is one in which at every node $\node{i,m}$ of a run $r=P[\alpha]$ the process~$i$ constructs $\Ga(i,m)$ after receiving its round~$m$ nodes, and sends $\Ga(i,m)$ to all other processes in round~$m+1$. In addition, $P$ specifies what decisions $i$ should take at $\node{i,m}$ based on $\Ga(i,m)$.
Full-information protocols thus differ only in the decisions taken at the nodes. 
Let $\dec(i,m)$ be 
status of~$i$'s decision at time~$m$ (either~`$\bot$' if it is undecided, or a concrete value~`$\veee$').
Thus, in a run $r=P[\alpha]$, we define the local state 
$r_i(m) = \langle \dec(i,m),\Ga(i,m)\rangle$ if~$i$ does not crash before time~$m$ according to~$\alpha$, and $r_i(m)=\frownie$, an uninformative ``crashed'' state,  if~$i$  crashes
before time~$m$.

For ease of exposition and analysis, all of our  protocols are full-information. 
However, in fact, they can all be implemented in such a way that any process sends any other 
process a total of $O(f \log n)$ bits throughout any execution  
(as shown by Lemma~\ref{nlogn} in Appendix~\ref{sec-notions}).

\subsection{Knowledge}

Our construction of unbeatable protocols will be assisted and guided by a knowl\-edge-based analysis, in the spirit of \cite{FHMV,HM1}. 
Runs are dynamic objects, changing from one time point to the next. E.g., at one point process~$i$ may be undecided, while at the next it may decide on a value. Similarly, the set of initial values that~$i$ knows about, or has seen, may change over time. In general, whether a process ``knows'' something at a given point can depend on what is true in other runs in which the process has the same information. 
We will therefore consider the truth of facts at {\em points} $(r,m)$---time~$m$ in run~$r$, with respect to a 
set of runs~$R$ 
(which we call a \defemph{system}). 
We will be interested in systems of 
the form $R_P=R(P,\gamma)$ where $P$ is a protocol and $\gamma=\gamma(\Vals^n,\Fmodel)$ is the set of all adversaries that assign initial values from~$\Vals$ and failures according to~$\Fmodel$. We will write $(R,r,m)\sat A$ to state that fact~$A$ holds, or is satisfied, at $(r,m)$ in the system~$R$.

The truth of some facts can be defined directly. 
For example, the fact $\existsv$ will hold at $(r,m)$ in~$R$
if some process has initial value~$\veee$ in~$(r,0)$. We say that {\em (satisfaction of)} a fact~$A$ is \defemph{well-defined in~$R$} if 
for every point $(r,m)$ with $r\in R$ we can determine whether or not $(R,r,m)\sat A$. 
Satisfaction of~$\existsv$ is thus well defined. 
Moreover, any boolean combination of well-defined facts is also well defined.
We will write $K_iA$ to denote that \defemph{process~$i$ knows~$A$}, and define: 

\begin{definition}[Knowledge]
\label{def:know}
Suppose that~$A$ is well defined in~$R$. Define that 

\noindent\begin{tabular}{r l c l}
$(R,r,m)\sat K_iA$ & ~~iff~~ & $(R,r',m)\sat A$ 
\mbox{holds for all} $r'\in R$ \mbox{with}
$r_i(m)=r'_i(m)$.
\end{tabular}
\end{definition}

Thus, if $A$ is well defined in~$R$ then Definition~\ref{def:know} makes $K_iA$ well defined in~$R$. 
Note that what a process knows or does not know depends on its local state.
The definition can then be applied recursively, to define the truth of $K_jK_iA$ etc. Knowledge has been used to study a variety of problems in distributed computing. 
In particular, we now formally define 
$(R,r,m)\sat\notnz$ to hold iff
\mbox{$(R,r,m)\not\sat K_j\exists 0$}~~holds for 
every process~$j$ that does not crash by 
time~$m$ in~$r$.
We will make use of the following  fundamental connection between knowledge and actions in distributed systems. 
A fact~$A$ is a \defemph{necessary condition} for process~$i$ performing action~$\sigma$ 
(e.g. deciding 
on an output value)
in~$R$ if 
$(R,r,m)\sat A$ whenever $i$ performs $\sigma$ at a point $(r,m)$ of~$R$. 

\begin{theorem}[Knowledge of Preconditions, \cite{Mono}]
\label{thm:knowprec}
Let $R_P=R(P,\gamma)$ be the set of runs of a deterministic protocol~$P$. 
If $A$ is a necessary condition for~$i$ performing~$\sigma$ in~$R_P$, then so is $K_iA$.
\end{theorem}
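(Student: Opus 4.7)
The plan is to prove the statement directly from the definition of knowledge together with determinism of~$P$. Suppose~$i$ performs~$\sigma$ at a point $(r,m)$ of~$R_P$; we want to show $(R_P,r,m)\sat K_iA$. Unfolding \cref{def:know}, this amounts to showing $(R_P,r',m)\sat A$ for every run $r'\in R_P$ with $r'_i(m)=r_i(m)$.

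Fix such an $r'$. The key observation is that in the model, every action~$\sigma$ taken by process~$i$ at time~$m$ is prescribed by the protocol~$P$ as a deterministic function of~$i$'s local state $r_i(m)$ at time~$m$ (the local computation and action-taking of round~$m+1$ are said to occur at time~$m$, and analogously for decision actions such as $\decidei(\valv)$ that we care about). Hence, because $r'_i(m)=r_i(m)$ and $r,r'\in R_P$ are both generated by the same deterministic protocol~$P$ from their respective adversaries, process~$i$ performs exactly the same actions at time~$m$ in~$r'$ as in~$r$. In particular,~$i$ performs~$\sigma$ at the point $(r',m)$.

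Now we invoke the assumption that~$A$ is a necessary condition for~$i$ performing~$\sigma$ in~$R_P$: since $i$ performs~$\sigma$ at $(r',m)$ and $r'\in R_P$, the definition of necessary condition yields $(R_P,r',m)\sat A$, as required. Since $r'$ was an arbitrary run in~$R_P$ indistinguishable from~$r$ to~$i$ at time~$m$, we conclude $(R_P,r,m)\sat K_iA$, and therefore $K_iA$ is itself a necessary condition for~$i$ performing~$\sigma$ in~$R_P$.

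The argument is short and essentially a bookkeeping exercise, so there is no substantial obstacle; the only delicate point is aligning the notion of ``$i$ performs~$\sigma$ at time~$m$'' with the protocol's dependence on the local state~$r_i(m)$, which is precisely what makes determinism indispensable. If~$P$ were randomized, or if~$\sigma$ could depend on something beyond $r_i(m)$, the same local state in~$r'$ would no longer force~$i$ to perform~$\sigma$ there, and the implication would fail; this is why the hypothesis that~$P$ is deterministic appears explicitly in the statement.
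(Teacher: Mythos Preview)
Your argument is correct and is the standard proof of the Knowledge of Preconditions principle. Note, however, that the paper does not actually prove this theorem; it is stated with a citation to~\cite{Mono} and used as a black box. There is therefore no in-paper proof to compare against, but your derivation is exactly the intended one: determinism of~$P$ forces~$i$ to perform~$\sigma$ in every indistinguishable run~$r'$, whence the necessary condition~$A$ holds at $(r',m)$, and so $K_iA$ holds at $(r,m)$.
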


\section{Unbeatable Consensus}
\label{sec:PA-con}

We start with the standard version of consensus defined in the Introduction, and consider the crash failure context $\gammacr=\langle\Vals^n,\Crash(\tee)\rangle$, where $\Vals=\{0,1\}$ --- initial values are binary bits. Every protocol~$P$ in this setting determines a system $R_P=R(P,\gammacr)$. 
Recall that \cref{lem:necessary} establishes necessary conditions for decision in consensus. Based on this, 
Theorem~\ref{thm:knowprec} yields:

\begin{lemma}\label{lem:know-exists}
Let $P$ be a consensus protocol for~$\gammacr$ and let $R_P=R(P,\gammacr)$. Then both $K_i\existsv$ and $K_i\noactnv$ are necessary conditions for $\decidei(\valv)$ in~$R_P$. 
\end{lemma}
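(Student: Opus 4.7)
The plan is to derive the two knowledge-based necessary conditions by feeding the two ``plain'' necessary conditions from \cref{lem:necessary} through the Knowledge of Preconditions principle (\cref{thm:knowprec}), applied in the concrete system $R_P$. Concretely, the proof will break into three short steps: verify that $\existsv$ and $\noactnv$ are well-defined facts in $R_P$; verify that they are necessary conditions for $\decidei(\valv)$ \emph{in~$R_P$} (rather than only in the abstract problem specification); and then invoke \cref{thm:knowprec} twice.

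For the first step, the fact $\existsv$ depends only on the initial values recorded at time~$0$, so it is determined by the global state $r(0)$ and is thus well defined at every point $(r,m)$ of $R_P$. The fact $\noactnv$ is a boolean combination of the atomic facts ``process $j$ is active at time~$m$ in $r$'' and ``$j$'s decision status at time~$m$ in $r$ equals $\bar\valv$'', each of which is readable off the global state $r(m)$; hence $\noactnv$ is also well defined in $R_P$.

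For the second step, since $P$ solves consensus in $\gammacr$, every run $r\in R_P$ satisfies Decision, Validity and Agreement. \cref{lem:necessary} establishes the two facts as necessary by exhibiting extensions of the adversary that, in the absence of $\existsv$ or of $\noactnv$, would produce runs of $P$ violating Validity or Agreement respectively. Because $\gammacr$ is closed under these extensions---they only modify the tail of the failure pattern while keeping it in $\Crash(\tee)$---the extended runs themselves lie in $R_P$, so the violations would contradict $P$'s correctness there. Hence both $\existsv$ and $\noactnv$ are necessary conditions for $\decidei(\valv)$ in $R_P$ in the precise sense demanded by \cref{thm:knowprec}. Applying that theorem to each condition immediately yields that $K_i\existsv$ and $K_i\noactnv$ are themselves necessary conditions for $\decidei(\valv)$ in $R_P$, which is the statement of the lemma.

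The main obstacle is really brevity rather than difficulty: the lemma is essentially a direct corollary of the two results it cites. The one point that needs attention is the extension argument in the second step---making explicit that the ``bad'' extensions produced in the proof of \cref{lem:necessary} remain within $\gammacr$, so that the resulting runs genuinely belong to $R_P$ and force the two conditions to hold at every decision point.
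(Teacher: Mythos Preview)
Your proposal is correct and follows exactly the paper's approach: the paper's proof is the single line ``Directly from \cref{lem:necessary} and \cref{thm:knowprec},'' and your three steps simply unpack this citation. The extra care you take with well-definedness and with the extensions remaining in $\gammacr$ is sound but already implicit in the paper's proof of \cref{lem:necessary}, which is phrased from the outset in terms of $R_P$.
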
 

An analysis of knowledge for {\fip}s in the crash failure model was first performed  by Dwork and Moses in \cite{DM}. 
The following result is an immediate consequence of that analysis. 
Under the full-information protocol, we have: 

\begin{lemma}[Dwork and Moses~\cite{DM}]
\label{lem:knowing0}
Let $P$ be a \fip\ in~$\gammacr$ and let $r\in R_P=R(P,\gammacr)$. 
For all processes $i,j$,
~$(R_P,r,\tee+1)\sat K_i\exv$ ~iff~ \mbox{$(R_P,r,\tee+1)\sat K_j\exv$}.
\end{lemma}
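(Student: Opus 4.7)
The plan is to first reduce the knowledge condition $K_i\exv$ to a concrete condition on $i$'s view, and then argue that at time $\tee+1$ this condition gives the same verdict for every process active at that time, via a pigeonhole \emph{clean round} argument.

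As a first step, I would show that under the full-information protocol, $(R_P,r,m)\sat K_i\exv$ holds exactly when $\Ga(i,m)$ contains some initial node $\node{k,0}$ labelled $\valv$. The ``if'' direction is immediate, since any adversary $\alpha'$ consistent with $r_i(m)$ assigns that initial node the same label. For the converse, if no seen initial value equals $\valv$, then the adversary obtained from $\alpha$ by flipping every unseen initial value to $\bar\valv$ keeps the failure pattern intact and is therefore indistinguishable from $\alpha$ to $i$ at time $m$; since $\exv$ fails at the resulting run, $K_i\exv$ cannot hold.

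Next, I would argue that at time $\tee+1$ the set of initial-value labels in a process's view is the same for any two processes active at time $\tee+1$. The key tool is a clean round: since at most $\tee$ processes crash over $\tee+1$ rounds, pigeonhole delivers a round $c\in\{1,\ldots,\tee+1\}$ in which no new crash occurs. In round $c$, every process active at time $c-1$ successfully sends its view to every other process active at time $c-1$ (the same set, by the clean property), so at time $c$ all such processes share a common merged view $V_c$. After round $c$, any message an active process receives originates from a process active at time $c$, whose view can contribute only initial-value labels already in $V_c$; a straightforward induction on $\tau\ge c$ then shows that every process active at time $\tau$ has exactly the same set of initial-value labels as $V_c$. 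In particular, at time $\tee+1$ every still-active process shares this set.

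Combining the two steps yields $(R_P,r,\tee+1)\sat K_i\exv$ iff $\valv$ appears as an initial label in $V_c$ iff $(R_P,r,\tee+1)\sat K_j\exv$, for any $i,j$ active at time $\tee+1$. I expect the main subtlety to be the view-flipping argument in Step~1: one must keep the failure pattern fixed and only modify initial values at processes that $i$ does not see, so that the two runs agree on $r_i(m)$; once this is in place, the rest flows from the clean-round existence and the inductive propagation, which are both tight against the bound $\tee$.
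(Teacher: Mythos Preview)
Your argument is correct, but it takes a different route from the paper's. The paper first characterises $K_i\exv$ via the existence of a \emph{$\valv$-chain}: a sequence $j_0,\ldots,j_d=i$ of distinct processes with $v_{j_0}=\valv$ in which each $j_k$ successfully sends to $j_{k+1}$ in round $k+1$. It then argues directly on the chain: given a $\valv$-chain to $\node{i,\tee+1}$, either $j$ already lies on it, or (if $d<\tee+1$) the chain can be extended to $j$ since $i=j_d$ is active in round $d+1$, or (if $d=\tee+1$) among the $\tee+1$ distinct processes $j_0,\ldots,j_{\tee}$ at least one is nonfaulty and relays to $j$. So the paper's pigeonhole is over the processes on the chain, not over the rounds.

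Your clean-round argument is equally valid and arguably more reusable: it yields the stronger conclusion that all processes active at time $\tee+1$ share the same set of seen initial labels, not just agreement on whether $\valv$ appears. The paper's chain argument is slightly more local (no global clean round needs to be identified) and ties in directly with the $\valv$-chain machinery used elsewhere in the paper. Your restriction to processes active at time $\tee+1$ is appropriate; the paper's statement ``for all processes $i,j$'' should be read the same way, since a crashed process in state $\frownie$ never satisfies $K_i\exv$.
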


Of course, a process that does not know $\exZ$ must itself have an initial value of~1. Hence, based on \cref{lem:knowing0}, it is natural to design a \fip-based  consensus protocol that performs $\decideiZ$ at time~$\tee+1$ if~$K_i\exZ$, and otherwise performs ~$\decideiO$. (In the very first consensus protocols, all decisions are performed at time \mbox{$\tee+1$}~\cite{PSL}.) 
Indeed, one can use  \cref{lem:knowing0} to obtain a strictly better protocol, 
in which decisions on~0 are performed sooner: 

\vspace{\topsep}
\noindent
~~\underline{~{\bf Protocol}~$\Pz$~}
(for an undecided process~$i$ at time~$m$):\\
\begin{tabular}{lll}
\qquad\qquad{\bf if} &  $K_i\exists{0} $  & {\bf then} $\decideiZ$\\
\qquad\qquad{\bf  elseif} &  $m=\tee+1$
& {\bf then} $\decideiO$
\end{tabular}
\vspace{\topsep}

Notice that in a \fip\ consensus protocol, it is only necessary to describe the rules for $\decideiZ$ and $\decideiO$,
since in every round a process sends all it knows to all processes.
Since $K_i\exZ$ is a necessary condition for $\decideiZ$, the protocol $\Pz$ decides on~0 as soon  as any consensus protocol can.  In the early 80's Dolev suggested a closely related protocol~$B$ (standing for {\em ``Beep''}) for $\gammacr$, in which processes decide~0 and broadcast the existence of a~0 when they see a~0, and decide~1 at~$\tee+1$ otherwise~\cite{DolevBeep}; for all adversaries, it performs the same decisions at the same times as~$\Pz$. 
Halpern, Moses and Waarts show in~\cite{HalMoWa2001} that for every consensus protocol~$P$ in~$\gammacr$  
there is an unbeatable consensus protocol~$Q$ dominating~$P$. 
Our immediate goal is to obtain an unbeatable consensus protocol dominating~$\Pz$.
To this end, we make use of the following.

\begin{lemma}\label{lem:decide-when-0}
If $Q\dom\Pz$ is a consensus protocol, then~$\decideiZ$  
is performed  in~$Q$ exactly when $K_i\exists{0}$ first holds. 
\end{lemma}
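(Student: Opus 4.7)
The plan is to show that if $i$ performs $\decideiZ$ in $Q[\alpha]$ at some time $m$, and $m^*$ denotes the first time at which $K_i\exists 0$ holds at $i$, then $m = m^*$, by establishing the two inequalities separately. The inequality $m \ge m^*$ is immediate from \cref{lem:know-exists}, which gives $K_i\exists 0$ as a necessary condition for $\decideiZ$: $K_i\exists 0$ must hold at $(Q[\alpha], m)$ and hence $m$ cannot precede the first time this knowledge is established.

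For the opposite inequality $m \le m^*$, the plan is to appeal to the dominance hypothesis $Q \dom \Pz$ together with the defining rule of $\Pz$. The protocol $\Pz$ has process $i$ decide $0$ precisely at the first time $K_i\exists 0$ holds. I would first observe that this first-hit time is protocol-independent: in any \fip, the view $\Ga(i, m)$ is determined by $\alpha$ alone, and $K_i\exists 0$ at $(r, m)$ is equivalent to $\Ga(i, m)$ containing an initial node labelled~$0$ --- a purely view-theoretic condition that does not depend on the decisions encoded in $r_i(m)$. Hence the first time $K_i\exists 0$ holds is the same quantity $m^*$ in $R_Q$ and in $R_{\Pz}$, for the same~$\alpha$. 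Because $i$ must be active at $m$ in order to have performed $\decideiZ$ there, and $m \ge m^*$, process $i$ is also active at time $m^*$, so in $\Pz[\alpha]$ process $i$ performs $\decideiZ$ at exactly $m^*$. Dominance then guarantees that $i$ decides in $Q[\alpha]$ at some time $\le m^*$, and since a process decides only once, this decision must be the one at time $m$, yielding $m \le m^*$.

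Combining the two inequalities gives $m = m^*$, which is the lemma's claim. The only subtle point --- and the main one to get right --- is the protocol-independence of $m^*$ just flagged in the second step; without this observation, the necessary condition coming from \cref{lem:know-exists} (evaluated in $R_Q$) and the dominance step (which reasons about decisions of $\Pz$ in $R_{\Pz}$) would refer to potentially different quantities. Once the view-theoretic characterisation of $K_i\exists 0$ is in hand, the argument is simply a combination of the \KoP{} necessary condition and the dominance assumption, with no further obstacles.
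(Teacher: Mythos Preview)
Your argument establishes only one direction of the lemma: assuming that $i$ performs $\decideiZ$ at some time~$m$ in $Q[\alpha]$, you show that $m$ coincides with the first time~$m^*$ at which $K_i\exists 0$ holds. But the lemma asserts that $\decideiZ$ is \emph{actually performed} at~$m^*$ whenever $i$ is active there---and this is precisely how the lemma is used downstream (via \cref{lem:decide-when-notnz}, whose proof needs that $K_j\exists 0$ at time~$m$ forces~$j$ to have decided~$0$ by then). Your domination step does show that $i$ decides \emph{something} in~$Q[\alpha]$ by time~$m^*$, but nothing in your argument rules out that this decision is on~$1$. The necessary condition $K_i\exists 1$ from \cref{lem:know-exists} is too weak to do this on its own: at times $m'<m^*$ process~$i$ may well satisfy $K_i\exists 1$ without $K_i\exists 0$, and at $m'=m^*$ it may satisfy both.

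Ruling out a decision on~$1$ at or before~$m^*$ is the substantive part of the lemma, and it requires the \Agreement\ property together with an induction on~$m^*$. The paper's proof runs as follows: for $m^*=0$, process~$i$'s own initial value must be~$0$, so $K_i\exists 1$ fails and deciding~$1$ is impossible. For $m^*>0$, there is a process~$j$ with $K_j\exists 0$ first holding at some time $\le m^*-1$ whose round-$m^*$ message reaches~$i$; by the inductive hypothesis, $j$ already decided~$0$ in~$Q$. One then passes to a run, indistinguishable to~$i$ at~$m^*$, in which neither~$i$ nor~$j$ crashes, and \Agreement\ in that run forbids~$i$ from deciding~$1$ at or before~$m^*$. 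Without this inductive bridge through another process's earlier $0$-decision, the converse direction does not follow.
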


We can now formalize the discussion in the Introduction, showing that if decisions on~0 are performed precisely when $K_i\exZ$ first holds, then $\noactnO$ reduces to~$\notnz$. 

\begin{lemma}\label{lem:decide-when-notnz}
Let $P$ be a fip,
in which $\decideiZ$  
is performed  in~$P$ exactly when $K_i\exists{0}$ first holds, and let $R_P=R(P,\gammacr)$. 
Then $(R_P,r,m)\sat K_i\noactnO$ ~iff~ $(R_P,r,m)\sat K_i\notnz$ ~for all $r\in R_P$ and $m\ge 0$.
\end{lemma}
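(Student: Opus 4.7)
The plan is to show that in $R_P$ the facts $\noactnO$ and $\notnz$ coincide at every point, so that the equivalence of $K_i\noactnO$ and $K_i\notnz$ follows directly from \cref{def:know}. The nontrivial ingredient is that, under $P$, an active process $j$ at time~$m$ has decided~$0$ at some $m'\le m$ (including $m'=m$, i.e.\ currently deciding) if and only if $(R_P,r,m)\sat K_j\exists 0$.

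First I would record a monotonicity observation: in a fip, $K_j\exists 0$ is monotone in time for any process $j$ still active at the later time. Indeed, in a fip the local state $r_j(m')$ contains the view $\Ga(j,m')$, which includes $\Ga(j,m)$ as a subgraph, and decisions are a deterministic function of the view, so any $r'\in R_P$ with $r'_j(m')=r_j(m')$ also satisfies $r'_j(m)=r_j(m)$. If $K_j\exists 0$ holds at $(r,m)$ then $\exists 0$ is true in every such $r'$ at time~$m$, and since $\exists 0$ is a stable fact about initial values it remains true at time~$m'$; hence $K_j\exists 0$ also holds at $(r,m')$.

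Next I would establish the equivalence $\noactnO\equiv\notnz$ pointwise on $R_P$. If $\notnz$ fails at $(r,m)$, some process $j$ active there satisfies $K_j\exists 0$; letting $m^*\le m$ be the first time at which $K_j\exists 0$ holds along $r$, the rule defining~$P$ forces $j$ to perform $\decideiZ$ at $m^*$, and since $j$ is still active at $m$, $\noactnO$ fails at $(r,m)$. Conversely, if $\noactnO$ fails at $(r,m)$ because some active process $j$ has decided~$0$ or is currently deciding~$0$ at some time $m'\le m$, the same rule forces $K_j\exists 0$ at time $m'$, and the monotonicity observation propagates this to $(r,m)$, falsifying $\notnz$.

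Finally, since $\noactnO$ and $\notnz$ agree at every point of $R_P$, \cref{def:know} immediately yields the desired equivalence $(R_P,r,m)\sat K_i\noactnO$ iff $(R_P,r,m)\sat K_i\notnz$: both quantify over exactly the same set of runs indistinguishable from $r$ to $i$ at time~$m$, and in each of those runs the two conditions hold or fail together. The only step that requires genuine care is the monotonicity observation --- specifically, that the fip local state at the later time determines the earlier one --- the remainder being bookkeeping around the ``currently deciding'' case, which is already absorbed into the choices $m^*\le m$ and $m'\le m$.
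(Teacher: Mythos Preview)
Your proposal is correct and follows essentially the same approach as the paper: establish the pointwise equivalence of $\noactnO$ and $\notnz$ on $R_P$ (using that in a \fip\ knowledge of $\exZ$ is monotone in time, which is exactly what the paper invokes when it says ``As $P$ is a \fip, we have that neither does $K_j\exists0$ hold at any time prior to $m$''), and then lift to $K_i$ via \cref{def:know}. Your organization is slightly more symmetric than the paper's, which proves one direction pointwise and the other at the knowledge level, but the underlying argument is the same.
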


The proof of \cref{lem:decide-when-notnz} is fairly immediate: If 
$(R_P,r,m)\not\sat K_i\notnz$ then there is a run~$r'$ of~$R_P$ such that both $r_i(m)=r'_i(m)$ and $(R_P,r',m)\sat K_j\exZ$ for 
some
correct process~$j$; therefore, process~$j$ decides~0 in~$r'$.
The other direction follows directly from the decision rule for $0$.
We can now define a \fip\ consensus protocol in which~0 is defined as soon as its necessary condition $K_i\exZ$ holds, and~1 is decided as soon as possible, given the rule for deciding~0:

\vspace{\topsep}
\noindent
~~\underline{~{\bf Protocol}~$\OptZ$~}
 (for an undecided process~$i$ at time~$m$):\\
\begin{tabular}{lll}
\qquad\qquad  {\bf if} & $K_i\exists{0}$  &  {\bf then} $\decideiZ$\\
\qquad\qquad  {\bf elseif} & $K_i\notnz$
&{\bf then} $\decideiO$
\end{tabular}
\vspace{\topsep}

We can show that~$\OptZ$ is, indeed,  an unbeatable protocol:

\begin{theorem}\label{thm:optz}
$\OptZ$ is an unbeatable consensus protocol in $\gammacr$. 
 \end{theorem}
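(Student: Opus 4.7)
The plan is to establish two things: (I) that $\OptZ$ solves consensus, and (II) that for every consensus protocol $Q$ with $Q \dom \OptZ$, we also have $\OptZ \dom Q$, ruling out strict dominance.

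For (I), Validity is immediate: in an all-$0$ run every process has $K_i \exists 0$ from time~$0$ and decides~$0$; in an all-$1$ run $\exists 0$ is false so $\notnz$ holds at every point, and \cref{lem:knowing0} ensures every correct process comes to know this by time $\tee + 1$. The same lemma yields Liveness, since at time $\tee + 1$ either $K_i \exists 0$ holds (triggering $\decideiZ$) or $\neg K_j \exists 0$ uniformly over active $j$, so $K_i \notnz$ holds and $i$ performs $\decideiO$. Agreement is the subtle case. Suppose correct $i$ decides~$0$ at time~$m$ and correct $j$ decides~$1$ at time~$m'$. When $m \le m'$, knowledge persistence gives $K_i \exists 0$ at time~$m'$ with $i$ active there, contradicting $\notnz$ at $(r, m')$ (which follows from $K_j \notnz$). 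When $m > m'$, the plan is to show inductively that $\notnz$ at $(r, m')$ propagates forward in time: any first acquisition of knowledge of $\exists 0$ at some later time~$\ell$ must arise from a message sent by a process that was alive and already knew $\exists 0$ at time $\ell - 1$, so knowledge cannot re-emerge once all its current holders have crashed. Carrying out this forward-propagation step precisely, via the message-chain structure of fips, will be the main technical obstacle.

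For (II), let $Q$ be a consensus protocol with $Q \dom \OptZ$. I first observe $\OptZ \dom \Pz$: their $0$-decision rules coincide, and the liveness argument above bounds every $1$-decision of $\OptZ$ by time $\tee + 1$, exactly when $\Pz$ decides~$1$. Hence $Q \dom \Pz$, so \cref{lem:decide-when-0} applies to $Q$ and its $0$-decisions occur precisely when $K_i \exists 0$ first holds --- identically to $\OptZ$. This lets me invoke \cref{lem:decide-when-notnz} on $Q$, yielding $K_i \noactnO \Leftrightarrow K_i \notnz$ in $R_Q$; combined with \cref{lem:know-exists}, $K_i \notnz$ becomes a necessary condition for $\decideiO$ in $Q$. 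Both $K_i \exists 0$ and $K_i \notnz$ are properties of the view $\Ga(i, m)$, which under any fip depends only on the adversary, so their first-holding times coincide in $Q[\alpha]$ and $\OptZ[\alpha]$. It follows that no decision in $Q[\alpha]$ can precede the corresponding decision in $\OptZ[\alpha]$, giving $\OptZ \dom Q$ as required.
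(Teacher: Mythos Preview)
Your proposal is correct and follows essentially the same approach as the paper's proof: correctness is established via \cref{lem:knowing0} for termination and the forward-propagation of $\notnz$ for Agreement, while unbeatability goes through $\OptZ\dom\Pz$, transitivity, \cref{lem:decide-when-0}, and then \cref{lem:decide-when-notnz} combined with \cref{lem:know-exists} to bound $1$-decisions in~$Q$. The only minor point worth making explicit is the reduction to $Q$ being a \fip\ (needed to invoke \cref{lem:decide-when-notnz}), which the paper states as ``w.l.o.g.'' and which you rely on implicitly in your final remark about views depending only on the adversary.
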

 
\subsection{Testing for Knowing that Nobody Knows}

$\OptZ$ is not a standard protocol, because its actions depend on tests for process~$i$'s knowledge.
(It is a {\em knowledge-based program} in the sense of \cite{FHMV}.) 
In order to turn it into a standard protocol, we need to replace these by explicit tests on the processes' local states. The rule for~$\decideiZ$ is easy to implement. By \cref{lem:knowing0}(a), $K_i\exZ$ holds exactly if~$i$'s local state contains a time~0 node that is labelled with value~0. 
The rule $K_i\notnz$ for performing $\decideiO$ holds when~$i$ knows that no active process knows~$\exZ$, and we now characterize when this is true. A central role in our analysis will be played by process~$i$'s knowledge about the contents of various nodes in the communication graph. 
Recall that local states $r_i(m)$  in \fip's are communication graphs of the form~$\Ga(i,m)$; we abuse notation and 
write $\theta\in r_i(m)$ \big(respectively, $(\theta,\theta')\in r_i(m)$\big)  if~$\theta$ is a node of~$\Ga(i,m)=r_i(m)$ 
\big(respectively, if $(\theta,\theta')$ is an edge of~$\Ga(i,m)=r_i(m)$\big); in this case, we say that $\theta$ is \defemph{seen} by $\node{i,m}$.
 We now make the following definition: 

\begin{definition}[Revealed]
\label{def:revealed}
Let~$r\in R_P=R(P,\gammacr)$ for a \fip\ protocol~$P$.
We say that
\defemph{node~$\boldsymbol{\node{j',m'}}$  is revealed to~$\boldsymbol{\node{i,m}}$ in}~$\boldsymbol{r}$ if either ~(1) $\node{j',m'}\in r_i(m)$, or ~(2)  for some process~$i'$ such that  $\node{i',m'}\in r_i(m)$ it is the case that \mbox{$\big(\node{j',m'-1},\node{i',m'}\big)\notin r_i(m)$}. 
We say that \defemph{time~$\boldsymbol{m'}$ is revealed to~$\boldsymbol{\node{i,m}}$ in}~$\boldsymbol{r}$ if $\node{j',m'}$ is revealed to $\node{i,m}$ for all processes $j'$.
\end{definition}

Intuitively,  if node $\node{j',m'}$ is revealed to $\node{i,m}$ then~$i$ has proof at time~$m$ that $\node{j',m'}$ can not carry information that is not known at $\node{i,m}$ but may be known at another node $\node{j,m}$ at the same time. This because either~$i$ sees $\node{j',m'}$ at that point---this is part~(1)---or~$i$ has proof that~$j'$ crashed before time~$m'$, and so its state there was~$\frownie$, and~$j'$ did not send any messages at or after time~$m'$. 
It is very simple and straightforward from the definition to determine which nodes are revealed to~$\node{i,m}$, based on $r_i(m)=\Ga(i,m)$.
Observe that if a node~$\node{j',m'}$ is revealed to~$\node{i,m}$, then~$i$ knows at~$m$ what message could have been sent at~$\node{j',m'}$: 
If $\node{j',m'}\in r_i(m)$ then $r_{j'}(m')$ is a subgraph of $r_i(m)$, while if~$\big(\node{j',m'-1},\node{i',m'}\big)\notin r_i(m)$ for some node $\node{i',m'}\in r_i(m)$, then~$j'$ crashed before time~$m'$ in~$r$, and so it sends no messages at time~$m'$.  
Whether and when a node $\node{j',m'}$ is revealed to~$i$ depends crucially on the failure pattern. If~$i$ receives a message from~$j'$ in round~$m'+1$, then $\node{j',m'}$ is immediately revealed to 
 $\node{i,m'+1}$. If this message is not received by $\node{i,m'+1}$, then $\node{j',m'+1}$ --- the successor of $\node{j',m'}$ --- becomes revealed (as being crashed, i.e.\ in state~$\frownie$) to $\node{i,m'+1}$. But in general $\node{j',m'}$ can be revealed to~$i$ at a much later time than $m'+1$, (A simple instance of this is when $K_i\exZ$ first becomes true at a time $m>1$; this happens when $\node{j,0}$ with $\val_j=0$ is first revealed to~$i$.) 

 Suppose that some time $k\le m$ is revealed to~$\node{i,m}$. 
Then, in a precise sense, process~$i$ at time~$m$ has 
all of the information that existed in the system at time~$k$ (in the hands of processes that had not crashed by then).
In particular, if this information does not contain an initial value of~0, then nobody can know~$\exZ$ at or after time~$m$. 
We now formalize this intuition and show that revealed nodes can be used to determine when a process can know $\notnz$.

\begin{lemma}
\label{lem:rev}
Let~$P$ be a \fip\ and let~$r\in R_P=R(P,\gammacr)$. 
For every node $\node{i,m}$, it is the case that
$(R_P,r,m)\sat K_i\notnz$ exactly if both (1)~$(R_P,r,m)\not\sat K_i\exZ$ and ~(2)  
some time $k\le m$ is revealed to~$\node{i,m}$ in~$r$.
\end{lemma}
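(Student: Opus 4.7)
The plan is to prove the biconditional in two directions; the reverse direction is a direct chain-tracing argument, while the forward direction combines a one-line veridicality argument for condition~(1) with a hidden-path construction for condition~(2).

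For the $(\Leftarrow)$ direction, assume $(R_P,r,m)\not\sat K_i\exZ$ and that some time $k_0\le m$ is revealed to $\node{i,m}$ in~$r$. Pick any $r'\in R_P$ with $r'_i(m)=r_i(m)$; I must show that no process active at time~$m$ in $r'$ knows $\exZ$. Suppose for contradiction some such~$j$ does; then in the \fip{}~$P$, $K_j\exZ$ at $(r',m)$ forces a Lamport chain $\node{j_0,0}\to\node{j_1,1}\to\cdots\to\node{j_m,m}=\node{j,m}$ in the failure pattern $\FP'$ of $r'$, with $v'_{j_0}=0$. Revelation depends only on $r_i(m)=r'_i(m)$, so time~$k_0$ is revealed in~$r'$ and hence $\node{j_{k_0},k_0}$ is revealed. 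If revealed via condition~(1) of \cref{def:revealed}, then $r'_{j_{k_0}}(k_0)\subseteq r'_i(m)=r_i(m)$ contains $\node{j_0,0}$ labelled~$0$, contradicting $(R_P,r,m)\not\sat K_i\exZ$. If revealed via condition~(2), then $j_{k_0}$ fails to send in $\FP'$ to some $i'$ seen by $i$ at time~$k_0$, so $j_{k_0}$'s crash round in $\FP'$ is at most $k_0$; but the chain forces $j_{k_0}$ to send to $j_{k_0+1}$ in round $k_0+1$ when $k_0<m$, or $j_m$ to be active at time~$m$ when $k_0=m$, so $j_{k_0}$ is in fact pre-crash in round $k_0$ and must send to every process---a contradiction.

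For the $(\Rightarrow)$ direction, condition~(1) is immediate: knowledge is veridical, so $K_i\notnz$ entails $\notnz$, and since $i$ is itself active at time $m$, this yields $(R_P,r,m)\not\sat K_i\exZ$. Condition~(2) is proved by contraposition. Assuming no time $k\le m$ is revealed to $\node{i,m}$, for each $k\in\{0,\ldots,m\}$ pick a node $\node{j_k,k}$ not revealed to $\node{i,m}$. I build $\alpha'=(\vec v',\FP')$ by setting $v'_{j_0}=0$ (inheriting other initial values from $\alpha$); requiring $\FP'$ to agree with $\FP$ on every in-edge to a node in $N=\CG_\alpha(i,m)$; installing each chain edge $(\node{j_k,k},\node{j_{k+1},k+1})$ for $k<m$ in $\FP'$; crashing each $j_k$ with $k<m$ in round $k+1$ sending only along the chain edge (plus any recipient forced by matching with $N$); and keeping $j_m$ alive past time $m$. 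Since $\node{j_{k+1},k+1}\notin r_i(m)$ for each $k$ (the failure of part~(1) of \cref{def:revealed}), the chain's in-edges land outside $N$ and can be installed freely, and a short induction on time then gives $\CG_{\alpha'}(i,m)=\CG_\alpha(i,m)$, whence $r'_i(m)=r_i(m)$ for $r'=P[\alpha']$. In $r'$, $j_m$ is active at time $m$ and has $\node{j_0,0}$ labelled $0$ in its view, so $(R_P,r',m)\sat K_{j_m}\exZ$, contradicting $K_i\notnz$ at $(r,m)$.

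The main obstacle is verifying that $\FP'$ respects the crash budget $\FP'\in\Crash(\tee)$ while simultaneously matching every in-edge to $N$. The crucial slack is provided by the failure of condition~(2) for $\node{j_k,k}$: for each $k\ge 1$, $j_k$ sends in $\FP$ to every visible $i'$ at time $k$ and is therefore pre-crash in $\FP$ through round $k$, which both supplies enough lifetime in $\FP'$ to install the chain edge at round $k+1$ and makes the prescribed sending subset in $j_k$'s crashing round compatible with the $\FP$-behaviour recorded on in-edges to $N$. Controlling the total number of crashes requires reusing failures already present in $\FP$ (typically by choosing the $j_k$'s to coincide with processes whose $\FP$-crash round is already near $k+1$), exploiting that having no revealed time through $m$ itself forces $\FP$ to hide many failures; the detailed bookkeeping is deferred to the appendix.
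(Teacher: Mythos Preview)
Your overall strategy matches the paper's: for $(\Leftarrow)$ you trace a 0-chain in an arbitrary indistinguishable run and show it must intersect the revealed level $k_0$, and for the contrapositive of condition~(2) in $(\Rightarrow)$ you relay a~$0$ along the hidden path to an active process at time~$m$. This is exactly the paper's argument (the paper phrases the first direction via an auxiliary ``0-path'' characterisation, but the content is the same).

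The one place you make heavier weather than necessary is the crash budget for~$\FP'$. You are missing a one-line observation that eliminates all of the ``detailed bookkeeping'' you defer: for each $k<m$, the failure of condition~(1) of \cref{def:revealed} alone already gives $\node{j_k,k}\notin r_i(m)$, and since $\node{i,k+1}\in r_i(m)$, process $j_k$ must have failed to send to~$i$ in round~$k+1$---so $j_k$ is \emph{already faulty in~$r$}. Hence \emph{any} choice of hidden nodes $j_0,\ldots,j_{m-1}$ automatically consists of processes that crash in~$\FP$; there is no need to ``choose the $j_k$'s to coincide with processes whose $\FP$-crash round is already near $k+1$''. The paper exploits this directly with the simpler construction: let $\alpha'$ agree with~$\alpha$ everywhere except that $v'_{j_0}=0$, each $j_k$ with $k<m$ crashes in round $k+1$ sending only to $j_{k+1}$, and $j_m$ is nonfaulty. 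The faulty set of~$\alpha'$ is then contained in that of~$\alpha$, so $\FP'\in\Crash(\tee)$ is immediate. With this observation your proposal is complete and essentially identical to the paper's proof; the elaborate matching-on-$N$ construction and the appendix deferral are not needed.
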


Based on \cref{lem:rev}, we now obtain a standard unbeatable consensus protocol for~$\gammacr$ that implements~$\OptZ$:

\vspace{\topsep}
\noindent
~~\underline{~{\bf Protocol}~$\OptZs$~}
 (for an undecided process~$i$ at time~$m$):\\
\begin{tabular}{lll}
\qquad\qquad  {\bf if} & 
$i$ has seen a time-$0$ node with initial value~$0$  
&  {\bf then} $\decideiZ$\\
\qquad\qquad  {\bf elseif} &   
some time $k\le m$ is revealed to~$\node{i,m}$
&{\bf then} $\decideiO$
\end{tabular}
\vspace{\topsep}

We emphasize that $\OptZs$ (and thus also $\OptZ$), and all the following protocols, can be implemented efficiently. The protocol only uses information about the existence of~0 and about the rounds at which processes crash. 
It can therefore be implemented in such a way that any process sends a total of $O(f\log n)$ bits (see Lemma~\ref{nlogn} in Appendix~\ref{sec-notions}) in every run, 
and executes $O(n)$ local steps in every round.

The formulation of $\OptZs$, in addition to facilitating an efficient implementation, also makes the worst-case stopping time of $\OptZs$ and $\OptZ$ apparent.

\begin{lemma}\label{optz-f1}
In $\OptZs$ (and thus also $\OptZ$), all decisions are made by time $f+1$ at the latest.%
\footnote{In all our protocols, a process can stop at the earlier of one round after deciding and time~$\tee+1$.}
\end{lemma}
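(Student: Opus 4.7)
The plan is to show that at time $f+1$ every still-active, undecided process $i$ satisfies one of the two clauses of $\OptZs$. If $i$ has seen a $0$-labelled initial node, it decides~$0$ by the first clause. Otherwise, the plan is to argue via a pigeonhole count that some time $k \le f+1$ is revealed to $\node{i,f+1}$, so by \cref{lem:rev} the second clause fires and $i$ decides~$1$.

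Suppose toward contradiction that no $k \in \{0, 1, \dots, f\}$ is revealed to $\node{i,f+1}$, and for each such $k$ pick a witness $j^{(k)}$ whose time-$k$ node is not revealed. First I would argue that each $j^{(k)}$ is faulty, and pin down its crashing round $r^{(k)}$. A process that stays alive through time $f+1$ sends to $i$ in round $k+1$, embedding $\Ga(j,k)$ into $r_i(k+1) \subseteq r_i(f+1)$ and thus revealing $\node{j,k}$; hence $j^{(k)}$ must be faulty. Clause~(2) of \cref{def:revealed}, applied at $i' = i$, would reveal $\node{j^{(k)},k}$ unless $i$ received from $j^{(k)}$ in round~$k$, forcing $r^{(k)} \ge k$; and clause~(1) would be satisfied if $r^{(k)} \ge k+2$, since $j^{(k)}$ would then send to $i$ in round $k+1$ as well and drag $\node{j^{(k)},k}$ into $r_i(f+1)$. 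Thus $r^{(k)} \in \{k, k+1\}$ for $k \ge 1$, and a parallel check gives $r^{(0)} = 1$.

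The main step is to show the $f+1$ witnesses $j^{(0)}, \dots, j^{(f)}$ are pairwise distinct, yielding more than $f$ faulty processes and contradicting the $f$-failure hypothesis. For $|k-k'| \ge 2$, the crashing-round intervals $\{k, k+1\}$ and $\{k', k'+1\}$ are disjoint, so $j^{(k)} \ne j^{(k')}$ automatically. The crux is consecutive indices: an equality $j^{(k)} = j^{(k+1)} = p^*$ would force $r_{p^*} = k+1$, but then the witness condition at time $k+1$ requires $i$ to have received from $p^*$ in round $k+1$, which embeds $\Ga(p^*, k)$ in $r_i(k+1)$ and makes $\node{p^*, k} \in r_i(f+1)$ --- contradicting the unrevealed status of $\node{j^{(k)}, k}$. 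A parallel argument covers $j^{(0)} = j^{(1)}$. The main obstacle I anticipate is tracking which crashing rounds are compatible with the witness conditions at each $k$, especially at the $k=0$ boundary where clause~(2) of \cref{def:revealed} is vacuous and the analysis reduces to propagation of initial values.
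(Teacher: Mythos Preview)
Your proposal is correct and follows the same core strategy as the paper: pick an unrevealed witness $j_k$ at each time $k$, argue each is faulty, argue they are pairwise distinct, and conclude by pigeonhole that the number of such times is at most $f$.

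The one place where your execution diverges from the paper is the distinctness argument, which you over-engineer. Your consecutive-index case already contains the whole argument: if $\node{j_k,k}$ is not revealed to $\node{i,m}$, then in particular $\node{j_k,k}\notin r_i(m)$, so $i$ never received a message from $j_k$ in any round $k'>k$ (such a message would carry $\Ga(j_k,k'-1)\ni\node{j_k,k}$). Hence for every $k'>k$ the edge $(\node{j_k,k'-1},\node{i,k'})$ is absent from $r_i(m)$ while $\node{i,k'}\in r_i(m)$, so clause~(2) of \cref{def:revealed} fires and $\node{j_k,k'}$ \emph{is} revealed --- forcing $j_{k'}\ne j_k$ for all $k'>k$ at once. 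This is exactly the paper's proof; your detour through crashing-round intervals $\{k,k+1\}$ for the $|k-k'|\ge 2$ case is sound but unnecessary.
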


It is interesting to compare $\OptZ$ with  efficient early-stopping consensus protocols \cite{CBS-uni,DRS,GGP,HalMoWa2001}. 
Let's say that \defemph{the sender set repeats at~$\node{i,m}$} in run~$r$ if~$i$ hears from the same set of processes in rounds~$m-1$ and~$m$. If this happens then, for every $\node{j,m-1}\notin r_i(m)$, we are guaranteed that  $(\node{j,m-2},\node{i,m-1})\notin r_i(m)$. Thus, all nodes at time $(m-1)$ are revealed to~$\node{i,m}$. Indeed, in a run in which~$f$ failures actually occur, the sender set will repeat for every correct process by time~$f+1$ at the latest. Efficient early stopping protocols typically decide when the sender set repeats. Indeed, the protocol $\hmwopt$ that was claimed by \cite{HalMoWa2001} to be unbeatable does so as well, with a slight optimization.  
Writing $\forall 1$ to stand for ``all initial values are~1'',  $\hmwopt$ is described as follows:

\vspace{\topsep}
\noindent
~~\underline{~{\bf Protocol}~$\hmwopt$~}
 (for an undecided process~$i$ at time~$m$)~\cite{HalMoWa2001}~:\\
\begin{tabular}{lll}
\quad\qquad  {\bf if} & 
$K_i\exZ$ 
&  {\bf then} $\decideiZ$\\
\quad\qquad  {\bf elseif} &   $K_i\forall 1$ ~or ~$m\ge 2$ and the sender set repeats at~$\node{i,m}$
&{\bf then} $\decideiO$
\end{tabular}
\vspace{\topsep}

$\OptZ$ and $\hmwopt$ differ only in the rule for deciding~1. But~$\OptZ$ strictly beats~$\hmwopt$, and sometimes by a wide margin. If $t=\Omega(n)$ then it can decide faster by a {\em ratio} of $\Omega(n)$. Indeed, we can show: 
\begin{lemma}
\label{lem:beats}
If $3\le\tee\le n-2$, then $\OptZ$ strictly dominates $\hmwopt$. Moreover, there exists an adversary
for which $\decideiO$ is performed after~3 rounds in $\OptZ$, and after $\tee+1$ rounds in $\hmwopt$. 
\end{lemma}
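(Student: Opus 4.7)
The plan is in two parts: first I establish weak domination $\OptZ\dom\hmwopt$; then I exhibit a single adversary realizing the claimed separation.

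For the weak domination, the two protocols share the rule $K_i\exZ$ for $\decideiZ$, so their 0-decisions coincide; it suffices to argue that whenever $\hmwopt$'s 1-rule fires at $\node{i, m}$, $K_i\notnz$ also holds there. If $K_i\forall 1$ holds at $\node{i, m}$, then every run consistent with $r_i(m)$ has all initial values $1$, so $\exZ$ is false in each and no process can know $\exZ$; thus $K_i\notnz$. If instead the sender set repeats at $\node{i, m}$ with common set $S$ (and $m \ge 2$), then I claim time $m-1$ is revealed to $\node{i, m}$: for $j \in S$ via the direct edge $\node{j, m-1}\to\node{i, m}$, and for $j \notin S$ via Case~(2) of \cref{def:revealed} applied with $i' = i$, since $(\node{j, m-2}, \node{i, m-1})\notin r_i(m)$ precisely because $j \notin S_{m-1} = S$. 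Combined with $\neg K_i\exZ$ (otherwise $\hmwopt$ would have decided 0), \cref{lem:rev} yields $K_i\notnz$.

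For strict domination, I plan to use the adversary $\alpha$ defined by: all initial values are $1$; process 1 crashes in round 1 sending no messages; and for each $k \in \{2, \ldots, \tee\}$, process $k$ crashes in round $k$ sending only to process $n-1$ (which is non-crashing since $\tee \le n-2$). Taking $i = n$, the round-$k$ sender set at $n$ becomes $S_k = \{k+1, \ldots, n\}$ for $1 \le k \le \tee$ and $S_{\tee+1} = S_\tee$, so the sender set first repeats at round $\tee+1$; moreover $K_n\forall 1$ never holds (process 1 is silent, so $n$ cannot rule out $v_1 = 0$) and $K_n\exZ$ never holds. Hence $\hmwopt$ decides 1 at time $\tee+1$.

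For $\OptZ$, I will verify that time 1 is revealed to $\node{n, 3}$: for $j \in \{3, \ldots, n\}$, $\node{j, 1}\in r_n(3)$ via the direct edge $\node{j, 1}\to\node{n, 2}$; for $j = 2$, the two-hop chain $\node{2, 1}\to\node{n-1, 2}\to\node{n, 3}$ gives $\node{2, 1}\in r_n(3)$; for $j = 1$, Case~(2) of \cref{def:revealed} applies with $i' = 3$, since process 1 sent nothing in round 1. Invoking \cref{lem:rev} then gives $K_n\notnz$ at $\node{n, 3}$, so $\OptZ$ decides 1 at time 3; a brief case analysis confirms no time is revealed to $\node{n, m}$ for $m \le 2$ (time~0 fails since $\node{1,0}\notin r_n(2)$, and for time~1 and time~2 the processes $j\in\{2,3,\dots,n-1\}$ that are alive and broadcasting in round~1 or~2 block the relevant Case~(2) applications). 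The main subtlety is that the single relay $n-1$ must simultaneously absorb each crashed process's round-$k$ message so that $n$'s sender set shrinks by exactly one per round (delaying $\hmwopt$ until round $\tee+1$), and deliver those crashed processes' time-1 states back to $n$ by time~3 via two-hop paths (enabling $\OptZ$'s early revelation).
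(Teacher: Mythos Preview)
Your argument is correct. The weak-domination half matches the paper's reasoning exactly (reducing $K_i\forall 1$ and sender-set repetition to revealed times, then invoking \cref{lem:rev}).

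For the strict-separation half you take a genuinely different route. The paper's adversary is more elaborate: process~1 crashes silently in round~1; in round~2 \emph{two} processes crash, with process~2 sending only to~$n$ and process~3 sending to everyone except~$n$; then one silent crash per round from round~4 through~$\tee$. This complementary-send trick in round~2 ensures that \emph{every} correct process has a shrinking sender set through round~$\tee$ and sees time~1 revealed by round~3, so the paper obtains the $3$ versus $\tee+1$ gap simultaneously for all correct processes. Your construction---a single designated relay~$n\!-\!1$ that absorbs each round's crash message---is simpler and cleanly delivers the gap for the single process~$i=n$, which is all the lemma statement asks for. (Indeed, in your adversary process~$n\!-\!1$ itself has a repeating sender set already at round~2, so the uniform-over-processes conclusion would fail.) Both approaches are valid; yours is more economical for the stated claim, while the paper's establishes a slightly stronger fact.

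One minor point: your closing ``brief case analysis'' that no time is revealed to $\node{n,m}$ for $m\le 2$ is correct but terse. For time~2 the hidden node is $\node{3,2}$ (not $\node{2,2}$, which \emph{is} revealed via Case~(2) with $i'=n$); the blocking edge is $(\node{3,1},\node{n,2})\in r_n(2)$. You may want to name that node explicitly.
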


\subsection{Hidden Paths and Agreement}\label{hidden-paths}

It is instructive to examine the proof of \cref{lem:rev} (see Appendix~\ref{sec-proofs-consensus})
and consider when an active process~$i$ is undecided at~$\node{i,m}$  in~$\OptZ$. This occurs if both $\neg K_i\exZ$ and, in addition, for every $k=0,\ldots,m$ there is at least one node $\node{j_k,k}$ that is not revealed to~$\node{i,m}$. We call the sequence of nodes $\node{j_0,0},\ldots,\node{j_m,m}$ a \defemph{hidden path w.r.t.}~$\boldsymbol{\node{i,m}}$.
Such a hidden path implies that all processes $j_0,\ldots,j_m$ have crashed.
Roughly speaking, $\exZ$ could be relayed 
along such a hidden path without~$i$ knowing it (see \cref{hidden-path}).
\pgfmathsetmacro{\graphrowspace}{.8}%
\pgfmathsetmacro{\graphradius}{.163}%
\newcounter{aaa}%
\setcounter{aaa}{0}%
\newcounter{bbb}%
\setcounter{bbb}{0}%
\begin{figure}[ht]%
\centering%
\subfigure[\scriptsize All nodes seen (directly or indirectly) by $\node{i,3}$. The initial value is shown for all seen time-$0$ nodes. 
Notably, both $\lnot K_i\exists0$ and $\lnot K_i\lnot\exists0$ hold at time $m=~3$.]{%
\begin{tikzpicture}[font=\tiny,scale=0.69]
\node at (-1,4*\graphrowspace) {$i$};
\foreach \i in {0,1,2,3}
{
\node at (-1,{(3-\i)*\graphrowspace}) {$j_{\i}$};
}
\node at (-1,-\graphrowspace) {$m:$};
\node at (3,4*\graphrowspace) [above right] {$\node{i,3}$};
\foreach \m in {0, 1, 2, 3}
{
\draw[fill] (\m,4*\graphrowspace) circle (\graphradius);
\ifthenelse{\m = 0}{
\node[white] at (\m,4*\graphrowspace) {$1$};
}{}
\foreach \i in {0,1,2,3}
{
\ifthenelse{\m < \i}{
\draw[fill] (\m,{(3-\i)*\graphrowspace}) circle (\graphradius);
\ifthenelse{\m = 0}{
\node[white] at (\m,{(3-\i)*\graphrowspace}) {$1$};
}{}
}{}
}
\node at (\m,-\graphrowspace) {$\m$};
}
\end{tikzpicture}%
}\quad
\subfigure[][\scriptsize The state of each node, according to the information held by $\node{i,3}$: \newline
\mbox{\includegraphics[width=.7em]{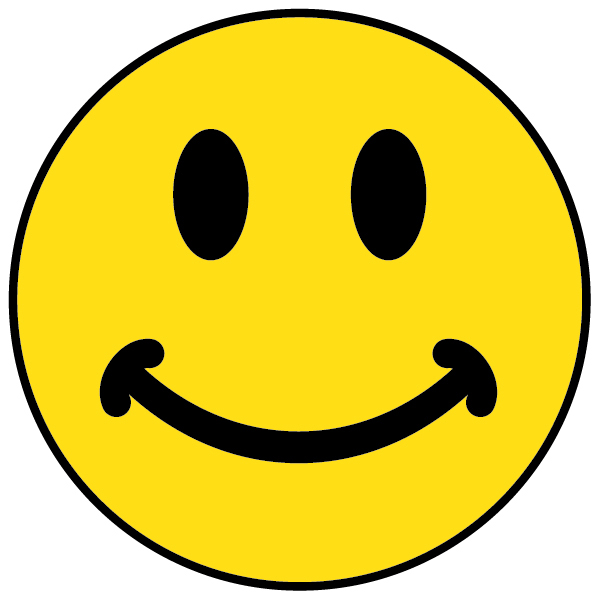}}=seen by all;
\mbox{\includegraphics[width=.7em]{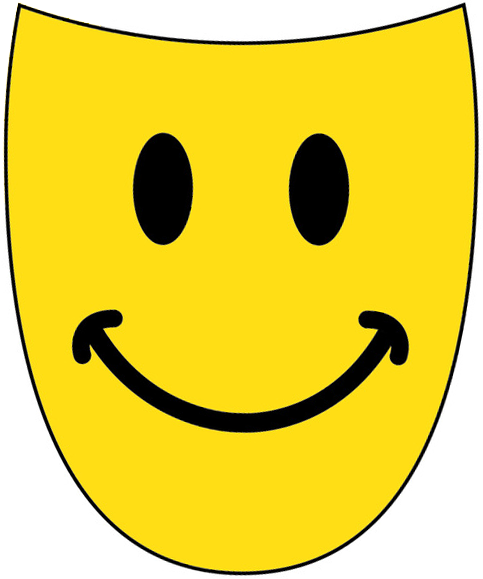}}=seen, may have crashed; 
\mbox{\includegraphics[width=.7em]{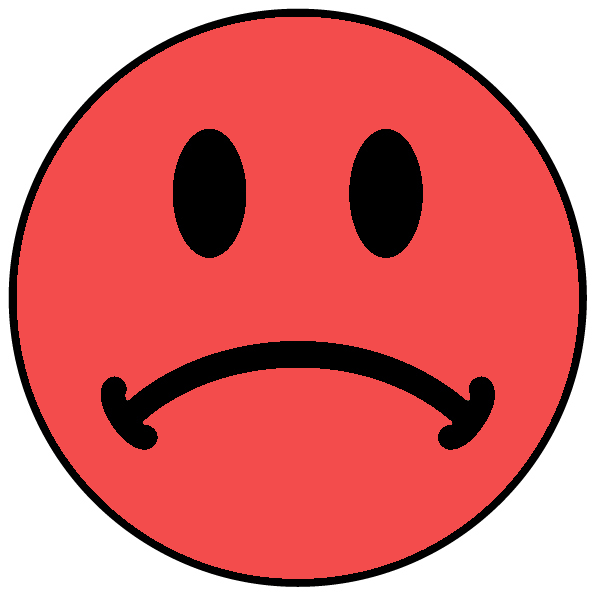}}=revealed, seen by none;
\mbox{\includegraphics[width=.7em]{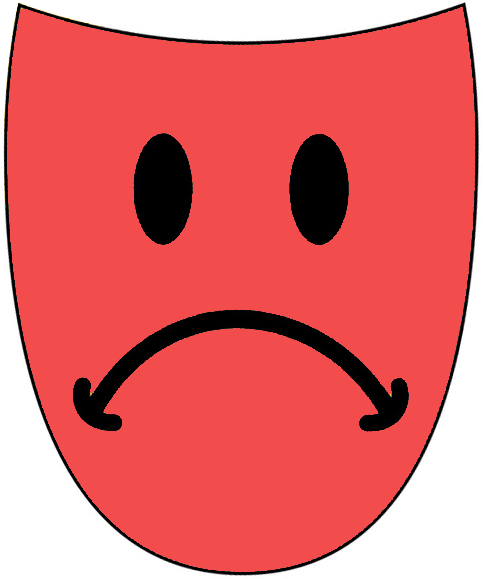}}=hidden: may have been seen by others.\ifthenelse{\theaaa=0}{\stepcounter{aaa}}{\footnotemark}]{%
\begin{tikzpicture}[font=\tiny,scale=0.69]
\node at (-1,4*\graphrowspace) {$i$};
\foreach \i in {0,1,2,3}
{
\node at (-1,{(3-\i)*\graphrowspace}) {$j_{\i}$};
}
\node at (-1,-\graphrowspace) {$m:$};
\node at (3,4*\graphrowspace) [above right] {$\node{i,3}$};
\foreach \m in {0, 1, 2, 3}
{
\node at (\m,4*\graphrowspace) {\includegraphics[width=1em]{smiley.jpg}};
\foreach \i in {0,1,2,3}
{
\ifthenelse{\m < \i}{
\ifnumcomp{\m}{=}{\i-1}{
\node at (\m,{(3-\i)*\graphrowspace}) {\includegraphics[width=1em]{smiley_mask.png}};
}{
\node at (\m,{(3-\i)*\graphrowspace}) {\includegraphics[width=1em]{smiley.jpg}};
}}{
\ifthenelse{\m > \i}{
\node at (\m,{(3-\i)*\graphrowspace}) {\includegraphics[width=1em]{frownie_red.png}};
}{
\ifthenelse{\m < 3 \OR \i < 3}{
\node at (\m,{(3-\i)*\graphrowspace}) {\includegraphics[width=1em]{frownie_mask_red.png}};
}{
\node at (\m,{(3-\i)*\graphrowspace}) {$?$};
}
}
}
}
\node at (\m,-\graphrowspace) {$\m$};
}
\end{tikzpicture}%
}\quad
\subfigure[][\scriptsize A run that is possible according to the information held by $\node{i,3}$;\ifthenelse{\thebbb=0}{\stepcounter{bbb}}{\footnotemark}\ in this run, $K_{j_3}\exists0$ holds at time $m=3$. Therefore, $\lnot K_i\notnz$ at time $m=3$. $\node{i,3}$ is therefore undecided in $\OptZ$.]{%
\begin{tikzpicture}[font=\tiny,scale=0.69]
\node at (-1,4*\graphrowspace) {$i$};
\foreach \i in {0,1,2,3}
{
\node at (-1,{(3-\i)*\graphrowspace}) {$j_{\i}$};
}
\node at (-1,-\graphrowspace) {$m:$};
\node at (3,4*\graphrowspace) [above right] {$\node{i,3}$};
\foreach \m in {0, 1, 2, 3}
{
\draw[fill] (\m,4*\graphrowspace) circle (\graphradius);
\ifthenelse{\m = 0}{
\node[white] at (\m,4*\graphrowspace) {$1$};
}{}
\foreach \i in {0,1,2,3}
{
\ifthenelse{\m < \i}{
\draw[fill] (\m,{(3-\i)*\graphrowspace}) circle (\graphradius);
\ifthenelse{\m = 0}{
\node[white] at (\m,{(3-\i)*\graphrowspace}) {$1$};
}{}
}{
\ifthenelse{\m = \i}{
\draw[fill,color=gray] (\m,{(3-\i)*\graphrowspace}) circle (\graphradius);
\ifthenelse{\m = 0}{
\node[white] at (\m,{(3-\i)*\graphrowspace}) {$0$};
}{}
\ifthenelse{\m <3 0}{
\draw[->,>=latex,shorten >=8pt,shorten <=8pt,color=gray] (\m,{(3-\i)*\graphrowspace}) -- (\m+1,{(3-\i-1)*\graphrowspace});
\node at (\m+.45,{(3-\i-.45)*\graphrowspace}) {\includegraphics[width=.5em]{smiley_mask.png}};
}{}
}{}
}
}
\node at (\m,-\graphrowspace) {$\m$};
}
\end{tikzpicture}%
}%
\caption{\small A hidden path $\node{j_0,0},\ldots,\node{j_3,3}$ w.r.t.\ $\node{i,3}$ implies $\lnot K_i\notnz$ at $3$.}%
\label{hidden-path}%
\end{figure}%
\addtocounter{footnote}{-1}%
\footnotetext{\label{double-mask}For simplicity, in this example every node seen by $\node{i,3}$ is also seen by all other nodes in the view of $\node{i,3}$. In other words, there exists no node $\node{j,m'}$ that is in state \includegraphics[width=.7em]{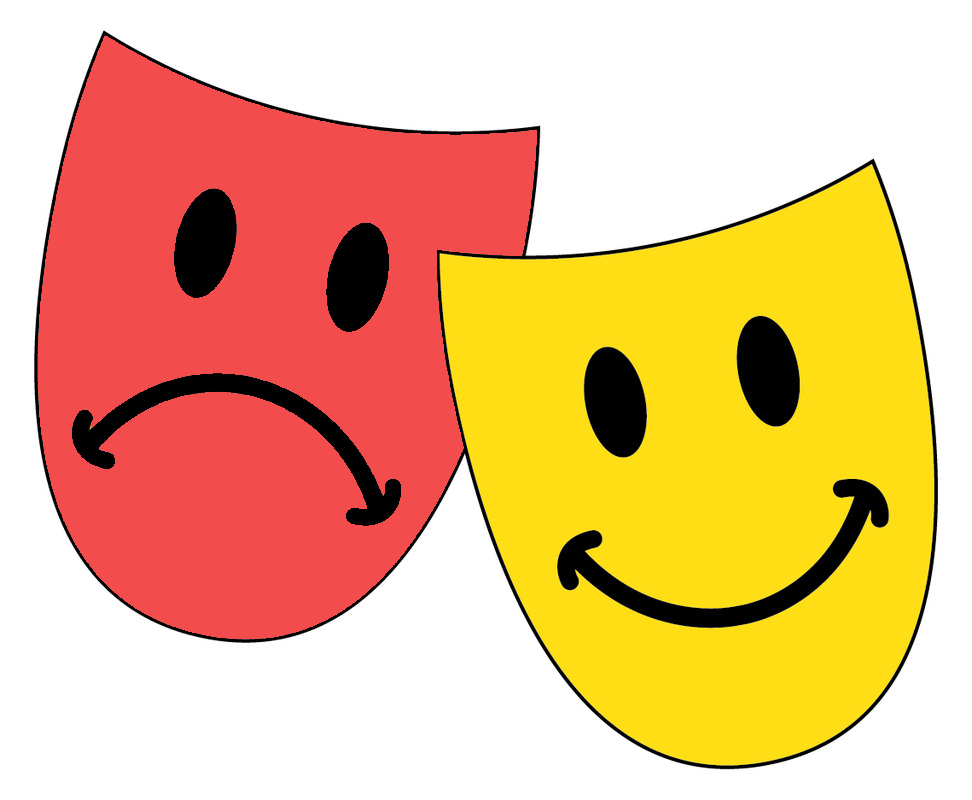} according to the information held by $\node{i,3}$, i.e.\ both $\node{j,m'}$ is seen by $\node{i,3}$, and $i$ has indirectly learnt by time $3$ that $j$ has in fact crashed at $m'$.}%
\addtocounter{footnote}{1}%
\footnotetext{In this run, the state of both $\node{j_0,0}$ and $\node{j_1,1}$, according to the information held by $\node{j_3,3}$, is \includegraphics[width=.7em]{double_mask_red.png}, as defined in \cref{double-mask}.}%
More formally,  its existence means that there is a run, indistinguishable at $\node{i,m}$ from the current one, in which 
$\val_{j_0}=0$ and this fact is sent from each $j_k$ to $j_{k+1}$ in every round $k+1\le m$. In that run process~$j_m$ is active at time~$m$  and $K_{j_m}\exZ$, and that is why $K_i\notnz$ {\em does not} hold. 
Hidden paths are implicit in many lower bound proofs for consensus in the crash failure model~\cite{DRS,DM}, but they have never before been captured formally. Clearly, hidden paths can relay more than just the existence of a value of~0. In a protocol in which some view can prove that the state is univalent in the sense of Fischer, Lynch and Paterson \cite{FLP}, a hidden path from a potentially pivotal state can keep processes from deciding on the complement value. 
Our analysis in the remainder of the paper provides additional cases in which unbeatable consensus is obtained when hidden paths can be ruled out. 

\subsection{Majority Consensus}

Can we obtain other unbeatable consensus protocols? Clearly, the symmetric protocol $\OptO$, obtained from~$\OptZ$ by reversing the roles of~0 and~1, is unbeatable and neither dominates, nor is dominated by, $\OptZ$. 
Of course, $\OptZ$ and~$\OptO$ are extremely biased, each deciding on its favourite value if at all possible, even if it appears as the initial value of a single process.
One may argue that it is natural, and may be preferable in many applications, to seek a more balanced solution, in which minority values are not favoured. Fix~$n>0$ and define the fact ``$\Maj=0$'' to be true if at least $n/2$ initial values are~0, while ``$\Maj=1$'' is true if strictly more than $n/2$ values are~1. Finally, relative to a node $\node{i,m}$, we define $\Majvals{i,m}\eqdef 0$ if at least half of the processes whose initial value is known to~$i$ at time~$m$ have initial value~$0$; $\Majvals{i,m}\eqdef 1$ otherwise. 
Consider the following protocol:

\vspace{\topsep}
\noindent
~~\underline{~{\bf Protocol}~$\OptMaj$~}
 (for an undecided process~$i$ at time~$m$):\\
\begin{tabular}{lll}
\qquad\qquad  {\bf if} & $K_i(\Maj=0)$  &  {\bf then} $\decideiZ$\\
\qquad\qquad  {\bf elseif} & $K_i(\Maj=1)$ &{\bf then} $\decideiO$\\
\qquad\qquad  {\bf elseif} &
some time $k\le m$ is revealed to~$\node{i,m}$  
& {\bf then} $\decidei(\Majvals{i,m})$.
\end{tabular}
\vspace{\topsep}

We note that whether $K_i(\Maj=0)$ (resp.\ $K_i(\Maj=1)$) holds can be checked efficiently: it holds exactly if $i$ has seen at least (resp.\ strictly more than) $n/2$ time-$0$ nodes with initial value $0$ (resp.\ $1$).

\begin{theorem}
\label{thm:optmaj}
If $\tee>0$, then
$\OptMaj$ is an unbeatable consensus protocol. In particular, in a run in which~\mbox{$f\le \tee$} failures actually occur, all decisions are performed by time~$f+1$, at the latest. 
\end{theorem}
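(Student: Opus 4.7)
The plan is to establish three things: $\OptMaj$ solves consensus (Decision, Validity, Agreement); every decision in a run with $f$ actual failures occurs by time $f+1$; and no consensus protocol strictly dominates $\OptMaj$. Decision and the $f{+}1$ bound follow together from the ``sender-set repeats'' observation in the paragraph following \cref{optz-f1}: in a run with $f$ failures, for every correct process $i$ there must be a round $m \le f+1$ in which $i$'s sender set equals its sender set at round $m-1$, and then every time-$(m-1)$ node is revealed to $\node{i,m}$, so the third branch of $\OptMaj$ fires by $m = f+1$ at the latest. Validity is direct: if every input is $\valv$, no process ever sees a $\bar\valv$-valued initial node, so $K_i(\Maj{=}\bar\valv)$ is never satisfied, and whenever the revealed rule fires, $\Majvals{i,m} = \valv$.

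Agreement is the delicate component. I would proceed by case analysis on the branches that fire at the decisions of two deciders $i$ and $j$. The $(K,K)$ case is immediate because $\Maj{=}0$ and $\Maj{=}1$ are mutually exclusive. The critical case is when at least one decider uses the revealed rule. The key structural fact I would prove is that when some time $k$ is revealed to $\node{i,m_i}$, the initial values known to $i$ at $m_i$ are exactly those collectively known at time $k$ by the processes alive at $k$ --- because every message chain from a time-$0$ node to $\node{i,m_i}$ passes through some $\node{p_k,k}$ with $p_k$ alive at $k$, and the revelation of time $k$ hands $i$ the full state of each such $p_k$. Combining this characterization with the hidden-path framework of \cref{hidden-paths}, I would argue that if $\Majvals{j,m_j}$ differs from $\Majvals{i,m_i}$ (or if $j$ decides oppositely via a $K_j$-rule), there would have to be a hidden path at one of the decision nodes along which one could shift the undetermined initial values and violate a necessary precondition of the earlier decision --- giving a contradiction.

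For Unbeatability, assume a consensus protocol $Q$ strictly dominates $\OptMaj$, so some process $i$ decides strictly earlier in $Q[\alpha]$ than in $\OptMaj[\alpha]$ for some $\alpha$. If $i$ decides $\valv$ in $\OptMaj$ via $K_i(\Maj{=}\valv)$ at time $m_i$, then by \cref{lem:know-exists,thm:knowprec} we may flip some initial values not yet seen by $i$ at time $m_i-1$ to build an indistinguishable adversary in which $\Maj = \bar\valv$; in that adversary $\OptMaj$ (hence $Q$, by domination) forces a $\bar\valv$-decision on some correct process, contradicting Agreement in $Q$. If $i$ decides via the revealed rule at $m_i$, then a hidden path w.r.t.\ $\node{i,m_i-1}$ exists (as illustrated in \cref{hidden-path}); instantiating it in a neighbouring adversary forces some correct process to decide oppositely in $Q$, again contradicting Agreement. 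The main obstacle will be the Agreement subcase when both deciders use the revealed rule at different times with different revealed times: the two $\Majvals$ computations rest on partial views that need not coincide, and one must show the failure pattern forces them to agree --- the hidden-path machinery should suffice but requires tighter bookkeeping than in the proof of \cref{thm:optz}.
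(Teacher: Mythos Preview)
Your treatment of \Decision, \Validity, and the $f{+}1$ bound is correct and matches the paper. You overestimate the difficulty of \Agreement. Once some time $k$ is revealed to a correct process $i$ at $m_i$, your ``key structural fact'' says the set of initial values known at $\node{i,m_i}$ is exactly the collective-at-$k$ set. But the same reasoning shows that for \emph{every} process $p$ and every $m'\ge k$, the initial values known at $\node{p,m'}$ are contained in this collective-at-$k$ set (any chain to $\node{p,m'}$ passes through an alive-at-$k$ node). So if $j$ is a second correct decider, then $j$'s known values are sandwiched between $i$'s known values (which $j$ eventually receives from~$i$, or which contain $j$'s if $m_j\le m_i$) and the collective-at-$k$ set, hence equal to $i$'s. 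The two-revealed-rule case you flag as needing ``tighter bookkeeping'' therefore collapses immediately; the paper accordingly leaves \Agreement\ to the reader.

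The genuine gap is in your unbeatability argument. You write that in the modified adversary ``$\OptMaj$ (hence $Q$, by domination) forces a $\bar\valv$-decision on some correct process.'' But domination only transfers \emph{timing}, not \emph{values}: from $Q\dom\OptMaj$ you get that $Q$ decides no later than $\OptMaj$, nothing about which value $Q$ chooses. So even if some process decides $\bar\valv$ in $\OptMaj$, it need not in $Q$. A second issue is your case-split on which $\OptMaj$-rule fires at $m_i$: what matters is the earlier time $m'<m_i$ at which $i$ decides in~$Q$, and there \emph{all three} $\OptMaj$ conditions fail --- so there is always a hidden path w.r.t.\ $\node{i,m'}$ and $\neg K_i(\Maj{=}\cdot)$, regardless of which rule eventually fires. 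Your first case is therefore subsumed by the second.

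The paper closes the gap with a dedicated lemma: if $Q\dom\OptMaj$ solves consensus and $K_j(\Maj{=}\valv)$ at time~$1$, then $j$ decides $\valv$ by time~$1$ in~$Q$. This is proved by induction on $n-|Z_j|$ (the number of unseen or opposite-valued processes), via a chain of indistinguishable runs terminating at an all-$\valv$ run where \Validity\ pins down the decision value. Only with this anchor does the hidden-path construction go through: given that $i$ decides some value $w$ at $m'$ in~$Q$, one arranges that the first hidden node $\node{b_1,1}$ sees all initial values and $\Maj=\bar w$; the lemma forces $b_1$ to decide $\bar w$ in~$Q$ by time~$1$, and one then propagates this decision along the hidden path (again via indistinguishability and the lemma) to a correct process $b_{m'}$, contradicting \Agreement\ in~$Q$. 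Your sketch gestures at the hidden-path construction but lacks precisely this anchor, without which there is no way to control what $Q$ decides.
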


The proof of \cref{thm:optmaj} formalizes the following idea. Suppose that~$i$ sees fewer than a full majority of either value at $\node{i,m}$ and has a hidden path. Then~$i$ considers it possible that the node~$\node{j_1,1}$ in the hidden path may have seen either a full majority of~0's or a full majority of~1's, 
and this information may reach an active node $\node{j_m,m}$. Decision is thus impossible in this case, and decisions are made when no hidden path w.r.t.~$\node{i,m}$ is possible. 
Thus, $\OptMaj$ is an unbeatable consensus protocol that
satisfies an additional ``fairness'' property: 

\begin{itemize}
\item[]{\bf Majority Validity:}\quad For $\veee\in\{0,1\}$, if more than half of the processes are both correct and have initial value~$\veee$, then 
no process decides~$\bar\veee$ in~$r$.
\end{itemize}

\section{Unbeatable Uniform Consensus}
\label{sec:uniform}

It is often of interest to consider {\em uniform} consensus \cite{CBS-uni,Dutta-uni,H86,KR-uni,Raynal04-uni,WTC-uni}  in which 
we replace the \Agreement\ condition of consensus by: 

\begin{itemize}
\item[]{\bf Uniform~Agreement:}\quad The processes that decide in a given run must all decide on the same value. 
\end{itemize}

This forces correct processes and faulty ones to act in a consistent manner.
Requiring uniformity makes sense only in a setting where failures are benign, and all processes that decide do so according to the protocol. 
Uniformity may be desirable when elements outside the system can observe decisions, as in distributed databases when decisions correspond to commitments to values.

Under crash failures, a process generally does not know whether or not it is correct. Indeed, so long as it has not seen~$\tee$ failures, the process may (for all it knows) crash in the future. As a result, while~$K_i\exists{0}$
is a necessary condition for $\decideiZ$ as before, it cannot be a {\em sufficient} condition for decision in any uniform consensus protocol. This is because a process starting with~0 immediately decides~0 with this rule, and may immediately crash. If all other processes have initial value~1, all other decisions can only be on~1. 
Of course, $K_i\exists{0}$ is still a necessary condition  for deciding~0, but it is {\em not} sufficient. Denote by $\cv$ the fact ``some \defemph{correct} process knows $\existsv$''. We show the following:

\begin{lemma}
\label{lem:correct-uni}
${K_i\cv}$ is a necessary condition  for~$i$ deciding~$\veee$ in any protocol solving Uniform Consensus.
\end{lemma}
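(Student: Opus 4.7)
The proof plan is to invoke the Knowledge of Preconditions theorem (\cref{thm:knowprec}), which reduces the claim to showing that $\cv$ itself is a necessary condition for $i$ performing $\decidei(\veee)$ in every Uniform Consensus protocol $P$. So it suffices to prove: whenever $i$ decides $\veee$ at a point $(r,m)$ of $R_P$, some process correct in $r$ satisfies $K_j\existsv$ at time $m$.

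Toward this, I would suppose for contradiction that $i$ performs $\decidei(\veee)$ at $(r,m)$ while $\cv$ fails there, so that every process correct in $r$ has $\neg K_j\existsv$ at $m$. Applying Knowledge of Preconditions to the still-valid necessary condition $\existsv$ of \cref{lem:necessary} yields $K_i\existsv$ at $(r,m)$, and so $i$ itself belongs to the set $F$ of processes that are alive at time $m$ in $r$ and satisfy $K_p\existsv$ there. Because $\neg\cv$, every element of $F$ is faulty in $r$. I would then build an auxiliary run $r^*\in R_P$ that coincides with $r$ through time $m$ and schedules every process in $F$ to crash cleanly at round $m+1$ (sending nothing from round $m+1$ onward), while letting every other process keep the crash schedule it had in $r$. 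The failure count in $r^*$ is bounded by that of $r$ and so does not exceed $\tee$, and since local states at time $m$ are preserved, $i$ still performs $\decidei(\veee)$ at $(r^*,m)$.

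Alongside $r^*$ I would consider $\hat r=P[(\vec{\bar\veee},\FP^*)]$, the run of $P$ with the same failure pattern $\FP^*$ as $r^*$ but with every initial value reset to $\bar\veee$. Because $\neg\existsv$ holds throughout $\hat r$, Validity forces every correct process of $\hat r$ to decide $\bar\veee$, and Uniform Agreement then forbids any process in $\hat r$ from deciding $\veee$. The heart of the argument is the claim that for every process $j$ correct in $r^*$---equivalently, correct in $\hat r$---the local states satisfy $r^*_j(t)=\hat r_j(t)$ for all $t\ge 0$. Once this is established, Decision forces $j$ to eventually decide in both runs, and identity of local-state trajectories forces the two decisions to coincide; but the decision in $\hat r$ is $\bar\veee$ while in $r^*$ Uniform Agreement with $\decidei(\veee)$ demands $\veee$, yielding the desired contradiction.

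The main obstacle is proving this local-state identity, and my plan is to do so by induction on $t$, arguing that for every correct $j$ the past-cone of $(j,t)$ in $\FP^*$---the set of nodes from which there is a message chain to $(j,t)$---contains no process whose initial value equals $\veee$. Granted that past-cone property, the fact that $r^*$ and $\hat r$ share the failure pattern and differ only at the initial values of $\veee$-processes lets the determinism of $P$ propagate state identity from initial values through all subsequent rounds. For the past-cone property, I would rely on three observations: (i) every initial-$\veee$ process is either alive at time $m$---in which case its own initial value makes it satisfy $K_p\existsv$ at $m$ and so places it in $F$---or already crashed in $r$ before $m$; (ii) any message chain in $\FP^*$ from an initial-$\veee$ source that reaches a process $q$ alive at $m$ realizes $\existsv$-information at $q$ and therefore places $q$ in $F$, so every such chain is severed at round $m+1$ when $q$ crashes; and (iii) had any such chain reached a correct $j$ by time $m$, the same reasoning would yield $K_j\existsv$ at $(r,m)$, contradicting $\neg\cv$. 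Combining these yields the cone property, completes the induction, and closes the proof.
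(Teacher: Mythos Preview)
Your argument is correct, but it takes a noticeably different route from the paper's.

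The paper argues directly from the definition of knowledge: assuming $\neg K_i\cv$ at $(r,m)$, it picks a witness run $r'$ where $\neg\cv$ holds, and then constructs a single modified run $r''$ in which \emph{all} active-but-faulty processes (not just those knowing $\existsv$) crash immediately after time $m$. Since the correct processes in $r''$ are exactly those of $r'$ and none of them knows $\existsv$ at $m$, none ever will; invoking \cref{lem:necessary}(a) together with \KoP\ then shows no correct process ever decides $\veee$ in $r''$, Decision forces them to decide $\bar\veee$, and Uniform Agreement (using $\tee<n$) blocks $i$ from deciding $\veee$ at $(r'',m)$, hence at $(r,m)$.

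Your version differs in three respects. First, you apply \KoP\ at the top level, reducing to showing that $\cv$ itself is a precondition; the paper instead works with $\neg K_i\cv$ directly. Second, you crash only the set $F$ of processes that actually know $\existsv$, rather than all faulty-but-active processes; this is a smaller perturbation but forces the more delicate past-cone argument (i)--(iii) to ensure information about $\veee$ cannot leak through faulty processes outside $F$. Third, rather than re-using the necessary condition $K_j\existsv$ for deciding~$\veee$ inside the modified run, you build a second run $\hat r$ with all-$\bar\veee$ inputs and argue explicit local-state identity for correct processes. This makes the indistinguishability fully concrete and avoids the somewhat elliptical step ``no correct process will ever know $\existsv$'' in the paper's proof, at the cost of being considerably longer. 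Note that your step (ii), that a failure-pattern chain from an initial-$\veee$ source to $\langle q,m\rangle$ yields $K_q\existsv$, is exactly \cref{0-chain} and uses that the protocol is (w.l.o.g.) a \fip; the paper's version of the same passage is tacitly relying on the same assumption.
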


There is a direct way to test whether $K_i\cv$ holds, based on $r_i(m)$: 

\begin{lemma}
\label{lem:u-know}
Let $r\in R_P=R(P,\gammacr)$  and assume that $i$ knows of $\defemph{d}$ failures at $(r,m)$. 
Then
$(R_P,r,m)\sat K_i\cv$ ~iff~ at least one of ~~(a)~~$m\!>\!0$
and $(R_P,r,m\!-\!1)\sat K_i\existsv$, ~or
(b)~~$(R_P,r,m)\sat K_i(\mbox{$K_j\existsv$ ~held at time~$m\!-\!1$})$ ~holds for at least $(\tee\!-\!\defemph{d})$ distinct processes~$j$.
\end{lemma}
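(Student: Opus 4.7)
The plan is to prove both directions of the biconditional. For $(\Leftarrow)$, suppose first that (a) holds. Since in a full-information protocol the view at time $m$ extends the view at time $m-1$, we have $r'_i(m-1)=r_i(m-1)$ for every $r'$ with $r'_i(m)=r_i(m)$, so $K_i\existsv$ holds at $(r',m-1)$ as well. Because $r'_i(m)\ne\frownie$, process $i$ completed round $m$ in $r'$ without crashing, and hence delivered its full $(m{-}1)$-view, which witnesses $\existsv$, to every other process. Thus every process alive at time $m$ in $r'$ knows $\existsv$ at $(r',m)$, and since at most $\tee<n$ processes crash in $r'$, at least one correct process lies among them; so $\cv$ holds at $(r',m)$.

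Suppose now instead that (b) holds. In any indistinguishable $r'$, each of the $\tee-d$ witnesses $j$ had $K_j\existsv$ at $(r',m-1)$, and so was alive at $m-1$. The crucial counting observation is that the witnesses are disjoint from $F_{\mathrm{known}}$, the set of $d$ processes $i$ already knows to have failed by time $m$: $i$'s knowledge $K_i(K_j\existsv \text{ at } m-1)$ arrives via $j$'s own round-$m$ transmission (or a message chain terminating at $i$ in round $m$ whose internal processes all survived at least until then), so $i$ possesses no evidence at $m$ that $j$ has crashed. Consequently, either at least one witness is still alive at time $m$ in $r'$ and its round-$m$ broadcast propagates $\existsv$ to all alive-at-$m$ processes (among which at least one is correct), or all $\tee-d$ witnesses crash by time $m$; in the latter case, together with the $d$ already-known failures they exhaust the failure budget~$\tee$, forcing $i$ itself to be correct in $r'$. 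Since (b) implies $K_i\existsv$ at $(r',m)$, $i$ is then the required correct knower.

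For $(\Rightarrow)$, assume both (a) and (b) fail. I would construct an indistinguishable $r'$ in which no correct process knows $\existsv$ at $(r',m)$. Because strictly fewer than $\tee-d$ witnesses (outside $F_{\mathrm{known}}$) exist, the adversary retains at least one unit of failure budget on top of crashing every currently known $\existsv$-knower; it uses this slack to crash $i$ and any other surviving process that might otherwise learn $\existsv$ in a later round, while carefully choosing the partial round-$m$ sends of crashing witnesses so as to preserve $r_i(m)$ exactly. The failure of (a) guarantees that $i$ itself did not broadcast $\existsv$ in round $m$, so only witnesses and their direct recipients could carry $\existsv$ into the future, and the failure of (b) leaves just enough room in the failure budget to eliminate all of them.

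The main obstacle is the counting in case (b) of $(\Leftarrow)$ — specifically, justifying the disjointness of the witnesses from $F_{\mathrm{known}}$, which is what makes $\tee-d$ (rather than $\tee-d+1$) the right count, and then handling uniformly the boundary case where the entire budget is consumed by round $m$ so that $i$ itself becomes the correct witness. A secondary challenge lies in the $(\Rightarrow)$ construction: one must orchestrate the partial round-$m$ sends from crashing witnesses so that every propagation path of $\existsv$ into later rounds runs through a process that will itself crash, without disturbing $i$'s round-$m$ view.
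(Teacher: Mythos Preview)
Your approach matches the paper's sketch, and the $(\Leftarrow)$ direction is essentially correct, but two points deserve tightening.

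For case~(b), your disjointness justification is slightly off: the parenthetical about ``a message chain terminating at $i$'' is not how $i$ can come to know that $K_j\existsv$ held at time $m{-}1$. In a \fip, the only time-$(m{-}1)$ nodes in $\Ga(i,m)$ are $\node{i,m{-}1}$ and those $\node{j,m{-}1}$ for which $j$ sent to $i$ in round~$m$; no longer chain can deliver a time-$(m{-}1)$ node. Hence a witness $j\ne i$ must have sent to $i$ in round~$m$, which immediately gives the disjointness from $F_{\mathrm{known}}$. With that in hand, the paper's argument is a one-line pigeonhole rather than your case split: the $\tee-d$ witnesses together with $i$ give $\tee-d+1$ processes disjoint from the $d$ known failures, so at most $\tee-d$ of them can be faulty, and whichever one is correct knows $\existsv$ at time~$m$ (either as a witness that knew at $m{-}1$, or as $i$ who saw a witness's view). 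Your two-case argument is valid but unnecessarily indirect.

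For $(\Rightarrow)$, your description drifts: you speak of crashing ``any other surviving process that might otherwise learn $\existsv$ in a later round,'' which both overstates the budget needed and misplaces the focus on later rounds. The paper's construction is tighter and uses exactly the slack available: in $r'$, have each witness crash in round~$m$ while sending \emph{only to~$i$}, and have $i$ crash at time~$m$ (i.e., in round $m{+}1$) before sending anything. Since (a) fails, $i$'s round-$m$ broadcast carried no $\existsv$; since every non-witness that sent to $i$ in round~$m$ had no $\existsv$ in its $(m{-}1)$-view, and since witnesses sent only to $i$, no process other than $i$ that is alive at~$m$ knows $\existsv$ there. The failure count is $d + (\text{\#witnesses}) + 1 \le d + (\tee-d-1) + 1 = \tee$. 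There is no need to chase propagation into later rounds or to crash additional survivors; controlling the witnesses' partial round-$m$ sends already confines $\existsv$ to~$i$ at time~$m$.
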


By \cref{lem:knowing0}, at time~$\tee+1$ the conditions $K_i\existsv$ and $K_i\cv$ are equivalent.
As in the case of consensus, we note that if $K_i\exists0$ (equivalently, $K_i\cz$) does not hold at time $\tee+1$, then it  never will.
We thus phrase the following \emph{beatable} algorithm, analogous to $\Pz$, for Uniform Consensus; in this protocol, $K_i\cz$ (the necessary condition for
deciding~$0$ in uniform consensus) replaces $K_i\exists0$ (the necessary condition in consensus) as the decision rule for~$0$. The decision rule for~$1$ remains the same.
Note that $K_i\cz$ can be efficiently checked, by applying the test of~\cref{lem:u-know}.

\vspace{\topsep}
\noindent
~~\underline{~{\bf Protocol}~$\UPz$~}
 (for an undecided process $i$ at time $m$):\\
\begin{tabular}{lll}
\qquad\qquad{\bf if} & $K_i\cz$ & {\bf then} $\decideiZ$ \\
\qquad\qquad{\bf elseif} & $m=\tee+1$ & {\bf then} $\decideiO$.
\end{tabular}
\vspace{\topsep}

Following a similar line of reasoning to that leading to $\OptZ$, 
we 
obtain an unbeatable uniform consensus protocol:

\vspace{\topsep}
\noindent
~~\underline{~{\bf Protocol}~$\UOptZ$~}
 (for an undecided process~$i$ at time~$m$):\\
\begin{tabular}{lll}
\qquad\qquad  {\bf if} & $K_i\cz$  &  {\bf then} $\decideiZ$\\
\qquad\qquad{\bf elseif}
& $\neg K_i\exists{0}$ 
and some time $k\le m$ is revealed to~$\node{i,m}$  
&{\bf then} $\decideiO$.
\end{tabular}
\vspace{\topsep}

Recall that whether $K_i\cz$ holds can be checked efficiently via the characterization in \cref{lem:u-know}.

\begin{theorem}
\label{thm:u-opt}
$\UOptZ$ is an unbeatable \defemph{uniform} consensus protocol in which all decisions are made by time $f+2$ at the latest, and if $f \ge \tee-1$, then all decisions are made by time $f+1$ at the latest.
 \end{theorem}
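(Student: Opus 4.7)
The plan is to prove three things about $\UOptZ$: that it solves uniform consensus in $\gammacr$, that it is unbeatable there, and that the stated time bounds hold.

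Correctness of Validity and the timing bounds follow straightforwardly from the lemmas in hand. In the all-$0$ adversary each $i$ has $K_i\exists 0$ at time $0$, so Lemma~\ref{lem:u-know}(a) with $m=1$ gives $K_i\cz$ at time $1$; in the all-$1$ adversary $\exists 0$ is false, so rule $1$ never fires and rule $2$ fires as soon as some time $k\le m$ is revealed to $\langle i,m\rangle$, which by the bookkeeping in Lemma~\ref{optz-f1} happens by time $f+1$. For the general $f+2$ bound, on a correct $i$ some time $k\le f$ is revealed to $\langle i,f+1\rangle$ (same argument as for $\OptZ$), so either rule $2$ fires at $f+1$ (if $\neg K_i\exists 0$) or $K_i\exists 0$ first holds at some $m_0\le f+1$ and Lemma~\ref{lem:u-know}(a) gives $K_i\cz$ by $m_0+1\le f+2$. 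For the sharper $f+1$ bound when $f\ge\tee-1$: at $f+1$ the sender set is stable and reveals to the correct $i$ all $d=f\ge\tee-1$ failures, so $\tee-d\le 1$; if $K_i\exists 0$ first holds at $f+1$, the round-$(f+1)$ sender $j$ of the $0$-information witnesses $K_i(K_j\exists 0 \text{ at }f)$ and Lemma~\ref{lem:u-know}(b) then yields $K_i\cz$ at $f+1$.

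Uniform Agreement is the first genuinely new argument. Assume for contradiction that in a run $r$ of $\UOptZ$ process $i$ decides $0$ at $(r,m)$ and process $j$ decides $1$ at $(r,m')$. Rule $1$ gives $K_i\cz$ at $(r,m)$, so $\cz$ holds at $r$ itself, producing a correct $c$ and a message chain $p=q_0,q_1,\ldots,q_\ell=c$ with $v_p=0$ and send times $0=t_0<t_1<\cdots<t_\ell\le m$. Rule $2$ for $j$ gives $\neg K_j\exists 0$ at $(r,m')$ together with a revealed time $k\le m'$ in $r_j(m')$. If some $q_h$ with $t_h\le k$ were alive at time $k$, clause~(2) of Definition~\ref{def:revealed} cannot apply (since $q_h$ alive at $k$ must have successfully sent in round $k$ to every process that $j$ sees at time $k$), so clause~(1) forces $\langle q_h,k\rangle\in r_j(m')$, embedding the $\exists 0$ information in $q_h$'s state at $k$ into $j$'s view and contradicting $\neg K_j\exists 0$. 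Hence every $q_h$ with $t_h\le k$ crashes before $k$; in particular $t_\ell\le k$ is excluded since $c$ is correct, and otherwise the largest index $h^\ast$ with $t_{h^\ast}\le k$ yields $q_{h^\ast}$ that crashes before $k$ yet must be alive at $t_{h^\ast+1}-1\ge k$ to send to $q_{h^\ast+1}$ in round $t_{h^\ast+1}$, the desired contradiction.

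For unbeatability, let $Q$ be any uniform consensus protocol with $Q\dom\UOptZ$. Two observations are central: (a) a routine induction on $m$ shows that the FIP local state $\UOptZ[\alpha]_i(m)$ determines $Q[\alpha]_i(m)$ for every deterministic $Q$, so the indistinguishability class at $\langle i,m'\rangle$ in $R_Q$ contains the class in $R_{\UOptZ}$, giving $K_i^Q A\Rightarrow K_i^{\UOptZ}A$ for any $A$ that is adversary-determined (like $\exists 0$) or is built recursively from such $K_j$'s (like $\cz$ and $\notnz$); (b) $\UOptZ$ and $\UPz$ share the same decide-$0$ rule, so $\UOptZ\dom\UPz$, and the uniform analog of Lemma~\ref{lem:decide-when-0} (same proof, using Lemma~\ref{lem:correct-uni} in place of the $K_i\exists 0$ necessity) states that in any $Q'\dom\UPz$ a process decides $0$ exactly when $K_i\cz$ first holds. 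Now suppose $Q[\alpha]$ decides at $(i,m')$ while $\UOptZ[\alpha]$ has not. If the decision is $0$, Lemma~\ref{lem:correct-uni} gives $K_i^Q\cz$, which by (a) transfers to $K_i^{\UOptZ}\cz$, contradicting that rule $1$ has not fired. If the decision is $1$, Lemma~\ref{lem:necessary}(b) combined with (b) gives $K_i^Q$(no active process has $K_j\cz$) as a necessary condition; a re-run of the hidden-path argument of Uniform Agreement inside $R_Q$ shows this is equivalent to $K_i^Q\notnz$, and by (a) this transfers to $K_i^{\UOptZ}\notnz$, which by Lemma~\ref{lem:rev} is precisely the condition of rule $2$, contradicting that rule $2$ has not fired.

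The principal technical obstacle is the knowledge transfer for the compound fact $\notnz$ in the decide-$1$ case: the FIP-view-determines-$Q$-view observation directly handles $K_j$ on adversary-level content, but to bridge an arbitrary $Q$'s necessary condition for decide $1$ back to rule $2$ of $\UOptZ$ one must reuse the hidden-path analysis of Uniform Agreement, now to prove the equivalence between $K_i\notnz$ and $K_i$ ``no process ever decides $0$'' within each system before transferring. The Uniform-Agreement step itself is the other delicate piece, as the witness chain from $p$ to $c$ has to be threaded through the ``revealed time $k$'' window visible to $j$ in a way that forces a crashing-and-also-sending contradiction on a single $q_{h^\ast}$.
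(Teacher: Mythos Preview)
Your correctness and timing arguments are essentially right in spirit, though the $f+1$ bound when $f\ge\tee-1$ needs the observation that an \emph{undecided} correct process at time $f+1$ necessarily has $\knownf{i,f+1}=f$; this follows from the hidden-path count (each of the $m$ unrevealed nodes along a hidden path witnesses a distinct failure known to $i$), not from a bare ``the sender set is stable'' claim.

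The unbeatability argument, however, has a real gap at the step you flag as ``the uniform analog of Lemma~\ref{lem:decide-when-0} (same proof, using Lemma~\ref{lem:correct-uni} in place of the $K_i\exZ$ necessity).'' That proof does \emph{not} carry over. In Lemma~\ref{lem:decide-when-0} the induction works because when $K_i\exZ$ first holds at~$m$, some $j$ with $K_j\exZ$ at $m-1$ sent to~$i$, and by the inductive hypothesis $j$ has already \emph{decided}~$0$; Agreement then blocks~$i$ from deciding~$1$. In the uniform setting, when $K_i\cz$ first holds at~$m$, the witnesses supplied by Lemma~\ref{lem:u-know} are processes~$j$ with $K_j\exZ$ at $m-1$, not $K_j\cz$; there is no earlier node to which the inductive hypothesis applies, and nobody need have decided~$0$ yet. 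So nothing in your argument excludes a $Q\dom\UPz$ that decides~$1$ at such a node. The paper makes exactly this point (``the possibility of dominating $\UOptZ$ by switching $0$ decisions to $1$ decisions needs to be explicitly rejected'') and replaces your one-line analog by two substantial lemmas (Lemmas~\ref{decide-when-0-first-round} and~\ref{decide-when-0}) that run a delicate double induction, constructing chains of indistinguishable runs while tracking $\knownf{i,m}$ against~$\tee$, to force a $0$-decision in~$Q$.

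Your decide-$1$ side inherits the same gap. Even granting your claim~(b), the necessary condition you extract for deciding~$1$ in~$Q$ is ``no active $j$ has (had) $K_j\cz$,'' which is strictly \emph{weaker} than $\notnz$ (``no active $j$ has $K_j\exZ$''): along a hidden path one can deliver $K_{j_m}\exZ$ to an active $j_m$ without giving it $K_{j_m}\cz$, so the claimed equivalence fails and you cannot transfer back to rule~2 of $\UOptZ$. The paper's route is different: it shows directly (Lemmas~\ref{no-earlier-k0} and~\ref{no-earlier-k1}) that whenever $\UOptZ$ is undecided at $\node{i,m}$---either $K_i\exZ\wedge\neg K_i\cz$, or $\neg K_i\exZ$ with a hidden path---one can build a run indistinguishable to~$i$ in which, via Lemmas~\ref{decide-when-0-first-round} and~\ref{decide-when-0}, some process decides~$0$ in~$Q$, so Uniform Agreement prevents~$i$ from deciding at all in~$Q$ at that node.
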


Hidden paths again play a central role. Indeed,
as in the construction of $\OptZ$ from $\Pz$, the construction of $\UOptZ$ from $\UPz$ involves some decisions on $1$ being moved earlier in time, by means of the last condition, checking the absence of a hidden path. (Decisions on $0$ cannot be moved any earlier, as they are taken as soon as the necessary condition for deciding $0$ holds.)
Observe that the need to obtain $K_i\cv$ rather than $K_i\exv$ concisely captures the essential distinction between uniform consensus and nonuniform consensus. The fact that the same condition --- the existence of a hidden path --- keeps a process~$i$  from knowing that no active~$j$ can know $K_j\cv$, as well as keeping~$i$ from knowing that no~$j$ knows $K_j\exv$, explains why the bounds for both problems, and their typical solutions,  are similar. 

Proving the unbeatability of $\UOptZ$ is  more  challenging than proving it for $\OptZ$. Intuitively, this is because 
gaining that an initial value of $0$ that is known by a nonfaulty process does not imply that some process has already decided on $0$.
As a result, the possibility of dominating $\UOptZ$ by switching~0 decisions to~1 decisions needs to be explicitly rejected. This is done by employing  
reachability arguments essentially establishing the existence of the continual common knowledge conditions of~\cite{HalMoWa2001}.

The fastest early-stopping protocol for uniform consensus in the literature, opt-EDAUC of~\cite{CBS-uni}
(a similar algorithm is in~\cite{Dutta-uni}), also stops in $\min(f+2,\tee+1)$ rounds at the latest. Similarly to \cref{lem:beats}, 
not only does 
$\UOptZ$ strictly dominate opt-EDAUC, but furthermore, there are adversaries against which $\UOptZ$ decides in~1 round, while opt-EDAUC decides in $\tee+1$ rounds:

\begin{lemma}\label{lem:ubeats}
If $2\le \tee\le n-2$, then $\UOptZ$ strictly dominates the opt-EDAUC protocol of \cite{CBS-uni}. Moreover, there exists an adversary
for which $\decideiO$ is performed after~1 round in $\UOptZ$, and after $\tee+1$ rounds in opt-EDAUC. 
\end{lemma}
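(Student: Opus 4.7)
The plan is to follow the template of \cref{lem:beats}, replacing $\OptZ$ by $\UOptZ$ and $\hmwopt$ by opt-EDAUC, while invoking the uniform necessary condition $K_i\cz$ of \cref{lem:correct-uni} in place of $K_i\exists 0$. The statement decomposes into two parts: a pointwise dominance claim, and the construction of a separating adversary.

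For the dominance, I would show that if opt-EDAUC decides at $\node{i,m}$ in $r=\text{opt-EDAUC}[\alpha]$, then $\UOptZ$ has decided by $\node{i,m}$ in $r'=\UOptZ[\alpha]$. If the decision is~$0$, then \cref{lem:correct-uni} forces $K_i\cz$ at $\node{i,m}$; since \cref{lem:u-know} characterizes $K_i\cz$ purely in terms of $\Ga(i,m)$, the same holds in $r'$, so $\UOptZ$'s first clause has fired no later than~$m$. If the decision is~$1$, I would compare opt-EDAUC's stopping rule with $\UOptZ$'s second clause: opt-EDAUC's rule---whether sender-set repetition (as in the paper's analysis of $\hmwopt$) or observed-failure saturation at $\tee$---entails that some time $k\le m$ is revealed to $\node{i,m}$, matching the second conjunct of $\UOptZ$'s rule; and the first conjunct $\neg K_i\exists 0$ follows from uniform agreement, since otherwise one could exhibit an indistinguishable run in which the relay chain realizing $K_i\exists 0$ contains a correct process, which by \cref{lem:u-know}(a) would eventually satisfy $K\cz$ and decide $0$, contradicting opt-EDAUC's $1$-decision.

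For the separating adversary, fix a non-faulty process~$i$ (possible since $\tee\le n-2$), choose $\tee$ distinct processes $p_1,\dots,p_\tee$ different from~$i$, and set $\vec v=(1,\dots,1)$. For each~$k$, let $p_k$ deliver all of its round-$k$ messages successfully but send nothing in any later round, so $p_k$ crashes in round~$k$. In $\UOptZ[\alpha]$ at time~$1$, process~$i$ has received a round-$1$ message from every process (including $p_1$, before it crashed), so every $\node{j,0}$ lies in $r_i(1)$; hence time~$0$ is revealed to $\node{i,1}$, $\neg K_i\exists 0$ holds trivially, and $\UOptZ$ decides~$1$ at time~$1$. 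In $\text{opt-EDAUC}[\alpha]$, $i$ first detects $p_k$'s absence in round $k+1$, so the sender set observed by~$i$ strictly shrinks through round $\tee+1$; opt-EDAUC can only safely stop once $\tee$ failures have been witnessed, which first occurs at round $\tee+1$, yielding the claimed $\tee$-round gap.

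The principal technical obstacle is the $1$-decision case of the dominance argument, specifically establishing $\neg K_i\exists 0$ at $\node{i,m}$. Unlike in \cref{lem:beats}, where $\neg K_i\exists 0$ falls out almost directly from opt-EDAUC's sender-set style stopping rule, here one must additionally block the possibility that $i$ has learned of an initial~$0$ via a chain whose intermediaries all happen to be faulty in the actual run. I plan to rule this out by carefully constructing an indistinguishable run in which some intermediary on the chain is correct, exploiting the $n-\tee\ge 2$ lower bound on the number of correct processes together with the one-round knowledge promotion of \cref{lem:u-know}(a), thereby forcing a $0$-decision that contradicts opt-EDAUC's $1$-decision at $\node{i,m}$.
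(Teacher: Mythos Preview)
Your separating adversary is correct and in fact simpler than the paper's; it also produces a decision on~$1$, matching the statement as written (the paper's own construction uses all-zero initial values and yields a decision on~$0$). The gap is in the dominance argument, specifically the $1$-decision case.

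Your justification that $\neg K_i\exZ$ must hold whenever opt-EDAUC decides~$1$ does not work. You propose to pass to an indistinguishable run in which a process on the $0$-chain is correct, invoke \cref{lem:u-know}(a) to obtain $K\cz$ for that process, and then conclude it ``decides~$0$,'' contradicting~$i$'s $1$-decision via \UniAg. But ``$K_j\cz$ implies $j$ decides~$0$'' is $\UOptZ$'s rule, not a consequence of the uniform-consensus specification; indeed $\neg K_i\exZ$ is \emph{not} a necessary condition for deciding~$1$ in uniform consensus (the symmetric $1$-preferring protocol happily decides~$1$ while $K_i\exZ$ holds). So no argument from correctness alone can yield $\neg K_i\exZ$. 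If this claim is even true for opt-EDAUC, it must come from opt-EDAUC's explicit value rule, which you do not invoke.

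The paper's approach avoids this by arguing on decision \emph{times} only, without splitting on the decided value, and by exploiting a protocol-specific timing fact you have miscalibrated: opt-EDAUC decides \emph{one round after} the sender set first repeats, not at the repeat itself as in~$\hmwopt$. If the repeat is at~$\node{i,m}$ (so opt-EDAUC decides at $m{+}1$), then time $m{-}1$ is revealed to~$\node{i,m}$; either $\neg K_i\exZ$ holds there and $\UOptZ$'s second clause fires already at~$m$, or $K_i\exZ$ holds and \cref{lem:u-know}(a) gives $K_i\cz$ at $m{+}1$, firing $\UOptZ$'s first clause exactly when opt-EDAUC decides. That one-round delay is precisely what promotes $K_i\exZ$ to $K_i\cz$; by modeling opt-EDAUC's stopping on $\hmwopt$'s, you lose this slack and are left needing a condition that, at the earlier time, need not hold.
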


\section{Discussion}
\label{sec:discussion}

It is possible to consider variations on the notion of unbeatability.
One could, for example, compare runs in terms of the time at which the last correct process decides.
We call the corresponding notion \defemph{last-decider unbeatability}.%
\footnote{This notion was suggested to us by Michael Schapira; we thank him for the insight.}  
This neither implies, nor is implied by, the notion of unbeatability studied so far in this paper. 
None of the consensus protocols in the literature is last-decider unbeatable.
In fact, all of our protocols are also last-decider unbeatable:

\begin{theorem}
\label{thm:last-decider}
The protocols $\OptZ$ and $\OptMaj$ are also last-decider unbeatable for consensus, 
while~$\UOptZ$ is
last-decider unbeatable for uniform consensus.
\end{theorem}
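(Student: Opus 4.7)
The plan is to piggyback on the pointwise-unbeatability arguments of \cref{thm:optz}, \cref{thm:optmaj}, and \cref{thm:u-opt}, extracting from them the following stronger individual-optimality statement: for each protocol $P\in\{\OptZ,\OptMaj,\UOptZ\}$, every adversary~$\alpha$, and every process~$i$ correct in~$\alpha$, the time at which~$i$ decides in $P[\alpha]$ is the \emph{minimum} time at which~$i$ can decide in $Q[\alpha]$ over any valid (uniform) consensus protocol~$Q$. Given this statement, last-decider unbeatability is immediate: if some~$Q$ strictly last-decider dominates~$P$ at some~$\alpha$, then every correct process (including a witness~$i$ of the last-decider time in $P[\alpha]$) decides in $Q[\alpha]$ strictly before~$i$ does in $P[\alpha]$, contradicting the minimality of~$i$'s decision time there.

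To establish the individual-optimality statement, I would fix $P$, $\alpha$, and a correct~$i$ that decides at time~$m^*$ in $P[\alpha]$, and assume for contradiction that a consensus protocol~$Q$ has~$i$ decide at some $m_0<m^*$ in $Q[\alpha]$. Because~$P$ is a \fip\ and~$i$ behaves correctly through time~$m_0$, the local state $r_i(m_0)$ coincides in $P[\alpha]$ and $Q[\alpha]$. By maximality of~$m^*$, process~$i$ is undecided at $\node{i,m_0}$ in $P[\alpha]$, so none of $P$'s decision preconditions fire there. Applying \cref{lem:rev} and the hidden-path analysis of \cref{hidden-paths}, the absence of a firing precondition translates into the existence of a hidden path with respect to $\node{i,m_0}$ (together with $\neg K_i\cz$ in the case of~$\UOptZ$). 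This is precisely the structural obstruction exploited in the pointwise unbeatability proofs of the three theorems.

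I would then split on the value~$v$ that~$i$ decides in $Q[\alpha]$ at~$m_0$. If~$v$ is the value ``preferred'' by~$P$ at $\node{i,m_0}$ (namely~$0$ for $\OptZ$ and~$\UOptZ$, or the seen-majority for $\OptMaj$), then the necessary-condition lemmas \cref{lem:know-exists} and \cref{lem:correct-uni}, combined with the failure of $K_i\exZ$ (resp.\ $K_i\cz$) at $\node{i,m_0}$, immediately produce a contradiction. If~$v$ is the opposite value, I would use the hidden path to exhibit a run~$r'$ that is indistinguishable from $P[\alpha]$ at $\node{i,m_0}$ and in which some process~$j$ still active at time~$m_0$ satisfies the opposite-side knowledge condition ($K_j\exZ$, or knowledge of an opposite majority). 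Extending~$r'$ so that~$j$ is correct thereafter---feasible because~$j$ has not yet crashed by $m_0$, so the failure budget is not exceeded---and then invoking the reachability/indistinguishability chain already built in the corresponding pointwise unbeatability proof, I would force~$j$ to decide differently from~$i$ in~$Q$, contradicting \Agreement\ (or \UniAg).

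The main obstacle is the uniform case. Under uniform consensus the bare fact $K_j\exZ$ does \emph{not} entail that a correct process has already committed to~$0$, so the reachability step required to force~$j$ to decide~$0$ must traverse conditions essentially equivalent to the continual-common-knowledge machinery of~\cite{HalMoWa2001}, exactly as in the proof of \cref{thm:u-opt}. Since that apparatus already manipulates correct processes in the adversaries it constructs, its adaptation to the last-decider setting goes through without modification, completing the argument.
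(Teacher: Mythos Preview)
Your central claim --- that for each $P\in\{\OptZ,\OptMaj,\UOptZ\}$, every adversary~$\alpha$, and every correct process~$i$, the decision time of~$i$ in $P[\alpha]$ is the \emph{minimum} over all consensus protocols~$Q$ --- is false. This is exactly the all-case optimality that Moses and Tuttle proved impossible for eventual consensus (as the Introduction recalls). A concrete counterexample: take $P=\OptZ$, $\alpha$ with all initial values~$1$ and no failures. In $\OptZ[\alpha]$ every process decides~$1$ at time~$1$ (time~$0$ is not revealed to~$\node{i,0}$, but is revealed to~$\node{i,1}$). In $\OptO[\alpha]$ every process decides~$1$ at time~$0$, since $K_i\exists 1$ holds immediately. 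So $\OptZ$ is not individually optimal at this~$\alpha$.

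The reason your proposed derivation of the false statement fails is that the reachability/indistinguishability chains in the proofs of \cref{thm:optz,thm:optmaj,thm:u-opt} (e.g.\ \cref{lem:decide-when-0}, \cref{maj-not-decide}, \cref{decide-when-0}) all use the hypothesis that $Q$ dominates the base protocol ($\Pz$, $\OptMaj$, or $\UPz$) to \emph{force} the auxiliary process~$j$ to decide in~$Q$ by a certain time. You quantify over arbitrary~$Q$, so nothing forces~$j$ to decide anything in~$Q$ along your hidden path, and no \Agreement\ violation arises. In the counterexample above, your step~7 would try to make some~$j$ with $K_j\exZ$ decide~$0$ in $Q=\OptO$; but $\OptO$ never makes~$j$ decide~$0$ merely from $K_j\exZ$.

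The paper's route is different: it does \emph{not} attempt to prove individual optimality. Instead it proves a replacement lemma (\cref{last-dom-sufficient}) showing that if $Q$ merely last-decider dominates the base protocol $\Pz$ (resp.\ $\OptMaj$, $\UPz$), then the same decision-forcing property holds in~$Q$ as under ordinary domination: whenever $K_i\exZ$ (resp.\ $K_i(\Maj=\veee)$, $K_i\cz$) holds at~$m$, process~$i$ must already have decided by~$m$ in~$Q$. The idea is that~$i$ considers it possible that \emph{all} active processes share this knowledge, so in that indistinguishable run the last decision in the base protocol occurs by~$m$, and last-decider domination then forces~$i$ to decide by~$m$. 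With this lemma in hand, the existing proofs of \cref{thm:optz,thm:optmaj,thm:u-opt} apply verbatim, since ordinary domination was used there only to obtain exactly this decision-forcing property.
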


We note that \cref{lem:beats,lem:ubeats} show that our protocols beat the previously-known best ones by a large margin w.r.t.\ last-decider unbeatability as well.

Unbeatability is a natural optimality criterion for distributed protocols. 
It formalizes the intuition that a given protocol cannot be strictly improved upon, which is significantly stronger than saying that it is worst-case optimal, or even early stopping. All of the protocols that we have presented have a very concise and intuitive description, and are efficiently implementable;
thus, unbeatability is attainable at a modest price. 
Crucially, our unbeatable protocols can decide much faster than previously known solutions to the same problems.

\section*{Acknowledgements}
Armando Casta\~{n}eda was supported in part by an Aly Kaufman Fellowship at the Technion.
Yannai Gonczarowski was supported in part by ISF grant 230/10, by the Google Inter-university center
for Electronic Markets and Auctions, by
the European Research Council under the European Community's Seventh Framework
Programme (FP7/2007-2013) / ERC grant agreement no.\ [249159] and
by an Adams Fellowship of the Israeli Academy of Sciences and Humanities.
Yoram Moses is the Israel Pollak Academic chair at the Technion; his work was supported in part by ISF grant 1520/11.

\bibliographystyle{abbrv}
\bibliography{z}

\appendix

\section{Proofs}

\subsection{Consensus}
\label{sec-proofs-consensus}

\begin{proof}[Proof of \cref{lem:necessary}]
%Statement:
%Consensus implies the following necessary conditions for ${\decidei(\valv)}$ in the crash failure model: 
%\begin{enumerate}
%\item[(a)] ``at least one processes had initial value~$\valv$'' (we denote this by~$\boldsymbol{\existsv}$), and 
%\item[(b)] ``no currently active process has decided, or is currently deciding, $1-\valv$'' (we denote this by $\noactnv$).
%\end{enumerate}

This proof uses notation introduced in \cref{sec:model}.
Let $P$ be a consensus protocol and let $R_P=R(P,\gammacr)$. Let $\valv\in\Vals$, let $r\in R_P$ and let $\node{i,m}$ be a node s.t.\ $i$ decides on $\valv$ at time $m$ in $r$.

We commence by proving (a).
Assume for contradiction that no process has initial value $\valv$ in $r$.
By definition of $\gammacr$,
there exists a run $r'$ of $P$, s.t.~~\emph{1)} $r'_i(m)\!=\!r_i(m)$,\ \ \emph{2)} $i$ does not fail in $r'$, and\ \ \emph{3)} The initial values in $r'$ are the same as in $r$.
As $r'_i(m)=r_i(m)$, we have that $i$ decides on $\valv$ at time $m$ in $r'$ as well. As the initial values in $r'$ are the same as in $r$, we have that no process has initial value $\valv$ in $r'$. As $i$ does not fail in $r'$, we therefore have that \Validity\ does not hold regarding the decision of $i$ in $r'$ --- a contradiction.

We move on to proving (b).
Assume for contradiction that some process $j$ decides $\bar\valv$ at some time $m'\le m$ in $r$, and that $j$ is active at $m$ in $r$.
Once again by definition of $\gammacr$,
there exists a run $r'$ of $P$, s.t.\ \ \emph{1)} $r'_i(m)\!=\!r_i(m)$,\ \ \emph{2)} $r'_j(m')\!=\!r_j(m')$, and\ \ \emph{3)} neither $i$ nor $j$ fail in $r'$.
As $r'_i(m)=r_i(m)$, we have that $i$ decides on $\valv$ at time $m$ in $r'$ as well; as $r'_j(m')=r_j(m')$, we have that $j$ decide on $\bar\valv$ at time $m'$ in $r'$ as well. As neither $i$ not $j$ fail in $r'$, we therefore have that \Agreement\ does not hold in $r'$ --- a contradiction.
\end{proof}

\begin{proof}[Proof of \cref{lem:know-exists}]
%Statement: Let $P$ be a consensus protocol for~$\gammacr$ and let $R_P=R(P,\gammacr)$. Then both $K_i\existsv$ and $K_i\noactnv$ are necessary conditions for $\decidei(\valv)$ in~$R_P$. 

Directly from \cref{lem:necessary} and \cref{thm:knowprec}.
\end{proof}

While \cref{lem:knowing0} is given and proved in \cite{DM}, for completeness we reprove it here using the notation and machinery of this paper; this proof is assisted by \cref{chain,0-chain}.

\begin{definition}\label{chain}
Let $P$ be a protocol in~$\gammacr$ and let $r\in R_P=R(P,\gammacr)$. Let $\valv\in\Vals$ and let $\node{i,m}$ be a node.
We say that \defemph{there is a $\valv$-chain for $\node{i,m}$ in the run~$r$} if, 
for some $d\le m$, there is a sequence $j_0,j_1,\ldots, j_d=i$ of distinct processes, such that 
$\val_{j_0}=\valv$ and for all $1\le k\le d$, the process~$j_k$ receives a message from~$j_{k-1}$ at time~$k$ in~$r$.
\end{definition}

\begin{lemma}\label{0-chain}
Let $P$ be a \fip\ in~$\gammacr$ and let $r\in R_P=R(P,\gammacr)$. Then for every processes $i$
and time~$m\ge 0$, it is the case that
$(R_P,r,m)\sat K_i\exZ$ iff there is a $0$-chain for $\node{i,m}$ in~$r$.
\end{lemma}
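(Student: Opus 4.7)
The plan is to prove the two directions of the equivalence separately, starting with the easier direction.

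For the ($\Leftarrow$) direction, assume there is a $0$-chain $j_0, j_1, \ldots, j_d = i$ for $\node{i,m}$ in $r$ with $d \leq m$ and $v_{j_0} = 0$. I would show by induction on $k \in \{0, 1, \ldots, d\}$ that $\node{j_0, 0}$ is a node of $\Ga_\alpha(j_k, k)$; the base case $k=0$ is trivial, and the inductive step follows from the recursive definition of $\Ga_\alpha(j_k, k)$ as incorporating $\Ga_\alpha(j_{k-1}, k-1)$ whenever $j_k$ receives a message from $j_{k-1}$ in round $k$. Since views are monotone in the second coordinate (indeed $\Ga_\alpha(i, k)$ is contained in $\Ga_\alpha(i, m)$ for any $k \leq m$), it follows that $\node{j_0, 0}$, bearing the label $0$, is a node of $r_i(m) = \Ga_\alpha(i, m)$. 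Then for any $r' \in R_P$ with $r'_i(m) = r_i(m)$, the initial value of $j_0$ in $r'$ must also be $0$, so $(R_P, r', m) \sat \exZ$; by Definition~\ref{def:know}, $(R_P, r, m) \sat K_i \exZ$.

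For the ($\Rightarrow$) direction, I would prove the contrapositive. Suppose no $0$-chain for $\node{i,m}$ exists in $r = P[\alpha]$ with $\alpha = (\vec{v}, \FP)$. Define a new adversary $\alpha' = (\vec{v}', \FP)$ (same failure pattern) by setting $v'_j = 1$ whenever $\node{j, 0} \notin r_i(m)$, and $v'_j = v_j$ otherwise, and let $r' = P[\alpha']$. Note that $\alpha' \in \gammacr$ since initial values remain in $\Vals$ and $\FP \in \Crash(\tee)$. By the no-chain hypothesis, every process $j$ with $v_j = 0$ satisfies $\node{j, 0} \notin r_i(m)$ (otherwise the length-$0$ ``chain'' ending at $j_0 = j$, extended via message chain from $\node{j,0}$ to $\node{i,m}$, would constitute a $0$-chain), so all time-$0$ labels of $0$ get flipped in $\vec{v}'$, giving $(R_P, r', m) \not\sat \exZ$.

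It remains to verify that $r'_i(m) = r_i(m)$, whence $\neg K_i \exZ$ at $(r,m)$ follows immediately. The key observation is that the underlying graph structure of $\Ga(i,m)$ depends only on $\FP$, not on initial values---the recursive definition of $\Ga_\alpha(i,m)$ selects nodes and edges using the edges of $\FP$ only, and attaches labels only at time-$0$ nodes. Since $\FP' = \FP$, the set of nodes and edges of $\Ga_{\alpha'}(i,m)$ equals that of $\Ga_\alpha(i,m)$, and by construction $\vec{v}'$ agrees with $\vec{v}$ on every $j$ with $\node{j, 0}$ in this view, so the labels match too. Because $P$ is a fip whose decisions depend only on the view, $\dec(i, m)$ matches in the two runs as well, so the full local states coincide.

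The main obstacle is the bookkeeping in this last paragraph: one must carefully separate the \emph{topology} of the view (determined purely by $\FP$) from the \emph{labels} (determined by $\vec{v}$ restricted to the seen time-$0$ nodes), and verify that modifying only unseen initial values preserves the view at every time $k \leq m$, not merely at $m$---this is needed to justify that intermediate decisions, which form part of the local state in the fip, also coincide. A minor subtlety is the degenerate case where $\exZ$ fails already in $r$, in which case one may simply take $r' = r$.
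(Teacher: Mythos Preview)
Your $\Leftarrow$ direction is correct and essentially matches the paper's.

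For the $\Rightarrow$ direction you take a different route---an indistinguishability construction rather than the paper's induction on~$m$---but there is a genuine gap at the step where you assert that the absence of a $0$-chain forces every $\node{j,0}$ with $v_j=0$ to be unseen by $\node{i,m}$. Your parenthetical justification conflates a Lamport message chain with a $0$-chain in the sense of Definition~\ref{chain}: the former, which witnesses $\node{j,0}\in r_i(m)$, is a sequence $\node{p_0,0},\ldots,\node{p_m,m}$ that may \emph{stay put} ($p_k=p_{k-1}$, since $i\in J$ always in the recursive definition of $\Ga$) and may \emph{repeat} processes; the latter requires \emph{distinct} processes with a receipt in each consecutive round $1,\ldots,d$. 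In the crash model one can indeed convert the former into the latter---using that a process which successfully sends in round~$k$ must have sent correctly to everyone in every earlier round---but establishing this is precisely the work done by the paper's induction on~$m$, which shows that if $K_i\exZ$ first holds at time~$m$ then the chain obtained (via the inductive hypothesis) for the sender~$j$ at time~$m-1$ has length exactly $m-1$ and does not already contain~$i$. The implication is genuinely model-specific: in a send-omission model with two processes, if process~$1$ (with value~$0$) omits its round-$1$ message to~$2$ but delivers in round~$2$, then $\node{1,0}\in r_2(2)$ yet no $0$-chain for $\node{2,2}$ exists. So this step is not the one-liner you treat it as.

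Once that implication is supplied, your indistinguishability construction is sound: your separation of the topology of $\Ga_\alpha(i,m)$ (determined solely by~$\FP$) from its labels (determined by~$\vec v$ restricted to seen time-$0$ nodes) correctly yields $r'_i(m)=r_i(m)$, and the rest goes through.
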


\begin{proof}
For the first direction, assume that there is a $0$-chain $j_0,\ldots,j_d=i$ for $\node{i,m}$ in $r$. It is easy to show by induction that $K_{j_k}\exists0$ at $k$ in $r$ for every $k$; therefore, $K_i\exists0$ at $d$ in $r$, and since $P$ is a fip, $K_i\exists0$ at $m$ in $r$, as required.
We prove the second direction for all $i$ by induction on $m$.

Base ($m=0$): Since process $i$ at time $0$ knows no initial value but its own, we have that $\val_i=0$ and so $i$ (with $d=0$) is a $0$-chain as required.

Inductive step ($m>0$): In a fip, $K_i\exists0$ at $m$ implies that either $K_i\exists0$ at $m-1$ or $K_j\exists0$ at $m-1$ for some $j \neq i$ that successfully sends a message at time $m-1$ to $j$. If $K_i\exists0$ at $m-1$, then by the induction hypothesis there exists a $0$-chain for $\node{i,m-1}$ in $r$, and by definition this is also a $0$-chain for $\node{i,m}$ in $r$. It remains to consider the case in which $K_i\exists0$ does not hold at $m-1$; therefore, $K_j\exists0$ at $m-1$ for some $j$ that successfully sends a message at time $m-1$ to $j$. By the induction hypothesis, there exists a $0$-chain $j_0,\ldots,j_d=j$ for $\node{j,m-1}$. We first claim that $i$ does not appear in that chain; indeed, if $j_{d'}=i$ for some $d'<d$, then by definition
$j_0,\ldots,j_{d'}$ would be a $0$-chain for $\node{i,m-1}$, and by the previous direction we would have $K_i\exists0$ at $m-1$ in $r$. We now claim that $d=m-1$; indeed, if $d<m-1$, then $j_0,\ldots,j_d$ would be a $0$-chain for $\node{j,d}$, and so we would have $K_j\exists0$ at $d<m-1$. As $j$ is active at all times earlier than $m-1$, we would have that $\node{j,d}$ successfully sends a message to $i$, and so $K_i\exists0$ at $d+1\le m-1$; as $P$ is a fip, we would therefore have that $K_i\exists0$ at $m-1$ --- a contradiction. As $i$ does not appear in $j_0,\ldots,j_d$, and as $d=m-1$, by definition $j_0,\ldots,d_j,i$ is a $0$-chain for $i$, as required.
\end{proof}

\begin{proof}[Proof of \cref{lem:knowing0}]
%Statement:
%Let $P$ be a \fip\ in~$\gammacr$ and let $r\in R_P=R(P,\gammacr)$. For all  processes $i,j$,
%~$(R_P,r,\tee+1)\sat K_i\exv$ ~iff~ \mbox{$(R_P,r,\tee+1)\sat K_j\exv$}.

Assume that $(R_P,r,\tee+1)\sat K_i\exv$. By \cref{0-chain}, there exists a $0$-chain $j_0,\ldots,j_d$ for $\node{i,\tee+1}$.
If $j$ appears in $j_0,\ldots,j_d$, then by \cref{0-chain} we are done; assume, therefore, that $j$ does not appear in $j_0,\ldots,j_d$.
If $d<\tee+1$, then since $i$ successfully sends all messages at times earlier than $\tee+1$, we have that $j_0,\ldots,j_d,j$ is a $0$-chain for $\node{j,\tee+1}$; therefore, by \cref{0-chain}, $K_j\exv$ at $\tee+1$, as required. Otherwise, $d=\tee+1$, and so, as $j_0,\ldots,j_{d-1}$ are $\tee+1$ distinct processes, there exists $0\le d'\le d-1$ s.t.\ $j_{d'}$ is nonfaulty throughout $r$. Therefore, $j_0,\ldots,j_{d'},j$ is a $0$-chain for $\node{j,\tee+1}$, as required.
\end{proof}

\begin{proof}[Proof of Lemma~\ref{lem:decide-when-0}]
%Statement: If $Q\dom\Pz$ is a consensus protocol, then~$\decideiZ$  
%is performed  in~$Q$ exactly when $K_i\exists{0}$ first holds. 

Assume that $Q\dom\Pz$ solves consensus; w.l.o.g., $Q$ is a fip as well.
We prove the claim for all processes~$i$ and adversaries~$\alpha$, by induction on the time~$m$ at which 
$K_i\exists 0$ first holds in~$Q[\alpha]$ (and, equally, in $\Pz[\alpha]$). 

Base ($m=0$):\quad
As $i$ decides $0$ at time~$0$ in~$\Pz[\alpha]$, by \cref{lem:know-exists} we have $K_i\exists0$ at time $0$ in $\Pz[\alpha]$ (and so also in $Q[\alpha]$).
Since process~$i$ at time~$0$ knows no initial value but its own, it follows that $i$ is assigned an initial value of~$0$ by $\alpha$. Hence, $K_i\exists{1}$ does \emph{not} hold at~$0$.
By \cref{lem:know-exists}, $i$ therefore does not decide~1 at time~$0$ in~$Q[\alpha]$. Since $i$ decides at time~0 in~$\Pz[\alpha]$, it must decide at time~$0$ in~$Q[\alpha]$ as well, and so decides~0, as required.

Inductive step ($m>0$): Assume that the claim holds for all times $<m$.
Recall that $m$ is the first time at which
$K_i\exists{0}$ holds.
In a fip, this can only happen if 
$K_i\exists{0}$ does not hold at time $m' < m$ and $i$ 
receives at time $m$ a message with a~0 from some process~$j$ that is active at time~$m-1$.
Thus, $K_j \exists 0$ holds at time $m-1$, and by the induction hypothesis,
$j$ decides $0$ when $K_j\exists{0}$ first holds in $Q[\alpha]$ --- denote this time by $m'$; as $K_j \exists 0$ holds at time $m-1$, we have $m'\le m-1$.
Observe that in~$\gammacr$, if $i$ receives a message from~$j$ in round~$m$, then~$i$ cannot know that~$j$ is faulty at time~$m$; more precisely,
denoting by $\beta$ the adversary that never crashes~$i$ nor $j$ at all, and that otherwise agrees with $\alpha$ (this is a legal adversary, as is specifies no more than $t$ crash failures),
we have in the run $r'=Q[\beta]$ that\ \ \emph{1)} $r'_i(m)\!=\!r_i(m)$,\ \ \emph{2)} $r'_j(m')\!=\!r_j(m')$, and\ \ \emph{3)} neither $i$ nor $j$ fail.
Since $Q$ satisfies \Agreement, $i$ cannot decide~1 during $Q[\beta]$, and therefore cannot decide~1 at or before time~$m$ during $Q[\alpha]$.
Moreover,
by \cref{lem:know-exists}, $K_i\exists{0}$ is a precondition for process~$i$ deciding~0, and so $i$ cannot decide~$0$ before time $m$ during $Q[\alpha]$. 
Since~$Q$ dominates~$\Pz$, we have that~$i$ must decide by time~$m$ in~$Q[\alpha]$, and therefore it decides~0 at~$m$ in~$Q[\alpha]$.
\end{proof}

\begin{proof}[Proof of \cref{lem:decide-when-notnz}]
%Statement: Let $P$ be a fip, in which $\decideiZ$  
%is performed  in~$P$ exactly when $K_i\exists{0}$ first holds, and let $R_P=R(P,\gammacr)$. 
%Then $(R_P,r,m)\sat K_i\noactnO$ ~iff~ $(R_P,r,m)\sat K_i\notnz$ ~for all $r\in R_P$ and $m\ge 0$.

$\Longrightarrow$: 
Assume that $(R_P,r,m)\not\sat K_i\notnz$. Therefore, by definition of $K_i$, there exists a run $r'\in R_P$ s.t.\ \emph{1)} $r'_i(m)\!=\!r_i(m)$, and\ \ \emph{2)} $(R_P,r',m)\not\sat \notnz$.
As $(R_P,r',m)\not\sat \notnz$, there exists a process $j$ s.t.\ $K_j\exists0$ holds at $m$ in $r'$ (and $j$ is active at $m$ in $r'$). By definition, $K_j\exists0$ first holds at or before time $m$ in $r'$, and so $j$ decides~$0$ before or at time $m$ in $r'$; therefore, $(R_P,r',m)\not\sat \noactnO$. As $r'_i(m)=r_i(m)$, we therefore have $(R_P,r,m)\not\sat K_i\noactnO$, as required.

$\Longleftarrow$: 
We will show that $(R_P,r,m)\sat\notnz$ implies $(R_P,r,m)\sat\noactnO$; by definition of knowledge, it will then follow that $(R_P,r,m)\sat K_i\notnz$ implies $(R_P,r,m)\sat K_i\noactnO$.
Assume, therefore, that $(R_P,r,m)\sat\notnz$, and let $j$ be a process that is active at time $m$ in $r$. As $\notnz$ at $m$ in $r$, we have that $K_j\exists0$ does not hold at $m$ in $r$.
As $P$ is a fip, we have that neither does $K_j\exists0$ hold at any time prior to $m$ in $r$. By definition, therefore $j$ does not decide $0$ before or at $m$ in $r$, as required.
\end{proof}

\begin{lemma}\label{optz-t1}
Let $P$ be a \fip\ in~$\gammacr$ and let $r\in R_P=R(P,\gammacr)$. Let $i$ be a process.
If $(R_P,r,\tee+1)\not\sat K_i\exists0$, then $(R_P,r,\tee+1)\sat K_i\notnz$.
\end{lemma}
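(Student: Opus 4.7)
The plan is to unfold $K_i\notnz$ at $\node{i,\tee+1}$ into a universally-quantified statement and apply \cref{lem:knowing0} inside each indistinguishable run. By \cref{def:know}, $(R_P,r,\tee+1)\sat K_i\notnz$ iff $(R_P,r',\tee+1)\sat\notnz$ for every $r'\in R_P$ with $r'_i(\tee+1)=r_i(\tee+1)$. Fix such an $r'$; note that since $r_i(\tee+1)\ne\frownie$, process $i$ is active at time $\tee+1$ in both $r$ and $r'$.

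The first step is to transfer the hypothesis $\neg K_i\exists 0$ from $r$ to $r'$. Since $P$ is a full-information protocol, \cref{0-chain} implies that whether $K_i\exists 0$ holds at time $m$ in a run is determined by the graph $\Ga(i,m)=r_i(m)$ (the existence of a $0$-chain for $\node{i,m}$ can be read off $r_i(m)$). Because $r'_i(\tee+1)=r_i(\tee+1)$, the hypothesis $(R_P,r,\tee+1)\not\sat K_i\exists 0$ gives $(R_P,r',\tee+1)\not\sat K_i\exists 0$.

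The second step is to rule out $K_j\exists 0$ for every active $j$ in $r'$. For any process $j$ active at time $\tee+1$ in $r'$, \cref{lem:knowing0} (applied within the run $r'$) yields $(R_P,r',\tee+1)\sat K_j\exists 0$ iff $(R_P,r',\tee+1)\sat K_i\exists 0$. Since the latter fails by the previous step, so does the former. Hence $(R_P,r',\tee+1)\sat\notnz$; since $r'$ was an arbitrary run indistinguishable from $r$ at $\node{i,\tee+1}$, this gives $(R_P,r,\tee+1)\sat K_i\notnz$.

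I do not expect any real obstacle: the only subtlety is remembering that \cref{lem:knowing0} is about active processes, which is harmless here because $\notnz$ itself only quantifies over active processes. If one prefers not to invoke \cref{0-chain}, the transfer of $\neg K_i\exists 0$ from $r$ to $r'$ can be done purely from \cref{def:know}, using that $r'_i(\tee+1)=r_i(\tee+1)$ makes the sets of runs indistinguishable from $r$ and from $r'$ at this node coincide, so $K_i\exists 0$ holds at one iff at the other.
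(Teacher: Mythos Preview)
Your proof is correct and follows essentially the same argument as the paper's: both hinge on \cref{lem:knowing0} applied inside each run indistinguishable from~$r$ at $\node{i,\tee+1}$. The paper packages the same steps via epistemic-logic principles---negative introspection ($\neg K_iA\Leftrightarrow K_i\neg K_iA$) and common knowledge of the round number---whereas you unfold the semantics of $K_i$ directly; the content is identical.
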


\begin{proof}
By \cref{lem:knowing0}, we have that $\lnot K_i\exists0$ at time $\tee+1$ implies $\notnz$ at that time; by definition of knowledge, we therefore have that
$K_i(\lnot K_i\exists0\wedge m=\tee+1)$ implies $K_i\notnz$. As both the clock $m$ and the value of $\tee$ are common knowledge,
we therefore have that $K_i(\lnot K_i\exists0)$ at time $\tee+1$ implies $K_i\notnz$ at that time. Finally, by the definition of knowledge we have that $K_i(\lnot K_i\exists0)$ holds iff $\lnot K_i\exists0$ holds, and the proof is complete.
\end{proof}

\begin{theorem}\label{solve}
$\OptZ$ solves consensus in $\gammacr$.
\end{theorem}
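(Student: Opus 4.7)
The plan is to verify each of the three consensus properties --- Decision, Validity, and Agreement --- for $\OptZ$, using the knowledge-based tools already developed. For \textbf{Validity}: if all initial values are~$0$, then $K_i\exists 0$ holds at time~$0$ for every process, so the first clause of $\OptZ$ fires immediately and every correct process decides~$0$. If all initial values are~$1$, then by \cref{0-chain} no $0$-chain exists in the run, so $K_i\exists 0$ never holds and no process ever decides~$0$; combined with Decision (below), every correct process then decides~$1$. For \textbf{Decision}, I would show that every correct process~$i$ decides by time~$\tee+1$: a correct process is active at all times, and at time~$\tee+1$, either $K_i\exists 0$ holds (and the first clause fires), or $\lnot K_i\exists 0$ holds, in which case \cref{optz-t1} gives $K_i\notnz$ and the second clause fires.

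For \textbf{Agreement}, the main step, I would argue by contradiction. Suppose correct processes~$i$ and~$j$ decide~$0$ and~$1$ at times~$m_i$ and~$m_j$ respectively in $r=\OptZ[\alpha]$. The rule for $\decideiZ$ gives $K_i\exists 0$ at $(r,m_i)$, so by \cref{0-chain} there is a $0$-chain $j_0,\ldots,j_d=i$ for~$\node{i,m_i}$ with $d\le m_i$. The rule for $\decideiO$ gives $K_j\notnz$ at $(r,m_j)$, and hence $\notnz$ at $(r,m_j)$: no process that is active at time~$m_j$ knows~$\exists 0$ there. I would then split on whether $m_j\le d$ or $m_j>d$. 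In the first case, the prefix $j_0,\ldots,j_{m_j}$ is itself a $0$-chain for $\node{j_{m_j},m_j}$, so by \cref{0-chain}, $K_{j_{m_j}}\exists 0$ at~$m_j$; since $j_{m_j}$ receives a message at time~$m_j$, it is active at~$m_j$, contradicting~$\notnz$. In the second case, the same chain $j_0,\ldots,j_d=i$ is also a $0$-chain for $\node{i,m_j}$ (the length condition $d\le m_j$ still holds), so \cref{0-chain} gives $K_i\exists 0$ at~$m_j$; since $i$ is correct, it is active at~$m_j$, again contradicting~$\notnz$.

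The main obstacle is Agreement: we must produce an active witness to~$\exists 0$ at time~$m_j$, given only the existence of a $0$-chain for~$i$ terminating at the possibly-later time~$m_i$. The $0$-chain characterization of \cref{0-chain} provides this witness directly --- either an intermediate node $j_{m_j}$, which is certified active by virtue of receiving a message at its level, or $i$ itself when the chain is short enough so that it is already $i$ who knows $\exists 0$ by time~$m_j$.
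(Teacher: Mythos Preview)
Your proof is correct, but takes a somewhat different route from the paper's, particularly for \textbf{Agreement}.

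For \textbf{Validity}, the paper argues the contrapositive form: whenever~$i$ decides~$\veee$, the fact~$\exists\veee$ holds (using that~$K_i\exists 0$ implies~$\exists 0$, and that~$\lnot K_i\exists 0$ forces~$v_i=1$). You instead verify the two homogeneous-input cases directly, which is also fine and arguably closer to the letter of the specification.

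For \textbf{Agreement}, the paper's argument is more ``semantic'': if a correct process~$j$ had decided~$0$ strictly before~$i$ decides~$1$, then~$j$'s full-information message in the next round would give~$i$ knowledge of~$\exists 0$, contradicting the firing of the second clause; for times at or after~$i$'s decision, the paper uses a forward-induction showing that~$\notnz$ is preserved from one time to the next, so no later~$0$-decision is possible. Your argument instead exploits the syntactic $0$-chain characterization of \cref{0-chain}: you locate, at the exact time~$m_j$ that~$j$ decides~$1$, an active witness carrying~$K_{\cdot}\exists 0$ --- either the intermediate chain node~$j_{m_j}$ (when~$m_j<d$) or~$i$ itself (when~$m_j\ge d$). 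This yields the contradiction with~$\notnz$ at~$m_j$ in one stroke, without the separate forward-propagation step. The paper's approach is slightly more self-contained (it does not invoke \cref{0-chain}); yours is more uniform across the two time-orderings.

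One small point worth tightening: your justification that~$j_{m_j}$ is active at~$m_j$ in Case~1 appeals to its \emph{receiving} a message. The cleaner argument (covering also the edge case~$m_j=0$, where nothing is received) is: if~$m_j<d$ then~$j_{m_j}$ \emph{sends} to~$j_{m_j+1}$ in round~$m_j{+}1$, hence has not crashed by time~$m_j$; and if~$m_j=d$ then~$j_{m_j}=i$, which is correct by assumption.
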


\begin{proof}
In some run $r$ of $\OptZ$, let $i$ be a nonfaulty process.

\Decision:
By definition of $\OptZ$, for any process that is active at time $\tee+1$, if $i$ has not decided $0$ by that time, we have $\lnot K_i\exists0$ at that time. Therefore, by \cref{optz-t1}, we have that $K_i\notnz$ at that time and so $i$ decides upon $1$ if it is undecided. Therefore, all processes that are active at time $\tee+1$, and in particular all nonfaulty processes, decide by that time at the latest, and in particular decide at some point throughout the run, as required.

Henceforth, let $m$ be the decision time of $i$
and let $\valv$ be the value upon which $i$ decides.

\Validity: If $\valv=0$, then $K_i\exists0$ at $m$; thus, $\exists0$ as required. Otherwise, $K_i\exists0$ does not hold at $m$; therefore, the initial value of $i$ is $1$, and so $\exists1$ as required.

\Agreement:
It is enough to show that if $\valv=1$, then no correct process ever decides $0$ in the current run.
Indeed, if any nonfaulty process $j$ decided $0$ at some time $m'<m$, then $i$ would have received a message with a $0$ from $j$ at $m'+1\le m$, and so we would have $K_i\exists0$ at $m$. To complete the proof, it is enough to show that no process decides $0$ at any time $m'\ge m$; this follows by an easy inductive argument, using the fact that $\notnz$ at any time $m''$ implies $\notnz$ at $m''+1$.
\end{proof}

\begin{proof}[Proof of \cref{thm:optz}]
%Statement: $\OptZ$ is an unbeatable consensus protocol in $\gammacr$. 

Correctness is shown in \cref{solve}.
We thus have to show that for every protocol consensus protocol $Q\dom\OptZ$, we also have $\OptZ\dom Q$. Let, therefore, $Q$ be a consensus protocol s.t.\ $Q\dom\OptZ$; w.l.o.g., $Q$ is a fip.

We first claim that $\OptZ\dom\Pz$. Indeed, whenever $\Pz$ decides upon $0$, so does $\OptZ$; let therefore $i$ be a process deciding upon $1$ in $\Pz$; by definition of $\Pz$, this decision is made at time $m=\tee+1$, and furthermore, $\lnot K_i\exists0$ at that time. By \cref{optz-t1}, we therefore have that $K_i\notnz$ at time, and so, $i$ decides upon $1$ in $\OptZ$ at that time if it has not already decided.

By transitivity of domination, we thus have that $Q\dom\Pz$. By \cref{lem:decide-when-0}, we therefore have that $\decideiZ$  is performed  in~$Q$ exactly when $K_i\exists{0}$ first holds; therefore, no decision on $0$ is made in $Q$ before $\OptZ$. Moreover, by \cref{lem:know-exists,lem:decide-when-notnz}, we therefore have that $K_i\notnz$ is a necessary conditions for $\decidei(1)$ in~$R_Q=R(Q,\gammacr)$. Therefore, no decision on $1$ is made in $Q$ before $\OptZ$. Therefore $\OptZ\dom Q$, as required, and the proof is complete.
 \end{proof}

\begin{proof}[Proof of \cref{lem:rev}]
%Statement:
%Let~$P$ be a \fip\ and let~$r\in R_P=R(P,\gammacr)$. 
%For every node $\node{i,m}$, it is the case that
%$(R_P,r,m)\sat K_i\notnz$ exactly if both (1)~$(R_P,r,m)\not\sat K_i\exZ$ and ~(2)
% some time $k\le m$ is revealed to~$\node{i,m}$ in~$r$.

We first claim that $(R_P,r,m)\sat\notnz$ iff for every $0\le k\le m$, there exists a process $j_k$ s.t.\ $K_{j_k}\exists0$  at time $k$ in $r$ --- we call such $j_0,\ldots,j_m$ a \defemph{0-path} for time $m$ in $r$; the proof is similar to (and simpler than) that of \cref{0-chain} and is left to the reader.

Assume first that some time $k\le m$ is revealed to~$\node{i,m}$ in~$r$.
As $(R_P,r,m)\not\sat K_i\exists0$, we thus have that no time-$k$ node $j$ satisfies $K_j\exists0$; therefore, no $0$-path exists for time $m$ in $r$, and so $(R_P,r,m)\sat\notnz$. Note that by definition of knowledge, time $k$ is revealed to $\node{i,m}$ in $r$ iff $(R_P,r,m)\sat K_i(\mbox{time $k$ is revealed to $\node{i,m}$ in the current run})$. Therefore, we have that time $k$ being revealed to $\node{i,m}$ implies not only $(R_P,r,m)\sat\notnz$, but also $(R_P,r,m)\sat K_i\notnz$, as required.

Assume now that no time $k\le m$ is revealed to $\node{i,m}$, i.e.\ that for every $k\le m$, there exists a time-$k$ node $\node{j_k,k}$ that is not revealed to $\node{i,m}$ in $r$ --- in \cref{hidden-paths}, we call such $j_0,\ldots,j_m$ a \defemph{hidden path} w.r.t.\ $\node{i,m}$ in $r$. We construct a run $r'\in R_P$ s.t.\ $r'_i(m)=r_i(m)$, in which $j_0,\ldots,j_m$ constitutes a $0$-path for time $m$ --- see \cref{hidden-path} in \cref{hidden-paths}. The adversary in $r'$ meets the following conditions, and otherwise coincides with that of $r$:
\begin{itemize}
\item
The initial value of $j_0$ is~$0$.
\item
For every $k<m$, the node $\node{j_k,k}$ crashes, successfully sending a message solely to $\node{j_{k+1},k+1}$.
\item
$\node{j_m,m}$ is nonfaulty.
\end{itemize}
It is straightforward to verify that $r'_i(m)=r_i(m)$, that no more crashes occur in $r'$ than in $r$, and that $j_0,\ldots,j_m$ indeed is a $0$-path for time $m$ in $r$.
As $(R_P,r',m)\not\sat\notnz$, and as $r'_i(m)=r_i(m)$, we therefore have that $(R_P,r,m)\not\sat K_i\notnz$, as required.
\end{proof}

\begin{proof}[Proof of \cref{optz-f1}]
%Statement: In $\OptZs$ (and thus also $\OptZ$), all decisions are made by time $f+1$ at the latest.

Let $i$ be an undecided node at time $m$ in $\OptZs$; it is enough to show that $m\le f$.
As $i$ is undecided, by definition of $\OptZs$, for every $0\le k<m$, there exists a node process $j_k$ s.t.\ $\node{j_k,k}$ is not revealed to $\node{i,m}$. We first note that all of the nodes $j_k$ are faulty; indeed, as $\node{j_k,k}$ is not revealed to $\node{i,m}$, and as $k<m$, we have that $\node{i,k+1}$ receives no message from $\node{j_k,k}$. We further note that all $j_k$ are distinct; indeed, for every $k<k'<m$, we have (once again since $\node{j_k,k}$ is not revealed to $\node{i,m}$) that $\bigl(\node{j_k,k'-1},\node{i,k'}\bigr)\notin r_i(m)$ while $\node{i,k'}\in r_i(m)$, and so by definition $\node{j_k,k'}$ is revealed to $\node{i,m}$.
We conclude that $j_0,\ldots,j_{m-1}$ are $m$ distinct faulty nodes, and so $m\le f$ and the proof is complete.
\end{proof}

\begin{proof}[Proof of \cref{lem:beats}]
%Statement:
%If $3\le\tee\le n-2$, then $\OptZ$ strictly dominates $\hmwopt$. Moreover, there exists an adversary
%for which $\decideiO$ is performed after~3 rounds in $\OptZ$, and after $\tee+1$ rounds in $\hmwopt$. 

First notice that~$\OptZ$ dominates~$\hmwopt$, since $K_i\forall 1$ is true iff time $0$ is revealed to~$i$, and if $i$'s sender set repeats in round~$m$, then time~$m-1$ is revealed to~$i$ at time~$m$. Hence, for every adversary, processes decide in~$\OptZ$ at least as soon as they do in~$\hmwopt$. We now show an adversary for which the decisions are made strictly earlier in~$\OptZ$ than in~$\hmwopt$; moreover, this adversary meets the conditions of the second clause of the \lcnamecref{lem:beats}.

Denote the processes by~$\Proc=\{1,2,\ldots,n\}$. Let~$\alpha$ be defined as follows. 
All initial values in~$\alpha$ are~$1$. 
In round~$1$, only process~$1$ fails, and it is silent: it crashes without sending any messages. In round~$2$ two processes crash---process~$2$ and process~$3$, with process~$2$ sending only to process~$n$, and process~$3$ sending to everyone except process~$n$. No process fails in round~$3$, and, in each of the rounds $m=4,\ldots,\tee$, process~$m$ crashes without sending any messages. 
Since precisely~$\tee$ processes fail in~$\alpha$ we have that $\alpha\in\Crash(\tee)$.  

Observe that in $\fip[\alpha]$ no correct process ever knows process~$1$'s initial value. In addition, for every correct process, the first round in which the sender set repeats is round~$\tee+1$. 
Indeed, every correct process other than~$n$ fails to hear from process~$m$ for the first time in round~$m$, for $m=1,\ldots,\tee$, while process~$n$ differs slightly, in that it fails to hear from process~3 in round~2 and from process~2 in round~3. Therefore, in $\hmwopt[\alpha]$ all correct processes decide~$1$ at time~$\tee+1$, since round~$\tee+1$ is the first one in which their sender set repeats; no process decides any earlier.
Now let us consider when a process~$i$ that is correct according to~$\alpha$ decides in~$\OptZ$.  
By definition, $i$ receives messages in round~3 from both $\node{n,2}$ and $\node{n-1,2}$.  Together, these contain the information about nodes $\node{2,1},\node{3,1},\ldots,\node{n,1}$. Moreover, node $\node{1,1}$ is revealed  to~$\node{i,3}$ as well (as being crashed --- $\frownie$), since  the edge $(\node{1,0},\node{i,1})$ is absent from~$i$'s view at $\node{i,3}$. It follows that time $1$ is revealed to $\node{i,3}$, and so~$i$ decides~1 at time~3, after~3 rounds, as claimed. Since $3<4\le \tee+1$, we have that when the adversary is~$\alpha$, decisions in~$\OptZ$ occur strictly earlier than in~$\hmwopt$, and we are done.  
\end{proof}

\subsection{Majority Consensus}
\label{sec-maj-proofs}
The proof of \cref{thm:optmaj} is based on two lemmas:
\begin{lemma}[Decision at time $1$]
\label{maj-decide-first-round}
Assume that  $n\!>\!2$ and $t\!>\!0$. Let $Q\dom\OptMaj$ solve Consensus and let $r\!=\!Q[\alpha]$ be a run of $Q$.
Let $i$ be a process and let $\veee$ be a value. If $K_i(\Maj\!=\!\veee)$ at $(r,1)$, then $Q$ makes~$i$ decide~$\veee$ before or at time $1$ in~$r$.
\end{lemma}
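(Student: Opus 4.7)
The first part of the lemma — that $Q$ decides $i$ by time~$1$ — is immediate from $Q \dom \OptMaj$: since $n>2$, the singleton view $r_i(0)$ cannot force $\Maj$ either way, so $K_i(\Maj = \veee)$ first holds at time~$1$, and $\OptMaj$'s first or second clause fires at time~$1$ in $\OptMaj[\alpha]$ with value~$\veee$, giving a decision time $\le 1$ for $Q$ as well. The substance of the lemma is to pin down the decided value as $\veee$.

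For this I would argue by contradiction: suppose $i$ decides $\bar\veee$ at some $m\le 1$ in $r$. Let $S$ be the set of processes whose initial values appear in $r_i(1)$, and write $S_\veee = \{k\in S : v_k^\alpha = \veee\}$ and $S_{\bar\veee} = S\setminus S_\veee$. The knowledge condition $K_i(\Maj = \veee)$ at time $1$ translates into $|S_\veee| \ge n/2$ (strict if $\veee = 1$), so for $n>2$ one can pick a witness process $j \in S_\veee \setminus \{i\}$. The plan is to build an adversary $\beta \in \Crash(\tee)$ together with a companion all-$\veee$ adversary $\beta^*$ sharing $\beta$'s failure pattern, such that (a)~$r^\beta_i(m) = r^\alpha_i(m)$, so $i$ still decides $\bar\veee$ at time~$m$ in $Q[\beta]$; (b)~$r^\beta_j(1) = r^{\beta^*}_j(1)$, so $j$'s decision in $Q[\beta]$ agrees with its decision in $Q[\beta^*]$; and (c)~both $i$ and $j$ are correct in $\beta$. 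Then Validity in $Q[\beta^*]$ forces $j$ to decide $\veee$ there, (b) transfers that decision to $Q[\beta]$, and Agreement applied to the correct pair $i$ (still at $\bar\veee$) and $j$ (now at $\veee$) in $Q[\beta]$ yields the contradiction.

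The construction of $\beta$ sets $v_k^\beta = v_k^\alpha$ for $k \in S$ and $v_k^\beta = \veee$ for $k \notin S$, schedules each $k \in \Proc \setminus S \setminus \{i\}$ to crash during round~$1$ after sending to $j$ but before reaching $i$ (preserving $i$'s view from $\alpha$), and schedules each $k \in S_{\bar\veee}$ to crash during round~$1$ after sending to $i$ but before reaching $j$ (so $j$'s receptions in $\beta$ carry only $\veee$-labels, matching the view at $\node{j,1}$ in $\beta^*$). Processes $i$, $j$, and all other members of $S_\veee$ remain correct in $\beta$; $\beta^*$ is obtained from $\beta$ by overwriting the initial values of processes in $S_{\bar\veee}$ by $\veee$, a change that by construction is invisible at $\node{j,1}$.

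The main obstacle is the failure budget: this construction needs $|S_{\bar\veee}| + |\Proc \setminus S \setminus \{i\}| = n - |S_\veee|$ crashes, and the guarantee $|S_\veee|\ge n/2$ only bounds this by $n/2$, which may exceed $\tee$ when $\tee$ is small. To handle that regime I would replace the single-shot $\beta$ with a chain $\alpha = \gamma_0, \gamma_1, \ldots, \gamma_L = \beta^*$, each link of which flips one $\bar\veee$-initial value in $S$ to $\veee$ while introducing a single round-$1$ crash that hides the change from a designated (possibly rotating) witness, and propagates the forced $\veee$-decision backward through Agreement at each step. Verifying that the cumulative crash count stays within $\Crash(\tee)$ at every link and that the chosen witness's view is genuinely preserved across that link — in effect using the $\tee$ failures transiently rather than all at once — is where I expect the real technical difficulty of the proof to lie.
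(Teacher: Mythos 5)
Your plan is essentially the one the paper follows, and your single-jump pair $(\beta,\beta^*)$ is correct whenever $\tee\ge n/2$; for small $\tee$ the chain is the right refinement, and the paper packages exactly that chain as an induction on $n-|Z_i|$, where $Z_i$ is the set of $\veee$-valued time-$0$ nodes seen by $\node{i,1}$. However, the step you flag and leave open --- keeping every intermediate adversary inside $\Crash(\tee)$ --- is precisely the load-bearing step, and it is a genuine gap as written.

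The resolution is that the links of the chain are not successive mutations of a single evolving adversary; each one is a \emph{fresh} run whose only obligation is to reproduce the \emph{current rotating witness's} time-$1$ view. Any crash installed at an earlier link is invisible to that witness --- the crashed process either delivered to the witness in round~$1$, or is already one of the $n-|S|$ nodes the witness fails to hear from --- and so it can simply be discarded when the next run is built. Concretely, the paper's induction splits into three cases: if the current witness has some time-$0$ node hidden, no new crash is introduced at all (the already-crashed node's value is flipped to $\veee$ and its round-$1$ messages rerouted to a new witness under the \emph{same} failure pattern, so $|Z|$ increases by one for free); if the witness sees everyone at time~$0$ and some $k\neq i$ holds $\bar\veee$, crashing $k$ alone yields $f=1$ in the new run (this is where $\tee>0$ matters), and the previous case then applies to a new witness; and if the only $\bar\veee$-holder is the witness itself, one resets to a failure-free run and falls into the second case. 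With this bookkeeping the crash count of every intermediate run is at most $\max\bigl(1,\,n-|S|\bigr)\le\tee$, so the budget is never exceeded. Making this explicit is all that separates your sketch from the paper's proof.
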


\begin{proof}
By definition of $\OptMaj$, $i$ decides in $\OptMaj[\alpha]$ by time $1$, since $K_i(\Maj\!=\!\veee)$ holds at $(\OptMaj[\alpha],1)$. As $Q\dom\OptMaj$, we thus have that~$i$ must decide upon some value in $r\!=\!Q[\alpha]$ before or at time $1$. Thus, it is enough to show that $i$ cannot decide $1\!-\!\veee$ up to time $1$ in $r$.

We prove the claim by induction on $n\!-\!|Z_i|$,
where $Z_i$ is defined to be the set of processes $k$ with initial value $\veee$, s.t.\ $\node{k,0}$ is seen by $\node{i,1}$.
As $K_i(\Maj\!=\!\veee)$ at $(r,1)$, we have $|Z_i|\ge \frac{n}{2}$ and so $2 \le |Z_i| \le n$.

Base: $|Z_i|=n$. In this case, all initial values are $\veee$, and so by \Validity~$i$ cannot decide $1\!-\!\veee$ in $r$.

Step: Let $2\le\ell<n$ and assume that the claim holds whenever $|Z_i|=\ell+1$. Assume that $|Z_i|=\ell$. As $|Z_i| \ge 2$, there exists $j \in Z_i \setminus \{i\}$. We reason by cases.

\begin{enumerate}[label=\Roman*.]
\item
If there exists a process $k$ s.t.\ $\node{k,0}$ is hidden from $\node{i,1}$, then there exists a run $r'$ of $Q$, s.t.~~\emph{1)} $r'_i(1)\!=\!r_i(1)$,\ \ \emph{2)}~neither $i$ nor $j$ fail in $r'$,\ \ \emph{3)} $k$ has initial value $0$ in $r'$, and\ \ \emph{4)} $Z_j = Z_i\!\cup\!\{k\}$ in $r'$. (Note that by definition, $Z_i$ has the same value in both $r$ and $r'$.)
By the induction hypothesis (switching the roles of $i$ and $j$), $j$ decides $\veee$ before or at time $1$ at $r'$, and therefore by \Agreement, $i$ cannot decide $1\!-\!\veee$ in $r'$, and hence it does not decide $1\!-\!\veee$ up to time $1$ in $r$.
\item
If there exists a process $k \ne i$ with initial value $1\!-\!\veee$, s.t.\ $\node{k,0}$ is seen by $\node{i,1}$, then $k \notin \{i,j\}$. Hence,
as $t\!>\!0$, there exists a run $r'$ of $Q$, s.t.\ \ \emph{1)} $r'_i(1)\!=\!r_i(1)$,\ \ \emph{2)}~neither $i$ nor $j$ fail in $r'$,\ \ \emph{3)} $\node{k,0}$ is hidden from $\node{j,1}$ in $r'$, and\ \ \emph{4)} $Z_j\!=\!Z_i$ in $r'$. (Once again, $Z_i$ has the same value in both $r$ and $r'$.)
By Case I (switching the roles of $i$ and $j$), $j$ decides $\veee$ before or at time $1$ in $r'$, and therefore by \Agreement, $i$ cannot decide $1\!-\!\veee$
in $r'$, and hence it does not decide $1\!-\!\veee$ up to time $1$ in $r$.
\item
Otherwise, $\node{k,0}$ is seen by $\node{i,1}$ for all processes $k$, and $k$ has initial value $\veee$ for all processes $k \ne i$. As $|Z_i|<n$, we have that $i$ has
initial value $1\!-\!\veee$. Thus, there exists a run $r'$ of $Q$, s.t.\ \ \emph{1)} $r'_i(1)\!=\!r_i(1)$,\ \ \emph{2)}~$f=0$ in $r'$, and\ \ \emph{3)} $Z_j\!=\!Z_i$ in $r'$. (Once again, $Z_i$ has the same value in both $r$ and $r'$.)
As $i$ has initial value $1\!-\!\veee$ in $r'$ as well, by Case II (switching the roles of $i$ and $j$), $j$ decides $\veee$ before or at time $1$ in $r'$, and therefore by \Agreement, $i$ cannot decide $1\!-\!\veee$
in $r'$, and hence it does not decide $1\!-\!\veee$ up to time $1$ in $r$, and the proof is complete. \qedhere
\end{enumerate}
\end{proof}

\begin{lemma}[No Earlier Decisions]\label{maj-not-decide}
Assume that $n\!>\!2$ and $t\!>\!0$. Let $Q\dom\OptMaj$ solve Consensus and let $r$ be a run of $Q$.
Let $i$ be a process and let $m$ be a time, s.t.\ $\lnot K_i(\Maj\!=\!0)$ and $\lnot K_i(\Maj\!=\!1)$.
If there exists a hidden path w.r.t.\ $\node{i,m}$, then $i$ does not decide at $(r,m)$.
\end{lemma}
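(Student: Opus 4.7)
Proof plan: Assume for contradiction that $i$ decides some value $\veee\in\{0,1\}$ at $(r,m)$. The aim is to construct an adversary $\alpha'\in\Crash(\tee)$ yielding a run $r'\eqdef Q[\alpha']$ with $r'_i(m)=r_i(m)$, yet in which some correct process decides $\bar\veee$ by time~$m$. Since, by indistinguishability, $i$ also decides $\veee$ at $(r',m)$, this would contradict \Agreement\ in $r'$, yielding the lemma.

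To construct $\alpha'$, let $\node{j_0,0},\ldots,\node{j_m,m}$ be the given hidden path and put $S_i\eqdef\{k:\node{k,0}\in r_i(m)\}$; note $j_0\notin S_i$. The initial values of $\alpha'$ agree with those of $\alpha$ on~$S_i$, while the remaining initial values---in particular $j_0$'s---are set so that (a)~$\Maj=\bar\veee$ holds globally in $\alpha'$, and (b)~$j_1$'s view after round~$1$ contains a sufficient majority of $\bar\veee$'s to guarantee $K_{j_1}(\Maj=\bar\veee)$ at $(r',1)$. The failure pattern of $\alpha'$ preserves every failure observable in $r_i(m)$, and additionally arranges $j_0,j_1,\ldots,j_{m-1}$ into a crash-chain---each $j_k$ crashes in round $k+1$ while successfully sending only to $j_{k+1}$---with $j_m$ remaining nonfaulty. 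Since the hidden-path nodes are, by definition, hidden from $\node{i,m}$, these extra crashes do not alter $r_i(m)$; and the failure count remains within $\tee$, since at most $m\le\tee$ chain failures are introduced on top of those already in $\alpha$.

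By \cref{maj-decide-first-round} applied to $Q$, $j_1$ decides $\bar\veee$ by time~$1$ in $r'$. Although $j_1$ is faulty in $\alpha'$, the $\bar\veee$-decision propagates along the chain: for each $k=1,\ldots,m-1$, an auxiliary adversary $\alpha'_k$ is obtained from $\alpha'$ by keeping $j_k$ alive past its crash round. Because this leaves the views at $\node{j_1,1},\ldots,\node{j_k,k}$ and at $\node{j_{k+1},k+1}$ unchanged, $j_k$ still decides $\bar\veee$ by time~$k$ in $Q[\alpha'_k]$ and is now correct, so that the precondition $K_{j_{k+1}}\noactnv$ required by \cref{lem:know-exists} fails for $j_{k+1}$---ruling out $j_{k+1}$ deciding $\veee$. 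Combined with a chain-carried $K_{j_{k+1}}(\Maj=\bar\veee)$ and further indistinguishability arguments that force timely decisions, this yields inductively that $j_{k+1}$ decides $\bar\veee$ by time $k+1$ in $r'$. Iterating to $k=m-1$, $j_m$---which is nonfaulty in $\alpha'$---decides $\bar\veee$ by time~$m$ in $r'$; and since both $i$ and $j_m$ are correct, \Agreement\ in $r'$ then forces $i$ to decide $\bar\veee$ at $(r',m)$, contradicting $i$'s decision of $\veee$.

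The main obstacle is the initial-value construction: simultaneously achieving $\Maj=\bar\veee$ globally and $K_{j_1}(\Maj=\bar\veee)$ at $(r',1)$ requires a careful counting argument that jointly exploits $\lnot K_i(\Maj=0)$ and $\lnot K_i(\Maj=1)$, with $j_0$'s free initial value furnishing the critical extra degree of freedom beyond $i$'s view. The chain propagation is comparatively mechanical but delicate, relying on repeated applications of \cref{maj-decide-first-round}, \cref{lem:know-exists}, and \Agreement\ in auxiliary runs that keep hidden-path nodes alive past their crash rounds while preserving the relevant local views.
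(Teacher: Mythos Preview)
Your approach is essentially the paper's: build an indistinguishable run $r'$ in which the hidden-path chain carries a $\bar\veee$-majority from $\node{j_1,1}$ down to the correct process $j_m$, then invoke \Agreement. A few points of divergence are worth flagging.

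First, the paper handles $m=0$ separately (there is no $j_1$ to seed the chain); your write-up silently assumes $m\ge 1$. Second, your ``main obstacle'' is simpler than you suggest: since $\node{j_1,1}$ is hidden from $\node{i,m}$, the paper just makes \emph{every} process send to $j_1$ in round~1, so $j_1$ sees all initial values and $K_{j_1}(\Maj=\bar\veee)$ follows immediately from the global majority---no delicate counting is needed. Third, your failure-count justification is off: the chain processes $j_0,\dots,j_{m-1}$ are not ``introduced on top of'' $\alpha$'s failures; they are already among the failures $i$ observes in $r$ (each $j_k$ failed to send to $i$ in round $k{+}1$), so the new adversary needs only the failures visible in $r_i(m)$, which is at most~$\tee$.

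Finally, in the inductive step the paper makes \emph{both} $b_{\ell-1}$ and $b_\ell$ correct in the auxiliary run and applies \Agreement\ directly, which rules out $b_\ell$ ever deciding~$\veee$. Your route through \cref{lem:know-exists} only shows $\neg K_{j_{k+1}}\noactnv$ once $j_k$ has decided, so it does not by itself exclude $j_{k+1}$ deciding~$\veee$ at an earlier time; you would still need an auxiliary run with $j_{k+1}$ correct to close this. The paper's use of \Agreement\ is both simpler and complete.
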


\begin{proof}
Let $v\in\{0,1\}$ be a value. We show that $i$ does not decide $\veee$ at $(r,m)$.

We first consider the case in which $m\!=\!0$.
In this case, there exists a run $r'$ of $Q$ s.t.\ \ \emph{1)} $r'_i(0)=r_i(0)$,\ \ \emph{2)} $\Maj\!=\!1\!-\!\veee$, and\ \ \emph{3)} $f=0$.
As $f=0$ and $\Maj\!=\!1\!-\!\veee$ in $r'$, we have $K_i(\Maj\!=\!1\!-\!\veee)$ at $(r',1)$, and therefore, by \cref{maj-decide-first-round}, $i$ decides $1\!-\!\veee$ before or at $1$ in $r'$; therefore, $i$ does not decide $\veee$ at $(r',0)$, and hence neither does it decide $\veee$ at $(r,0)=(r,m)$.

We turn to the case in which $m\!>\!0$.
As there exists a hidden path w.r.t.\ $\node{i,m}$, for every $0\le\ell\le m$ there exists a process $b_{\ell}$ s.t.\ $\node{b_{\ell},\ell}$ is hidden from $\node{i,m}$.
Thus, there exists a run $r'$ of $Q$ s.t.\ \ \emph{1)} $r'_i(m)=r_i(m)$,\ \ \emph{2)} $\Maj\!=\!1\!-\!\veee$,\ \ \emph{3)} $\node{b_1,1}$ sees $\node{k,0}$ for all processes $k$ (and
therefore $K_{b_1}(\Maj\!=\!1\!-\!\veee)$ at $1$,\ \ \emph{4)} $\node{b_{\ell},\ell}$ is seen by $\node{b_{\ell+1},\ell+1}$ for every $1\le\ell<m$, and\ \ \emph{5)} neither $b_m$
nor $i$ fail in $r'$.
We show by induction that $b_{\ell}$ decides $1\!-\!\veee$ before or at $\ell$ in $r'$, for every $1 \le \ell\le m$.

Base: By \cref{maj-decide-first-round}, $b_1$ decides $1\!-\!\veee$ before or at $1$ in $r'$.

Step: Let $1<\ell\le m$ and assume that $b_{\ell-1}$ decides $1\!-\!\veee$ before or at $\ell\!-\!1$ in $r'$. As $\node{b_{\ell-1},\ell\!-\!1}$ is seen by $\node{b_{\ell},\ell}$ in $r'$, there
exists a run $r''\!=\!Q[\gamma]$ of $Q$, s.t.\ \ \emph{1)} $r''_{b_{\ell}}(\ell)\!=\!r'_{b_{\ell}}(\ell)$, and\ \ \emph{2)} Neither $b_{\ell-1}$ nor $b_{\ell}$ fail in $r''$.
As $\node{b_{\ell-1},\ell\!-\!1}$ is seen by $\node{b_{\ell},\ell}$, and as $r''_{b_{\ell}}(\ell)\!=\!r'_{b_{\ell}}(\ell)$, $b_{\ell-1}$ decides $1\!-\!\veee$ before or at $\ell\!-\!1$ in $r''$ as well.
As neither $b_{\ell-1}$ nor $b_{\ell}$ fail in $r''$, by \Agreement~$b_{\ell}$ does not decide $\veee$ before or at $\ell$ in $r''$. As $\node{b_1,1}$ is seen by $\node{b_{\ell},\ell}$ in $r'$,
we have $K_{b_{\ell}}(\Maj\!=\!1\!-\!\veee)$ at $(r',\ell)$, and therefore also at $(r'',\ell)$. Thus, $b_{\ell}$ decides in $(\OptMaj[\gamma],\ell)$, and therefore $b_{\ell}$ decides
before or at $\ell$ in $r''$, and so it decides $1\!-\!\veee$ before or at $\ell$ in $r''$, and hence it also decides $1\!-\!\veee$ before or at $\ell$ in $r'$, and the proof by induction is complete.

As we have shown, $b_m$ decides $1\!-\!\veee$ in $r'$. As neither $b_m$ nor $i$ fail in $r'$, by \Agreement~$i$ does not decide $\veee$ at $(r',m)$, and therefore neither
does it decide $\veee$ at $(r,m)$.
\end{proof}
We can now prove \cref{thm:optmaj}.

\begin{proof}[Proof of \cref{thm:optmaj}]
%Statement: If $\tee>0$, then
%$\OptMaj$ is an unbeatable consensus protocol. In particular, in a run in which~\mbox{$f\le \tee$} failures actually occur, all decisions are performed by time~$f+1$, at the latest. 

\Agreement, \Decision\ and \Validity\ are straightforward and left to the reader. If $n\!>\!2$, then unbeatability follows from \cref{maj-not-decide}.
If $n\!=\!1$, then it is straightforward to verify that the single process always decides at time $0$, and so $\OptMaj$ cannot be improved upon.
Finally, if $n\!=\!2$, then it is easy to check that $\OptMaj$ is equivalent to $\OptZ$, and so is unbeatable.

The fact that all decisions are performed by time $f+1$ follows, exactly as in \cref{optz-f1}, from the fact that a hidden path exists w.r.t.\ each undecided process.
\end{proof}

We note that the condition $\tee\!>\!0$ in \cref{thm:optmaj} cannot be dropped if $n\!>\!2$. Indeed, if $t\!=\!0$ and $n\!>\!2$, then both $\OptZ$ and $\OptO$ (in which some decisions are made at time $0$, and the rest --- at time $1$)
strictly dominate $\OptMaj$ (in which all decisions are made at time $1$).

\subsection{Uniform Consensus}
\label{sec-uni-cons-proofs}

We note that while the assumption $\tee\!<\!n$ simplifies presentation throughout the proofs below, the case $t\!=\!n$ can be analysed via similar tools.

\begin{proof}[Proof of Lemma~\ref{lem:correct-uni}]
%Statement: ${K_i\cv}$ is a necessary condition  for~$i$ deciding~$\veee$ in any protocol solving Uniform Consensus.

Let~$P$ be a uniform consensus protocol, and let $r$ be a run of~$P$ such that
$(R_P,r,m) \not\sat K_i\cv$.
Thus, there exists a run $r'\in P[\alpha']$ such that $r_i(m)=r'_i(m)$ and
$(R_P,r',m) \not\sat \cv$.
Consider the adversary $\beta$ that agrees with~$\alpha'$ up to time~$m$, and
in which all active but faulty processes at $(r',m)$ crash at time~$m$ without
sending any messages. 
$\beta\in\gammacr$ because it has a legal input vector (identical to~$\alpha'$),
and at most~$\tee$ crash failures, as it has the same set of faulty processes
as~$\alpha'\in\gammacr$. It follows that $r''=P[\beta]$ is a run of~$P$.
Since $\beta$ agrees with $\alpha'$ on the first~$m$ rounds, we have that
$r''_i(m)=r'_i(m)$. Nonetheless, no correct process will ever know $\existsv$
in~$r''$, and thus by \Validity\ no correct process ever decides $\veee$ in~$r''$.
By decision, all correct processes thus decide not on $\veee$. By \UniAg, and as
$t\!<\!n$ (i.e.\ there are correct processes),
$i$ cannot decide on $\veee$ in~$r''$, and thus, as $r''_i(m)=r'_i(m)=r_i(m)$,
it cannot decide on $\veee$ in $r$ at $m$.
\end{proof}

Before moving on to prove \cref{lem:u-know}. We first introduce some notation.

\begin{definition}
For a node $\node{i,m}$, we denote by $\knownf{i,m} \in \{0,\ldots,t\}$ the number of failures known to $\node{i,m}$, i.e.\
the number of processes $j \ne i$ from which $i$ does not receive a message at time $m$.
\end{definition}

We note that \defemph{d}, as defined in \cref{lem:u-know}, is precisely $\knownf{i,m}$.

\begin{proof}[Proof of Lemma~\ref{lem:u-know} (Sketch)]
%Statement: Let $r\in R_P=R(P,\gammacr)$  and assume that $i$ knows of $\defemph{d}$ failures at $(r,m)$. 
%Then \\
%$(R_P,r,m)\sat K_i\cv$ ~iff~ at least one of ~~(a)~~$m\!>\!0$
%and $(R_P,r,m\!-\!1)\sat K_i\existsv$, ~or \\
%(b)~~$(R_P,r,m)\sat K_i(\mbox{$K_j\existsv$ ~held at time~$m\!-\!1$})$ ~holds for at least $(\tee\!-\!\defemph{d})$ distinct processes~$j$.

It is straightforward to see that each of conditions (a) and (b) implies $K_i\cv$ (Condition (a): as $\node{i,m-1}$ is seen at $m$ by all correct processes; condition (b): as the number of distinct
processes knowing $\exists0$, \emph{including $i$ itself}, is greater than the maximum number of active processes that can yet fail).
If neither condition holds, then $i$ considers it possible that only incorrect processes know $\existsv$, and that they all  immediately fail
($i$ at time $m$ before sending any messages, and the others ---
immediately after sending the last message seen by $i$),
in which case no correct process would ever know $\existsv$.
\end{proof}

As with~$\Pz$ in the case of consensus, by analysing decisions in protocols dominating $\UPz$, we show that no Uniform Consensus protocol can dominate
$\UOptZ$. \cref{decide-when-0-first-round,decide-when-0} give sufficient conditions for deciding~$0$ in any Uniform Consensus protocol
dominating $\UPz$. As mentioned above, the analysis is considerably subtler 
for Uniform Consensus, because the analogue of \cref{lem:decide-when-0} is not true. Receiving a message with value~0 in a protocol dominating~$\UPz$ does not imply that the sender has decided~0. 

\begin{lemma}[No decision at time $0$]\label{undecided-at-0}
Assume that  $t\!>\!0$.
Let $Q$ solve Uniform Consensus. No process decides at time $0$ in any run of $Q$.
\end{lemma}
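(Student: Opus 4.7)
The plan is to assume for contradiction that some process $i$ decides a value $\veee$ at time $0$ in a run $r=Q[\alpha]$, and then exhibit an alternative adversary for which $i$'s decision forces a violation of either \Validity\ or \UniAg. The key observation is that at time $0$ process $i$'s local state is determined entirely by its own initial value $\val_i$, so $i$ will make the identical decision at time $0$ in any run whose input vector agrees with $\alpha$ on the $i$-th coordinate, regardless of the failure pattern and the other initial values. I would split the analysis into two cases depending on whether $\val_i=\veee$ or $\val_i=\bar\veee$.

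If $\val_i\neq\veee$, I would take the adversary $\alpha'$ in which every process starts with input $\bar\veee$ and no process crashes. Since $\alpha'\in\gammacr$ and $r'_i(0)=r_i(0)$, process $i$ again decides $\veee$ at time $0$ in $r'=Q[\alpha']$, yet $i$ is correct in $r'$ and all inputs equal $\bar\veee$, contradicting \Validity.

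If $\val_i=\veee$, I would use the hypothesis $\tee>0$ to construct an adversary $\alpha''$ in which $i$ keeps its initial value $\veee$ and is the unique faulty process, crashing in round $1$ without sending any messages, while every other process has initial value $\bar\veee$ and is correct. This is a legitimate adversary because it has exactly one crash failure. In $r''=Q[\alpha'']$ we have $r''_i(0)=r_i(0)$, so $i$ decides $\veee$ at time $0$ (before it crashes in round~$1$). Meanwhile all $n-1$ other processes are correct and all initial values among them are $\bar\veee$, so by \Validity\ each of them decides $\bar\veee$. Since $i$ decides $\veee$ and other processes decide $\bar\veee$, this violates \UniAg, yielding the required contradiction.

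The argument is essentially a case analysis with two standard reachability constructions, so I do not expect any serious technical obstacle; the only point that warrants care is observing that the adversary in the second case is admissible precisely because $\tee\ge 1$ (if $\tee=0$, the crash of $i$ would be disallowed, and indeed at $\tee=0$ a process could legitimately decide its own input value at time~$0$). It is also worth noting that this proof could alternatively be phrased as a direct invocation of \cref{lem:correct-uni}: the two adversaries above demonstrate that $K_i\cv$ fails at $(r,0)$ in either case, which is impossible if $i$ decides $\veee$ there.
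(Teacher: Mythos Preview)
Your argument is correct in spirit, but it unpacks from first principles what the paper obtains in two lines by invoking \cref{lem:correct-uni} and \cref{lem:u-know}: since $K_i\cv$ is necessary for $\decidei(\veee)$, and since at time~$0$ condition~(a) of \cref{lem:u-know} fails (as $m=0$) while condition~(b) would require $\tee-\knownf{i,0}=\tee>0$ witnesses at time~$-1$, the necessary condition cannot hold. Your two adversarial constructions are essentially the content of the proof of \cref{lem:correct-uni} specialized to $m=0$, and you yourself note this connection in your final paragraph. The trade-off is clear: the paper's route is shorter but relies on machinery already in place; yours is self-contained and would stand even without those lemmas.

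One genuine (if easily patched) gap: in your second case you conclude that the $n-1$ correct processes decide $\bar\veee$ ``by \Validity'', on the grounds that ``all initial values among them are $\bar\veee$.'' But \Validity\ as stated requires \emph{all} initial values to equal $\bar\veee$, and in your run $r''$ process~$i$ still has input~$\veee$. The fix is the standard indistinguishability step: let $r'''$ be identical to $r''$ except that $i$'s initial value is also $\bar\veee$. Since $i$ crashes in round~$1$ without sending, every process $j\ne i$ satisfies $r'''_j(m')=r''_j(m')$ for all~$m'$; in $r'''$ all initial values are $\bar\veee$, so \Validity\ forces each correct $j$ to decide $\bar\veee$ there, and hence also in $r''$. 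With this one extra sentence your proof is complete.
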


\begin{proof}
As $t\!<\!n$, by \cref{lem:correct-uni} it is enough to show that $\lnot K_i \existsv$ for every process $i$ and $v \in \{0,1\}$.
As $0\!<\!t$, and as $\knownf{i,0}=0$ for all processes $i$ by definition, we have that by \cref{lem:u-know}, the proof is complete.
\end{proof}

\begin{lemma}[Decision at time $1$]\label{decide-when-0-first-round}
Let $Q\dom\UPz$ solve Uniform Consensus and let $r=r[\alpha]$ be a run of $Q$. Let $i$ be a process with initial value~$0$ in $r$ s.t. $i$ is  active 
at time $1$ in $r$.
If either of the following hold in $r$, then $\node{i,1}$ decides~$0$ in~$r$.
\begin{parts}
\item\label{decide-when-0-first-round-more-zeros}
$t>0$ ~and~ there exists a process $j \ne i$ with initial value $0$ s.t.\ $\node{j,0}$ is seen by $\node{i,1}$.
\item\label{decide-when-0-first-round-no-more-zeros}
$t>1$ ~and~ $\knownf{i,1} < t$.
\end{parts}
\end{lemma}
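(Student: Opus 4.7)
Both parts concern the value that $i$ decides at time~$1$; that $i$ decides at time~$1$ is already forced, since $v_i = 0$ gives $K_i\cz$ at $(r,1)$ by \cref{lem:u-know}(a), so $\UPz[\alpha]$ decides $0$ at time~$1$ for $i$, and by $Q \dom \UPz$ and \cref{undecided-at-0} so does $Q$. For each part I would suppose toward a contradiction that this decision is $1$, and exhibit a sibling run $r^\dagger = Q[\alpha^\dagger]$ with $r^\dagger_i(1) = r_i(1)$ (so $i$ still decides $1$ there, by determinism) in which some non-faulty process is pinned to decide $0$ at time~$1$, contradicting \UniAg.

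For part~1, I would induct on $a = |\{k \in S_i : v_k = 1\}|$, where $S_i := \{k : \node{k,0} \in r_i(1)\}$. The base case $a = 0$ is immediate by indistinguishability: the adversary $\alpha'$ agreeing with $\alpha$ on the initial values of $S_i$, setting $v_k = 0$ for $k \notin S_i$, and reproducing $\alpha$'s round-$1$ behaviour toward $i$ gives $r'_i(1) = r_i(1)$ yet has all initial values equal to~$0$, so by \Validity\ process $i$ must decide $0$ in $Q[\alpha']$, contradicting the assumed $1$-decision. For the inductive step, pick $k^* \in S_i \setminus \{i,j\}$ with $v_{k^*} = 1$; let $\alpha_*$ be $\alpha$ with $v_{k^*}$ flipped to $0$ (so that $a$ decreases by one and $k^*$ joins $j$ as a $v=0$ witness for part~$1$'s hypothesis), and by the inductive hypothesis $i$ decides $0$ at $(Q[\alpha_*],1)$. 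Now build $\alpha^\dagger$ (resp.\ $\alpha^{\dagger\dagger}$) from $\alpha$ (resp.\ $\alpha_*$) by additionally having $k^*$ crash in round~$1$ while sending only to $i$; then $i$'s time-$1$ view is unchanged in each, forcing decisions $1$ and $0$ respectively. Any $p \ne i,k^*$ has identical time-$1$ view in $\alpha^\dagger$ and $\alpha^{\dagger\dagger}$ (since $k^*$ didn't send to $p$ in either), so $Q$ acts identically on $p$ at time~$1$ in both. Using \cref{lem:u-know}, one picks such a $p$ for which $K_p\cz$ holds at $(Q[\alpha^{\dagger\dagger}],1)$ — e.g.\ $p = j$, whose view in $\alpha^{\dagger\dagger}$ now contains many $v=0$ witnesses — making $\UPz$, and hence $Q$, decide at time~$1$ on~$p$ there, with \UniAg\ (against $i$'s $0$-decision in $\alpha^{\dagger\dagger}$) pinning this decision to~$0$. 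Transferring back to $\alpha^\dagger$ yields $p$ deciding $0$ at time~$1$ in $r^{\alpha^\dagger}$, contradicting \UniAg\ with $i$'s $1$-decision there.

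Part~2 then follows: if part~1's hypothesis happens to hold we invoke part~1 directly; otherwise $i$ is the only $v=0$ process in $S_i$, and the same flip-and-crash construction applies, with $k^*$ taken to be any $v = 1$ process in $S_i \setminus \{i\}$ — here the slack $\knownf{i,1} + 1 \le t$ afforded by part~$2$'s hypothesis is exactly what makes $k^*$'s additional crash fit within the failure budget. The main obstacle is precisely this budget accounting: $\alpha^\dagger$ and $\alpha^{\dagger\dagger}$ carry $\knownf{i,1} + 1$ failures, which in part~$1$ can exceed~$t$ when $\knownf{i,1} = t$, necessitating a separate boundary-case argument that exploits the fact that at $\knownf{i,1} = t$ process $i$ already knows the entire correct set is $S_i$. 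A secondary subtlety is ensuring that the auxiliary $p$ decides exactly at time~$1$ rather than later, since otherwise $i$'s round-$2$ forwarding of $v_{k^*}$ would let $p$'s views in $\alpha^\dagger$ and $\alpha^{\dagger\dagger}$ diverge after time~$1$ and invalidate the transfer of decisions between the two runs.
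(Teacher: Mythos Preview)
Your overall strategy—assume $i$ decides $1$, then exhibit an indistinguishable run in which some process must decide $0$, contradicting \UniAg—is the paper's. But your specific bridging constructions leave genuine gaps that you flag without resolving.

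For Part~1, your flip-and-crash of $k^*\in S_i$ always spends a \emph{new} failure (since $k^*$ was visible to $i$), so $\alpha^\dagger$ needs $\knownf{i,1}+1$ crashes; when $\knownf{i,1}=t$ this is over budget. Your hint (``$i$ already knows the entire correct set is $S_i$'') does not close this: switching attention to $j$ there gives $a_j=a_i$, so your induction makes no progress, and knowing which processes are correct says nothing about what they decide. The paper avoids the issue by inducting instead on $|Z^0_i|$ (number of zeros seen) upward and \emph{switching perspective to $j$} at each step: when some $k$ is already hidden from $i$ (Case~I), it gives $k$ value $0$ and lets $j$ see it—reusing an existing failure, so the budget is untouched; only when nothing is hidden (Case~II, forcing $\knownf{i,1}=0$) does it spend the single crash afforded by $t>0$ to hide one process from $j$, landing $j$ in Case~I.

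For Part~2, ``the same flip-and-crash construction'' does not go through as written. In Part~1 your pivot $p=j$ had $v_j=0$, so $K_j\cz$ held at time~$1$ via \cref{lem:u-know}(a). In Part~2's ``otherwise'' branch, after crashing $k^*$ the only $0$ visible to any $p\neq i,k^*$ is $v_i$, and \cref{lem:u-know}(b) requires $p$ to see at least $t-\knownf{p,1}$ zeros—so in general no such $p$ is forced to decide at time~$1$, and then (exactly as you note) $i$'s round-$2$ forwarding of $v_{k^*}$ breaks the $\alpha^\dagger/\alpha^{\dagger\dagger}$ indistinguishability. The paper's construction is different: it crashes \emph{both} $i$ (at time~$1$, after deciding) and an auxiliary $j$ (at time~$0$, sending only to $i$), choosing $j$ among the already-failed processes whenever $\knownf{i,1}>0$ so the total stays at $\knownf{i,1}+1\le t$ (and using $t>1$ to afford two crashes when $\knownf{i,1}=0$); it then compares the \emph{entire} run of a correct process $k$ across two adversaries that differ only in $v_j$—a value $k$ can never learn, since both $i$ and $j$ are silenced.
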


\begin{proof}
For both parts, we first note that by \cref{lem:u-know} and by definition of $\UPz$, $i$ decides $0$ at $(\UPz[\alpha],1)$. As $Q\dom\UPz$, we thus have that $i$ must decide upon some
value in $r$ by time $1$.
By \cref{undecided-at-0}, $i$ does not decide at $(r,0)$. Thus, $i$ must decide at $(r,1)$.

We now show \cref{decide-when-0-first-round-more-zeros} by induction on $n\!-\!|Z^0_i|$,
where $Z^0_i$ is defined to be the set of processes $k$ with initial value $0$, s.t.\ $\node{k,0}$ is seen by $\node{i,1}$. Note that by definition, $i,j \in Z^0_i$, and
so $1 < |Z^0_i| \le n$.

Base: $|Z^0_i|=n$. In this case, all initial values are $0$, and so by \Validity~$i$ decides $0$ at $(r,1)$.

Step: Let $1<\ell<n$ and assume that \cref{decide-when-0-first-round-more-zeros} holds whenever $|Z^0_i|=\ell+1$. Assume that $|Z^0_i|=\ell$. We reason by cases.

\begin{enumerate}[label=\Roman*.]
\item
If there exists a process $k$ s.t.\ $\node{k,0}$ is hidden from $\node{i,1}$, then there exists a run $r'$ of $Q$, s.t.~~\emph{1)} $r'_i(1)\!=\!r_i(1)$,\ \ \emph{2)}~$j$ is active 
at $(r',1)$,~~\emph{3)} $k$ has initial value $0$ in $r'$, and~~~\emph{4)} $Z^0_j = Z^0_i\!\cup\!\{k\}$ in $r'$. (Note that by definition, $Z^0_i$ has the same value in both $r$ and $r'$.)
By the induction hypothesis (switching the roles of $i$ and $j$), $j$ decides $0$ at $(r',1)$, and therefore by \UniAg, $i$ cannot decide $1$ at $(r',1)$, and
hence it does not decide $1$ at $(r,1)$. Thus, $i$ decides~$0$ at~$(r,1)$.
\item
Otherwise, $\node{k,0}$ is seen by $\node{i,1}$ for all processes $k$. As $|Z^0_i|<n$, there exists a process $k \notin Z^0_i$ (in particular, $k \notin \{i,j\}$). Hence,
as $t\!>\!0$, there exists a run $r'$ of $Q$, s.t.\ \ \emph{1)} $r'_i(1)\!=\!r_i(1)$,\ \ \emph{2)}~$j$ is active
at $(r',1)$,\ \ \emph{3)} $\node{k,0}$ is hidden from $\node{j,1}$ in $r'$, and~~\emph{4)} $Z^0_j=Z^0_i$ in $r'$. (Once again, $Z^0_i$ has the same value in both $r$ and $r'$.)
By Case I (switching the roles of $i$ and $j$), $j$ decides $0$ at~$(r',1)$, and therefore by \UniAg, $i$ cannot decide $1$
at $(r',1)$, and hence it does not decide $1$ at $(r,1)$. Thus, $i$ decides~$0$ at~$(r,1)$.
\end{enumerate}

We move on to prove \cref{decide-when-0-first-round-no-more-zeros}.
If $\node{k,0}$ is hidden from $\node{i,1}$ for all processes $k \ne i$, then $\lnot K_i\exists1$ at $(r,1)$.
Thus, by \cref{lem:correct-uni}, $i$ cannot decide $1$ at $(r,1)$, and so must decide $0$ at $(r,1)$.
Otherwise, there exists a process $k\ne i$ s.t.\ $\node{k,0}$ is seen by $\node{i,1}$. As $n\!>\!t\!>\!1$, we have $n\!>\!2$ and so
there exists a process
$j \notin \{i,k\}$; if $\knownf{i,1}>0$, then we pick $j$ s.t.\ $\node{j,0}$ is hidden from $\node{i,1}$. Since
$t>1$ (for the case in which  $\knownf{i,1}=0$ and $\node{j,0}$ is seen by $\node{i,1}$) and since $t>\knownf{i,1}$ (for the case in which $\node{j,0}$ is
hidden from $\node{i,1}$), there exists a run $r'$ of $Q$,s.t.\ \ \emph{1)}~$r'_i(1)\!=\!r_i(1)$,\ \ \emph{2)}~$k$ never fails in $r'$,\ \ \emph{3)}~$j$ fails at $(r',0)$ before sending any messages except perhaps to $i$, and\ \ \ \emph{4)}~$i$ fails at $(r',1)$, immediately after deciding but before sending any messages.
Thus, there exists a run $r''$ of $Q$, s.t.\ \ \emph{1)}~$r''_k(m')\!=\!r'_k(m')$ 
\underline{for all} $m'$, 
\ \ \emph{2)} $k$ never fails in $r''$,\ \ \emph{3)}~$i$ and~$j$ both have initial value $0$ in $r''$,\ \ \emph{4)} $j$ fails at $(r'',0)$ while successfully sending a message only to $i$ (and therefore $j \in Z^0_i$ in $r''$), and\ \ \emph{5)} $i$ fails at $(r'',1)$, immediately after deciding but before sending out any messages.
By \cref{decide-when-0-first-round-more-zeros}, $i$ decides $0$ at $(r'',1)$, and therefore $k$ can never decide $1$ during $r''$, and therefore
neither during~$r'$. As $k$ never fails during $r'$, by \Decision\ it must thus decide~$0$ at some point during $r'$. Therefore, by \UniAg, $i$ cannot decide $1$ at $(r',1)$, and thus it does not decide $1$ at $(r,1)$. Thus, $i$ decides $0$ at $(r,1)$.
\end{proof}

\begin{lemma}[Decision at times later than $1$]\label{decide-when-0}
Let $Q\dom\UPz$ solve Uniform Consensus, let $r\!=\!Q[\alpha]$ be a run of $Q$ and let $m\!>\!0$.
Let $i$ be a process s.t.\ $K_i\exists0$ holds at time $m$ for the first time in~$r$, s.t.\ $K_i\cz$ holds at time $m+1$ for the first time in~$r$, and s.t. $i$ is  active 
at $(r,m+1)$.
If either of the following hold in $r$, then $i$ decides~$0$ at~$(r,m+1)$.
\begin{parts}
\item\label{decide-when-0-hidden-z}
All of the following hold.
\begin{itemize}
\item
$\knownf{i,m+1}<t$.
\item
There exists a process $z$ s.t.\ $K_z\exists0$ holds at time $m\!-\!1$, s.t.\ $\node{z,m\!-\!1}$ is seen by $\node{i,m}$, but s.t.\ $\node{z,m}$ is not seen by $\node{i,m\!+\!1}$, 
\item
There exists a process $j \ne i$
s.t.\ $\node{j,m}$ is seen by $\node{i,m\!+\!1}$ and $\node{z,m\!-\!1}$ is seen by $\node{j,m}$.
\end{itemize}
\item\label{decide-when-0-low-knownf}
$\knownf{i,m+1}<t-1$.
\end{parts}
\end{lemma}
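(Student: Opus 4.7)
The plan is to establish both parts by showing that, under the stated hypotheses, process $i$ cannot decide on~$1$ at any time $\le m+1$ in $r$; combined with two preliminary observations, this forces $i$ to decide $0$ at $(r,m+1)$. The first observation is that $i$ must decide by time $m+1$ in $r$: because $K_i\cz$ first holds at time $m+1$, by definition of $\UPz$ process~$i$ decides $0$ at $(\UPz[\alpha], m+1)$, and the assumption $Q\dom\UPz$ propagates this bound to $r=Q[\alpha]$. The second observation is that by \cref{lem:correct-uni}, $K_i\cz$ is a necessary precondition for~$i$ deciding~$0$, so $i$ cannot decide $0$ at any point strictly before $m+1$. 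Thus the entire burden is to rule out a decision on~$1$.

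For Part~\ref{decide-when-0-hidden-z}, I would construct a run $r'=Q[\alpha']$ of $Q$ with $r'_i(m+1)=r_i(m+1)$, by modifying $\alpha$ as follows: process $z$ crashes at time $m$, successfully sending its round-$m$ messages only along the edges that are compatible with $\node{z,m-1}$ being seen by $\node{j,m}$ and with $\node{z,m}$ being unseen by $\node{i,m+1}$; the remaining failure pattern is copied from $\alpha$. The adversary $\alpha'$ is legal since $\knownf{i,m+1}<t$ leaves room in the failure budget. In $r'$, the subgraph $\Ga(i,m+1)$ is unchanged, so $i$'s local state at $m+1$ is preserved, while $j$ is active at $m+1$, satisfies $K_j\exists0$ at $m$ (because $\node{z,m-1}$ is seen by $\node{j,m}$), and now directly witnesses $z$'s crash at time $m+1$. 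This puts $j$ at $(r',m+1)$ into a state in which the hypotheses of this lemma apply with $j$ in place of $i$ at an earlier stage (or to which \cref{decide-when-0-first-round} applies, in the base case), so an induction on $m$ yields that $j$ decides $0$ by $m+1$ in $r'$. By Uniform Agreement, $i$ cannot decide $1$ in $r'$, and hence not in $r$.

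For Part~\ref{decide-when-0-low-knownf}, the slack condition $\knownf{i,m+1}<t-1$ provides room for two extra crashes. I would construct a run $r''$ with $r''_i(m+1)=r_i(m+1)$ in which $i$ crashes at time $m+1$ immediately after deciding, before sending any round-$(m+2)$ messages, and one additional process $j^*$ (chosen so that $\node{j^*,m}$ is seen by $\node{i,m+1}$ but so that hiding $\node{j^*,m+1}$ from the witness below is consistent) crashes at time $m+1$ with a restricted send-set. A third process $k^*$, from which $i$ receives a message at $m+1$, remains correct in $r''$; its view at $m+1$ can be arranged to satisfy the premises of Part~\ref{decide-when-0-hidden-z} (with the hidden node playing the role of $z$), so by Part~\ref{decide-when-0-hidden-z}, $k^*$ decides $0$ by some point in $r''$. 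Uniform Agreement again forbids $i$ from deciding $1$ in $r''$, hence also in $r$.

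The main obstacle I anticipate is the bookkeeping in the two indistinguishability arguments: one must verify simultaneously that (i) the doctored adversary lies in $\Crash(t)$, (ii) the modification leaves $r_i(m+1)$ pointwise unchanged — which requires a precise accounting of which edges of $\FP$ enter $\Ga(i,m+1)$ — and (iii) the chosen witness $j$ or $k^*$ is placed in a state where either a strictly weaker instance of the current lemma (in $m$) or \cref{decide-when-0-first-round} applies. Orchestrating the failure pattern so that the witness gains \emph{more} revealing information than $i$ does, while $i$'s view is preserved verbatim, is the delicate combinatorial core of the argument.
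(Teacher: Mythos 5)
Your setup is right — the reduction to "rule out a decision on~$1$" via $Q\dom\UPz$ and \cref{lem:correct-uni}, and the general strategy of doctoring the adversary so some other process decides~$0$ and then invoking \UniAg\ — and the indistinguishability scaffolding you describe is the right kind of machinery. But there is a genuine gap at the heart of Part~\ref{decide-when-0-hidden-z}: you claim that in the modified run~$r'$, the witness~$j$ is "at an earlier stage" of the induction on~$m$, so the inductive hypothesis (or \cref{decide-when-0-first-round}) shows~$j$ decides~$0$ by~$m+1$. But $j$ is at \emph{the same} time step as~$i$: in~$r'$, $K_j\exists 0$ holds at time~$m$ (because $\node{j,m}$ sees $\node{z,m-1}$), not at time $m-1$; there is no reason that $K_j\exists 0$ should first hold earlier for~$j$ than for~$i$. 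Applying the lemma to~$j$ would use the same induction parameter~$m$, which is circular. The induction on~$m$ alone does not terminate.

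What is missing is an entire \emph{inner} induction. The paper's proof introduces the sets $Z^{z,m}_i$ (processes~$k$ with $\node{k,m}$ seen by $\node{i,m+1}$ and $\node{z,m-1}$ seen by $\node{k,m}$) and $C_i$ (processes not yet known by $\node{i,m+1}$ to have been inactive at time~$m$), and does a second induction on $|C_i\setminus Z^{z,m}_i|$, repeatedly transferring the argument to a process~$j\in Z^{z,m}_i$ whose set $Z^{z,m}_j$ is strictly larger (or moving a failure from $j$'s view to $i$'s). The descent bottoms out at $Z^{z,m}_i=C_i$, where it is $z$ \emph{itself} — not an arbitrary $j$ — that is shown to be active at~$m$ and to first know $\exists 0$ at time~$m-1$ with $\lnot K_z\cz$ at~$m-1$ and $\knownf{z,m}<t-1$: \emph{that} is the genuine recursion one step back in~$m$. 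Your proposal never puts $z$ in the deciding position, and without the inner induction there is no mechanism for reaching~$z$, so the argument cannot close. This also infects your Part~\ref{decide-when-0-low-knownf}, which you structure as a reduction to Part~\ref{decide-when-0-hidden-z}; the reduction is not unreasonable (the paper does a kindred two-hop indistinguishability argument in its $Z^{z,m}_i=\{i\}$ case, using two extra crash failures afforded by $\knownf{i,m+1}<t-1$), but both parts in the paper's proof are threaded through the same $(m, |C_i\setminus Z^{z,m}_i|)$ double induction, and you have no analogue of the inner loop.

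Also, one small accounting point that would need repairing even were the inner induction in place: in your Part~\ref{decide-when-0-hidden-z} construction you must be able to make the chosen $j$ satisfy $\lnot K_j\cz$ at time~$m$ (else the lemma/\cref{decide-when-0-first-round} premises fail for~$j$), and this requires tracking $\knownf{j,m+1}$ explicitly through each adversary modification — in the paper's Case~I this count drops by one (a hidden node becomes seen) and in Case~II it rises by one (a seen node becomes hidden); the budgets $\knownf{\cdot}<t$ versus $\knownf{\cdot}<t-1$ in the two parts exist precisely to make that bookkeeping close. Your proposal gestures at this ("a precise accounting of which edges \ldots") but does not do it, and without the inner induction there is no well-founded measure to anchor it on.
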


\begin{proof}
We prove the lemma by induction on $m$, with the base and the step sharing the same proof (as will be seen below, the conceptual part of an induction base will be played, in a sense, by \cref{decide-when-0-first-round}).

We prove both parts together, highlighting local differences in reasoning for the different parts as needed. For \cref{decide-when-0-low-knownf}, we denote by $z$ an arbitrary process s.t.\ $K_z\exists0$ holds at time $m-1$ and s.t. $\node{z,m\!-\!1}$ is seen by $\node{i,m}$. (As $m>0$, such a process must exist for $i$ to know $\exists0$ at time $m$ for the first time; nonetheless, unlike when proving \cref{decide-when-0-hidden-z},  it is not guaranteed when proving this part that $\node{z,m}$ is not seen by $\node{i,m\!+\!1}$.)

We first note that by \cref{lem:correct-uni} and by definition of $\UPz$, $i$ decides $0$ at $(\UPz[\alpha],m\!+\!1)$. As $Q\dom\UPz$, we thus have that $i$ must decide upon some value in $r$ by time $m\!+\!1$. By \cref{lem:correct-uni}, the precondition for deciding $0$ is not met by $i$ at $(r,m)$. Therefore, it is enough to show
that $i$ does not decide $1$ before or at time $m\!+\!1$ in $r$ in order to show that $i$ decides $0$ at $(r,m\!+\!1)$.

Let $Z^{z,m}_i$ be the set of processes $k$ s.t.\ $\node{k,m}$ is seen by $\node{i,m\!+\!1}$ in $r$ and s.t.\ $\node{z,m-1}$ is seen by $\node{k,m}$ in $r$. (By definition,
$i \in Z^{z,m}_i$.)
Let $C_i$ be the set of all processes $k$ s.t.\ $\node{k,m}$ is either seen by, or hidden from $\node{i,m\!+\!1}$ (i.e.\ the set of nodes that $\node{i,m\!+\!1}$ does not know to be inactive
at time $m$). Note that by definition, $Z^{z,m}_i\subseteq C_i$.
We first consider the case in which $Z^{z,m}_i\supsetneq\{i\}$, and prove the $m$-induction step (for the given $m$) for this case by induction on $|C_i \setminus Z^{z,m}_i|$.

Base: $Z^{z,m}_i=C_i$. In this case, $\node{i,m\!+\!1}$ does not know that $z$ fails 
at time $m\!-\!1$ .
Thus, 
$z \in C_i$ and therefore $z \in Z^{z,m}_i$. 
It follows that $\node{z,m}$ is seen by
$\node{i,m\!+\!1}$ and therefore
the second condition of \cref{decide-when-0-hidden-z} does not hold. Thus, the condition of 
\cref{decide-when-0-low-knownf} holds: 
$\knownf{i,m\!+\!1}  < t\!-\!1$. Furthermore, we thus have that $z$ is active 
at time $m$. We now argue that $z$ decides $0$
at $(r,m)$, which completes the proof of the base case,
as by \UniAg~$i$ can never decide $1$ during $r$. We reason by cases; for both cases, note that since $\node{z,m}$ is seen by $\node{i,m\!+\!1}$, 
we have that 
$\knownf{z,m} \le \knownf{i,m\!+\!1}  < t\!-\!1$.

\begin{itemize}
\item
If $m=1$: As $K_z\exists0$ at time $m\!-\!1=0$, $z$ has initial value $0$.
As $\knownf{z,m}<t\!-\!1$, we have that $t>1$.
By \cref{decide-when-0-first-round-no-more-zeros} of \cref{decide-when-0-first-round} (for $i=z$), we thus have that $z$ decides $0$ at $(r,1)=(r,m)$.
\item
Otherwise, $m\!>\!1$. In this case, as $\node{z,m\!-\!2}$ is seen by $\node{i,m\!-\!1}$, and as $K_i\exists0$ holds at time $m$ for the first time, we have that $K_z\exists0$ holds at time
$m\!-\!1$ for the first time. Similarly, as $\node{z,m\!-\!1}$ is seen by $\node{i,m}$, and as $K_i\cz$ does not hold at time $m$, we have that $K_z\cz$ does not hold at time $m\!-\!1$.
By \cref{decide-when-0-low-knownf} of the $m$-induction hypothesis (for $i=z$), $z$ decides $0$ at $(r,m)$.
\end{itemize}

Step: Let $\{i\}\subsetneq Z^{z,m}_i \subsetneq C_i$, and assume that the claim holds whenever $Z^{z,m}_i$ is of larger size.
For \cref{decide-when-0-hidden-z}, note that $j \in Z^{z,m}_i$, for $j$ as defined in the conditions for that part; for \cref{decide-when-0-low-knownf}, let $j \in Z^{z,m}_i$ be arbitrary.
Analogously to the proof of the induction step in the proof of \cref{decide-when-0-first-round-more-zeros} of \cref{decide-when-0-first-round}, we reason by cases. For the time being, assume that the conditions
of \cref{decide-when-0-low-knownf} hold, i.e.\ that $\knownf{i,m\!+\!1}<t\!-\!1$.

\begin{enumerate}[label=\Roman*.]
\item
If there exists a process $k \in C_i$ s.t.\ $\node{k,m}$ is hidden from $\node{i,m\!+\!1}$, then there exists a run $r'$ of $Q$, s.t.\ \ \emph{1)} $r'_i(m\!+\!1)=r_i(m\!+\!1)$,\ \ \emph{2)}~$j$ is active 
at $(r',m\!+\!1)$,\ \ \emph{3)}~$\node{z,m-1}$ is seen by $\node{k,m}$ in $r'$, and\ \ \emph{4)}~$Z^{z,m}_j = Z^{z,m}_i\!\cup\!\{k\}$ and $C_j=C_i$ in $r'$. (Note that by definition, $Z^{z,m}_i$ and $C_i$ have the same values in both~$r$ and~$r'$.)
We note that $\knownf{j,m\!+\!1}=\knownf{i,m\!+\!1}-1$ in~$r'$, and that by definition $\knownf{i,m\!+\!1}$ is the same in both $r$ and $r'$.
By the 
inductive hypothesis for $Z^{z,m}_j$ (i.e., for $j$ w.r.t.~$z$ at time~$m$), 
$j$ decides $0$ at $(r',m\!+\!1)$, and therefore by \UniAg, $i$ cannot decide $1$ in $r'$,
and therefore it cannot decide $1$ before or at $m\!+\!1$ in $r'$, and the proof is complete.

\item
Otherwise, for each process $k \in C_i$, $\node{k,m}$ is seen by $\node{i,m\!+\!1}$. As $Z^{z,m}_i\subsetneq C_i$,
there exists a process~$k \ne i$ s.t.\ $\node{k,m}$ is seen by $\node{i,m\!+\!1}$ but s.t.\ $\node{z,m\!-\!1}$ is hidden from $\node{k,m}$ (thus $k \ne j$). Hence, and since $\knownf{i,m\!+\!1}<t$, there exists a run $r'$ of $Q$, s.t.\ \ \emph{1)} $r'_i(m\!+\!1)=r_i(m\!+\!1)$,\ \ \emph{2)}~$j$ is active 
at $(r',m\!+\!1)$,\ \ \emph{3)}~$\node{k,m}$ is hidden from $\node{j,m\!+\!1}$ in~$r'$, and\ \ \emph{4)}~$Z^{z,m}_j=Z^{z,m}_i$ and $C_j\supseteq C_i$ in $r'$. (Once again, $Z^{z,m}_i$ and $C_i$ have the same values
in both $r$ and $r'$.) We note that $\knownf{j,m+1}=\knownf{i,m+1}+1$ in $r'$, and that once more, by definition, $\knownf{i,m+1}$ is the same in both $r$ and $r'$.
By Case I (for~$i=j$), and since Case I uses the 
inductive hypothesis for $Z^{z,m}_j$ with one less failure, we conclude that 
$j$ decides~$0$ at $(r',m\!+\!1)$.
Therefor, 
by \UniAg, $i$ cannot decide $1$ at $(r',m\!+\!1)$, and thus it cannot decide~$1$ before or at $m+1$ in $r$, and the proof is complete.
\end{enumerate}

To show that the $Z^{z,m}_i$-induction step also holds under the conditions of \cref{decide-when-0-hidden-z}, we observe that since $\node{z,m}$  is not seen by $\node{i,m\!+\!1}$ in this case, the amount of invocations of Case II
(which uses Case I with one additional known failure) before reaching the $Z^{z,m}_i$-induction base is strictly smaller than that of Case I (which uses the  $Z^{z,m}_i$-induction hypothesis with
one less known failure), and therefore the $Z^{z,m}_i$-induction base is reached with less known failures, i.e.\ with less than $t-1$ known failures, i.e.\ the conditions of \cref{decide-when-0-low-knownf} hold at that point.

Finally, we consider the case in which $Z^{z,m}_i=\{i\}$. As any $j$ as in \cref{decide-when-0-hidden-z} satisfies $j \in Z^{z,m}_i$, we have that the conditions
of \cref{decide-when-0-low-knownf} hold, i.e.\ $\knownf{i,m\!+\!1}<t\!-\!1$. Furthermore, in we have that $\node{z,m}$ is not seen by $\node{i,m\!+\!1}$ (otherwise, $z \in Z^{z,m}_i$). As $\knownf{i,m\!+\!1}<t\!-\!1<n\!-\!2$, there exist two distinct processes $j,k \ne i$ that are not known to $\node{i,m\!+\!1}$ to fail (and thus $i,j,k,z$ are distinct). Thus, $\node{j,m}$ and
$\node{k,m}$ are seen by $\node{i,m\!+\!1}$.

By definition of $j,k$, there exists a run $r'$ of $Q$, s.t.\ \ \emph{1)} $r'_i(m\!+\!1)=r_i(m\!+\!1)$,\ \ \emph{2)}~$k$ never fails in $r'$,\ \ \emph{3)}~$j$ fails at $(r',m)$ before sending any messages,\ \ \emph{4)} $i$ fails at $(r',m+1)$, immediately after deciding but before sending any messages, and\ \ \emph{5)} the faulty processes in~$r'$ are those known by $\node{i,m}$ to fail in $r$, and in addition $i$ and $j$. We note that by definition, $\knownf{i,m\!+\!1}$ is the same in $r$ and $r'$, even though the
number of failures in $r'$ is $\knownf{i,m\!+\!1}+2$.
We notice that there exists a run $r''$ of $Q$, s.t.\ \ \emph{1)}~$r''_k(m')=r''_k(m')$ \underline{for all} $m'$,\ \ \emph{2)} $k$ never fails in $r''$,\ \ \emph{3)} $\node{z,m-1}$ is seen by both $\node{i,m}$ and $\node{j,m}$ in $r''$,\ \ \emph{4)} $j$ fails at $(r'',m)$ while successfully sending a message only to $i$ (and therefore both $j \in Z^{z,m}_i$ and $\knownf{i,m+1}<t-1$ in $r''$), and\ \ \emph{5)} $i$ fails at $(r'',m+1)$, immediately after deciding but before sending out any messages.
By the proof for the case in which $Z^{z,m}_i\supsetneq\{i\}$ ($j\in Z^{z,m}_i$), $i$ decides~$0$ at $(r'',m\!+\!1)$, and therefore $k$ can never decide $0$ during $r''$, and therefore neither during~$r'$. As $k$ never fails during~$r'$, by \Decision\ it must thus decide $0$ at some point during~$r'$. Therefore, by \UniAg, $i$ cannot decide~$1$ before or at $m\!+\!1$ in $r'$, and thus it does not decide~$1$ before or at $m+1$ in $r$, and the proof is complete.
\end{proof}

Now that we have established when processes must decide $0$ in any protocol dominating $\Pz$, we can deduce when processes cannot decide in any such protocol.

\begin{lemma}[No Earlier Decisions when $K_i\exists0$]\label{no-earlier-k0}
Let $Q\dom\UPz$ solve Uniform Consensus, let $r$ be a run of $Q$, let $m$ be a time, and let $i$ be a process.
If at time $m$ in $r$ we have $K_i\exists0$, but $\lnot K_i\cz$, then $i$ does not decide at $(r,m)$.
\end{lemma}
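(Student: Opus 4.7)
My plan is a proof by contradiction. Assume $i$ decides at $(r,m)$, and split on the decision value. If $i$ decides~$0$, then \cref{lem:correct-uni} applied to $Q$ forces $K_i\cz$ at $(r,m)$, directly contradicting the hypothesis $\lnot K_i\cz$.

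Suppose instead that $i$ decides~$1$ at $(r,m)$. The idea is to exhibit a run $r'\in R_Q=R(Q,\gammacr)$ with $r'_i(m)=r_i(m)$ in which some correct process~$k$ decides~$0$; since $r'$ and $r$ agree at $\node{i,m}$, $i$ still decides~$1$ at $(r',m)$, so such a $k$ violates \UniAg. To build $r'$, I use the $0$-chain $j_0,\ldots,j_d=i$ witnessing $K_i\exists 0$ at $(r,m)$ (given by \cref{0-chain}): I take $r'$ whose failure pattern agrees with that of $r$ on every node visible to $\node{i,m}$, but otherwise avoids all non-forced failures, so that in particular no process fails after time~$m$ and every process hidden from $\node{i,m}$ whose crash is not implied by $r_i(m)$ remains correct. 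This simultaneously guarantees $r'_i(m)=r_i(m)$, keeps~$i$ correct in $r'$, and preserves the $0$-chain as a visible $0$-chain to $\node{i,m}$.

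I then track the dissemination of $\exists 0$ in $r'$ past time~$m$. In round $m+1$, $i$ forwards $r_i(m)$ to every still-active process, so every such process~$k$ has $K_k\exists 0$ at $(r',m+1)$; in round $m+2$, clause~(a) of \cref{lem:u-know} then yields $K_k\cz$ at $(r',m+2)$ for every active~$k$. Letting $(k,m'')$ be the earliest point in $r'$ at which $K_k\cz$ first holds for an active process~$k$ (so $m''\le m+2$), and letting $m'=m''-1$ be the time at which $K_k\exists 0$ first held, I invoke \cref{decide-when-0-first-round} when $m'=0$ and \cref{decide-when-0} when $m'>0$ to force~$k$ to decide~$0$ at~$(r',m'')$, giving the sought contradiction with \UniAg. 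The boundary case $m=0$ is handled separately: when $t>0$ it follows from \cref{undecided-at-0}, and when $t=0$ the hypothesis is vacuous since then $\cz\equiv\exists 0$.

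The main obstacle is verifying the technical side hypotheses of \cref{decide-when-0-first-round,decide-when-0} for the chosen $(k,m'')$ in $r'$: either the low-known-failure bound $\knownf{k,m''}<t-1$, or the specific hidden-path/seen-by conditions of \cref{decide-when-0-hidden-z}. Since $r'$ has at most $f=\knownf{i,m}\le t$ failures in total (no new failures are introduced past time~$m$), I expect to tune $r'$ (e.g., by permitting one additional round of dissemination when $f$ is close to $t$, and by selecting $k$ so as to minimise $\knownf{k,m''}$) so that at least one of these clauses applies. A further routine point is to ensure $k\ne i$ so that the disagreement in $r'$ is genuine; this is immediate unless $n-\knownf{i,m}=1$, a degenerate case in which the construction degrades gracefully since $i$ being the sole active process together with $\lnot K_i\cz$ already precludes $i$ from deciding by the same reachability argument applied at a later time.
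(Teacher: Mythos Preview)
Your split on the decision value is sound, and the $\decideiZ$ branch is correct via \cref{lem:correct-uni}. The $\decideiO$ branch, however, has a real gap precisely at the point you flag as ``the main obstacle'': you never establish the bound on $\knownf{i,m}$ needed to invoke \cref{decide-when-0}, and your proposed ``tuning'' cannot manufacture it.

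The missing observation is that the hypotheses already force $\knownf{i,m}<t-1$. Since $\lnot K_i\cz$ at $m$, clause~(a) of \cref{lem:u-know} fails, so $\lnot K_i\exists 0$ at $m-1$; combined with $K_i\exists 0$ at $m$, there is at least one process~$z$ with $K_z\exists 0$ at $m-1$ that is seen by $\node{i,m}$. But clause~(b) of \cref{lem:u-know} also fails, so the number of such~$z$ is strictly less than $t-\knownf{i,m}$. Hence $1\le t-\knownf{i,m}-1$, i.e.\ $\knownf{i,m}<t-1$. Without this, your worry about ``$f$ close to $t$'' is unresolvable: if the total number of failures in $r'$ were $t-1$ or $t$, then after one more round every active process would know all failures and neither clause of \cref{decide-when-0} would fire. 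With this bound in hand, your entire detour through a separate process~$k$, the dissemination to time $m{+}2$, and the $k\ne i$ worry all become unnecessary: the degenerate case $n-\knownf{i,m}=1$ is impossible, and the cleanest witness is $k=i$ itself.

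This is exactly what the paper does. It takes $r'$ to be the run agreeing with $r$ at $\node{i,m}$ in which only the failures already known to $\node{i,m}$ occur; then $\knownf{i,m+1}=\knownf{i,m}<t-1$ in~$r'$, $K_i\exists 0$ first holds at~$m$, and $K_i\cz$ first holds at~$m{+}1$ (by clause~(a)). \Cref{decide-when-0-low-knownf} of \cref{decide-when-0} then gives that $i$ decides~$0$ at $(r',m{+}1)$, so $i$ is undecided at $(r',m)$ and hence at $(r,m)$. No appeal to \UniAg\ is needed, and the case analysis on the decision value can be dropped entirely.
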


\begin{proof}
If $m\!=\!0$, then by \cref{lem:u-know} and since $\lnot K_i\cz$ at $m\!=\!0$ (even though $K_i\exists0$),
we have $t\!>\!0$. Thus, by \cref{undecided-at-0}, $i$ does not decide at $(r,m)$. Assume henceforth, therefore, that $m\!>\!0$.

As $\lnot K_i\cz$, we have that by \cref{lem:u-know}, $\lnot K_i\exists0$ at time $m\!-\!1$. Thus, there exists a process $z$ s.t.\
$K_z\exists0$ at~$m\!-\!1$, and $\node{z,m\!-\!1}$ is seen by $\node{i,m}$. In turn, by \cref{lem:u-know}, we have that $\knownf{i,m}<t-1$.
There exists a run~$r'$ of $Q$, s.t.\ \ \emph{1)} $r'_i(m)\!=\!r_i(m)$, and\ \ \emph{2)} the faulty processes in $r'$ are those known by~$\node{i,m}$ to fail in $r$. We henceforth reason about $r'$. By definition of $r'$, $\knownf{i,m\!+\!1}=\knownf{i,m}<t\!-\!1$
(by definition, the value of $\knownf{i,m}$ is the same in both $r$ and $r'$). Thus, by \cref{decide-when-0-low-knownf} of \cref{decide-when-0}, $i$ decides $0$ at $(r',m\!+\!1)$, and hence $i$ does not decide at $(r',m)$,
and therefore neither does it decide at $(r,m)$.
\end{proof}

\begin{lemma}[No Earlier Decisions when $\lnot K_i\exists0$]\label{no-earlier-k1}
Assume that  $t\!>\!0$.
Let $Q\dom\UPz$ solve Uniform Consensus, let $r$ be a run of $Q$, let $m$ be a time, and let $i$ be a process.
If there exists a hidden path w.r.t. $\node{i,m}$ in $r$, and if at time $m$ in $r$ we have $\lnot K_i\exists0$, then $i$ does not
decide at $(r,m)$.
\end{lemma}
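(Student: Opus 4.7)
The plan is to argue by contradiction: suppose $i$ decides at $(r,m)$. The decision value cannot be~$0$, since $\lnot K_i\exists 0$ implies $\lnot K_i\cz$ (as $\cz$ is strictly stronger than $\exists 0$), contradicting \cref{lem:correct-uni}. The bulk of the work is to rule out a $1$-decision. To that end, I construct an alternate run $r' = Q[\alpha']$ with $r'_i(m) = r_i(m)$ in which some nonfaulty process decides~$0$; since $r'$ and $r$ agree at $\node{i,m}$, this would force $i$ to also decide~$1$ in~$r'$, violating \UniAg. The case $m=0$ is immediate from \cref{undecided-at-0}, so henceforth I assume $m\ge 1$.

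I build $\alpha'$ from the hidden path $\node{j_0,0},\ldots,\node{j_m,m}$: set $j_0$'s initial value to~$0$; for each $k=0,\ldots,m-2$ have $j_k$ crash in round~$k+1$ sending only to $j_{k+1}$; have $j_{m-1}$ crash in round~$m$ sending to both $j_m$ and an auxiliary nonfaulty process $p\notin\{i,j_0,\ldots,j_m\}$; make $j_m$ nonfaulty (still sending to~$i$ in round~$m$, as it does in~$r$); and match~$r$ on any other initial values and failures forced by $r_i(m)$. As in the proof of \cref{lem:rev}, each $j_k$ with $k<m$ is already faulty in~$r$ (otherwise $\node{j_k,k}$ would be revealed to $\node{i,m}$), so turning these into chain-failures in~$r'$ does not exceed~$r$'s failure budget; one then verifies $\alpha'\in\gammacr$ and $r'_i(m)=r_i(m)$ by the same argument used in \cref{lem:rev}. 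The auxiliary~$p$ exists whenever $n\ge m+3$.

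In~$r'$, the chain makes $K_p\exists 0$ first hold at time~$m$, and \cref{lem:u-know} shows that $K_p\cz$ first holds at time~$m+1$ provided $m<t$: at time~$m$, clause~(a) of \cref{lem:u-know} fails because $K_p\exists 0$ does not hold at~$m-1$, and clause~(b) fails because the only witness~$j$ with $K_p(K_j\exists 0\text{ at }m-1)$ is $j=j_{m-1}$, whereas $t-\knownf{p,m}=t-m+1>1$. I then invoke \cref{decide-when-0-hidden-z} of \cref{decide-when-0} on~$p$ at time~$m+1$, taking $z=j_{m-1}$ and $j=j_m$: $\knownf{p,m+1}=m<t$; $\node{z,m-1}$ is seen by both $\node{p,m}$ and $\node{j_m,m}$ directly from the chain; $\node{z,m}$ is seen by no node at time~$m+1$ since $j_{m-1}$'s time-$m$ state is $\frownie$ and $j_{m-1}$ sends no round-$(m+1)$ messages; and $\node{j_m,m}$ is seen by $\node{p,m+1}$ since $j_m$ is nonfaulty and broadcasts in round~$m+1$. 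Hence~$p$ decides~$0$ at $(r',m+1)$, and by \UniAg~$i$ cannot decide~$1$ in~$r'$, closing the case $m<t$.

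The main obstacle is the boundary case $m=t$ (together with the subordinate corner cases $n\le m+2$ in which the auxiliary~$p$ need not even exist). When $m=t$, the chain exhausts the entire failure budget, so $\knownf{p,m+1}=t\not<t$, $K_p\cz$ in fact already holds at time~$m$, and \cref{decide-when-0} no longer applies in the form used above. I plan to handle this by induction on~$m$: the inductive hypothesis (the present lemma for all $m'<m$), together with \cref{no-earlier-k0}, is used to rule out any decision by~$j_m$ in~$r'$ at times $m'<m$; at $m'=m$ itself, $K_{j_m}\cz$ holds, so $\UPz$ (and hence~$Q$ by $Q\dom\UPz$) forces $j_m$ to decide by time~$m$. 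A final reachability argument, in the spirit of the continual common knowledge arguments alluded to in the introduction, rules out~$j_m$ deciding~$1$ at this point by exhibiting yet another indistinguishable sub-run in which any potential ``witness of~$\exists 1$'' is made faulty, thereby forcing the decision to be~$0$. I expect this last step to be the most delicate part of the proof and would defer it to the end.
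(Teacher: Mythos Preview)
Your strategy---build an indistinguishable run in which some process is forced to decide~$0$ via \cref{decide-when-0}---is the right one and matches the paper's approach in spirit. However, the case analysis has a genuine gap. The claim $\knownf{p,m+1}=m$ is not correct: in order to maintain $r'_i(m)=r_i(m)$, the run $r'$ must reproduce \emph{every} failure that $i$ detects in $r$, not merely those on the hidden path, so in fact $\knownf{p,m+1}=\knownf{i,m}$, which may well exceed~$m$. The hypothesis $\knownf{p,m+1}<t$ required by \cref{decide-when-0-hidden-z} therefore fails whenever $\knownf{i,m}=t$, and this can happen with $m$ arbitrarily smaller than~$t$ (take many crashes in round~$1$ followed by a short hidden path). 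The right dichotomy is thus $\knownf{i,m}<t$ versus $\knownf{i,m}=t$, as the paper uses, not $m<t$ versus $m=t$. A secondary issue: your auxiliary $p\notin F\cup\{i,j_m\}$ (where $F$ is the set of processes $i$ knows to have failed) need not exist when $\knownf{i,m}=n-2$, which is possible under $t=n-1$.

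The paper sidesteps the auxiliary process entirely. For $\knownf{i,m}<t$ it takes the hidden node $z$ at level $m{-}1$ and a process $j$ whose $(m{-}1)$-node $i$ already sees (your $j_m$ would serve), spends the one spare failure slot on crashing $i$ itself at time~$m$ in $r'$, has $z$ broadcast to everyone but~$i$, and then passes to a second run $r''$ (indistinguishable to~$j$) in which $z$ survives long enough to decide~$0$ by \cref{decide-when-0-first-round} or \cref{decide-when-0}. For $\knownf{i,m}=t$ no spare slot remains, so the paper instead makes \emph{all} hidden level-$(m{-}1)$ nodes carry $\exists 0$; this yields $K_j\cz$ already at time~$m$ via clause~(b) of \cref{lem:u-know}, and a chain of four indistinguishable runs is then threaded, at each step trading one process's failure for another's survival so as to stay within the budget~$t$, until a $0$-decider is produced. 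Your deferred sketch (``a final reachability argument \ldots\ rules out $j_m$ deciding~$1$'') points in this direction but does not supply the crucial swapping idea, and the induction on~$m$ you propose is neither used nor needed in the paper's argument.
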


\begin{proof}
As $\lnot K_i\exists0$ at time $m$, then by \Validity, $i$ does not decide $0$ at $(r,m)$. Thus, it is enough to show that $i$ does not decide $1$ at $(r,m)$ in order to complete the proof.
If $m\!=\!0$, then by \cref{undecided-at-0}, $i$ does not decide $1$ at $(r,m)$ either. Assume henceforth, therefore, that $m\!>\!0$.

As there exists a hidden path w.r.t. $\node{i,m}$, there exist processes $z,j \ne i$ s.t.\ $\node{z,m\!-\!1}$ is hidden from $\node{i,m}$ and s.t.\
$\node{j,m\!-\!1}$ is seen by $\node{i,m}$.

We first consider the case in which $\knownf{i,m}<t$.
In this case, there exists a run $r'\!=\!Q[\beta]$ of $Q$, s.t.\ all of the following hold in $r'$:
\begin{itemize}
\item
$r'_i(m)=r_i(m)$.
\item
$z$ is the unique process that knows $\exists0$ at $m\!-\!1$, and knows so then for the first time, either having initial value $0$ (if $m\!=\!1$) or (as explained in the Nonuniform Consensus section) seeing only a single node that knows $\exists0$ at $m\!-\!2$ (if $m\!>\!1)$.
\item
$z$ fails at $(r',m\!-\!1)$, successfully sending messages to all nodes except for $i$.
\item
The faulty processes in $r'$ are those known by $\node{i,m}$ to fail in $r$, and in addition $i$, which fails at time $m$ without
sending out any messages. In particular, $j$ never fails.
\end{itemize}

We henceforth reason about $r'$. First, we note that $\node{j,m\!+\!1}$ does not know that $z$ fails at $m\!-\!1$ (as opposed to at~$m$). As $\node{j,m}$ sees $\node{z,m\!-\!1}$,
as $K_z\exists0$ at $m\!-\!1$, and as $j$ never fails, by \cref{lem:u-know}
we have that $K_j\cz$ at $(r',m\!+\!1)$. Thus, $j$ decides at $(\UPz[\beta],m\!+\!1)$, and so $j$ must decide
before or at~$m\!+\!1$ in $r'$. As $r_i(m)\!=\!r'_i(m)$, then by \UniAg\ it is enough to show that~$j$ does not decide $1$ up to time
$m+1$ in $r'$ in order to complete the proof.

There exists a run $r''$ of $Q$, s.t.\ \ \emph{1)} $r''_j(m\!+\!1)=r'_j(m\!+\!1)$, and\ \ \emph{2)}~the only difference between $r''$ and $r'$ up to time $m$ is that in $r''$, $z$ fails only at time $m$, after 
deciding 
but without sending a message to $j$. By \UniAg, it is enough to show that $z$ decides $0$ at $(r'',m)$ in order to complete the proof.

We henceforth reason about $r''$. As $z$ does not know at $m$ that neither $z$ nor $i$ fail, we have $\knownf{z,m\!-\!1}\le\knownf{z,m}<t\!-\!1$.
Thus, $t\!>\!1$. If $m\!=\!1$, we therefore have by \cref{decide-when-0-first-round-no-more-zeros} of \cref{decide-when-0-first-round} that $z$ decides $0$ at $(r'',m)$. Otherwise, $m\!>\!1$.
As $K_z\exists0$ at $m\!-\!1$ for the first time, as $\node{z,m\!-\!1}$ sees only one node at $m\!-\!1$ that knows $\exists0$, and as $\knownf{z,m}<t\!-\!1$, by \cref{lem:u-know} we have $\lnot K_z\cz$ at $m\!-\!1$. Thus, by \cref{decide-when-0-low-knownf} of \cref{decide-when-0} (for $i=z$), $z$ decides $0$ at $(r'',m)$. Either way, the proof is complete.

We now consider the case in which $\knownf{i,m}=t$.
There exists a run $r'\!=\!Q[\beta]$ of $Q$, s.t.\ all of the following hold:
\begin{itemize}
\item
$r'_i(m)=r_i(m)$.
\item
All processes $k$ s.t.\ $\node{k,m\!-\!1}$ is hidden from $\node{i,m}$ (including $k=z$) know $\exists0$ at $(r',m\!-\!1)$, either having initial value $0$ (if $m\!=\!1$) or all seeing only a single node that knows $\exists0$ at $m\!-\!2$ (and which fails at time $m\!-\!2$ without being seen by $\node{i,m}$) --- denote this node by $z'$.
\item
All such processes fail at time $m\!-\!1$, successfully sending messages to all nodes except for $i$.
\item
The faulty processes failing in $r'$ are those known by $\node{i,m}$ to fail in $r$. In particular, there are $t$ such processes.
\end{itemize}
We henceforth reason about $r'$.
We note that as $i$ never fails, $\knownf{i,m\!-\!1}\le\knownf{j,m}$ (equality can actually be shown to hold here, but we do not need it).
As the number of nodes at $m\!-\!1$ knowing $\exists0$ that are seen by $\node{j,m}$ equals $\knownf{i,m}-\knownf{i,m\!-\!1}\ge t-\knownf{j,m}$ (by the above
remark, equality holds here as well), we have by \cref{lem:u-know} that $K_j\cz$ at $m$, and therefore $j$ decides
at $(\UPz[\beta],m)$; thus, it must decide before or at $m$ in $r'$. As $r_i(m)\!=\!r'_i(m)$, by \UniAg\ it is enough
to show that $j$ does not decide $1$ up to time $m$ in $r'$ in order to complete the proof.

We proceed with an argument similar in a sense to those of \cref{decide-when-0-first-round-more-zeros} of \cref{decide-when-0-first-round} and the inner induction in the proof of \cref{decide-when-0}.

As $\node{z,m\!-\!1}$ is seen by $\node{j,m}$, there exists a run $r''$ of $Q$, s.t.\ \ \emph{1)} $r''_j(m)\!=\!r'_j(m)$, and\ \ \emph{2)} the only difference between $r''$ and $r'$ up to time $m$
is that in $r'$, $z$ never fails, but rather $i$ fails at $m\!-\!1$ after sending a message to $j$ but without sending a message to $z$.
We note that there are $t$ processes failing throughout $r''$. 
We henceforth reason about $r''$. If $m\!=\!1$, then $z$ has initial value $0$ and if $m\!>\!1$, then $\node{z,m\!-\!1}$ sees $\node{z',m\!-\!2}$;
either way, by \cref{lem:u-know}, $K_z\cz$ at $(r'',m)$ and therefore $z$ must decide before or at time $m$. Thus, it is enough to show that $z$ does not decide $1$ up to time $m$ in $r''$
in order to complete the proof.

As $\node{i,m\!-\!1}$ is not seen by $\node{z,m}$, there exists a run $r'''$ of $Q$, s.t.\ \ \emph{1)} $r'''_z(m)\!=\!r''_z(m)$, and\ \ \emph{2)} the only difference between $r'''$ and $r''$ up to time $m$
is that in $r'''$, $\node{i,m-1}$ sees $\node{z',m\!-\!2}$ (or, if $m=1$, then the difference is that $i$ has initial value $0$); we note that $\node{i,m\!-\!1}$ is still seen by $\node{j,m}$.
We note that there are $t$ processes failing throughout $r'''$.
Observe that the number of nodes at $m\!-\!1$ knowing $\exists0$ that are seen by $\node{j,m}$ in $r'''$  is greater than in $r'/r''$ (between which $j$ at $m$ cannot distinguish), however $\knownf{j,m}$ remains the same between $r'/r''$ and~$r'''$; thus, $K_j\cz$ at $m$ in $r'''$ as well, and therefore $j$ must decide before or at time $m$ in $r'''$. Thus, it is enough to show that $j$
does not decide $1$ up to time $m$ in $r'''$ in order to complete the proof. We henceforth reason about~$r'''$. 

As $\node{i,m\!-\!1}$ is seen by $\node{j,m}$, there exists a run $r''''$ of $Q$, s.t.\ \ \emph{1)} $r''''_j(m)=r'''_j(m)$, and\ \ \emph{2)} the only difference between $r''''$ and $r'''$ up to time $m$
is that in $r''''$, $i$ does not fail (and is thus seen by $\node{z,m}$).
We note that there are~$t-1$ processes failing throughout $r''''$, and thus in particular $\knownf{z,m}<t$. If $m=1$, then by \cref{decide-when-0-first-round-more-zeros} of \cref{decide-when-0-first-round} (for~$i=z$ and $j=i$), $z$ decides $0$ in $(r'''',m)$. Otherwise, i.e.\ if $m\!>\!1$, by \cref{decide-when-0-hidden-z} of \cref{decide-when-0} (for $i=z$, $z=z'$, and~$j=i$), $z$ decides $0$ in $(r'''',m)$. Either way, the proof is complete.
\end{proof}

From \cref{no-earlier-k0,no-earlier-k1}, we deduce sufficient conditions for unbeatability of Uniform Consensus protocols dominating $\UPz$; these conditions also become necessary
if it can be shown that there exists some Uniform Consensus protocol dominating $\UPz$ that meets them, as we indeed show momentarily for $\UOptZ$.

\begin{lemma}\label{cor:uni-opt}
Assume that  $0<t<n$.
A protocol $Q\dom\UPz$ that solves Uniform Consensus and in which a node $\node{i,m}$ decides whenever any of the following hold at $m$, is an unbeatable Uniform Consensus protocol.
\begin{itemize}
\item
$K_i\cz$.
\item
No hidden path w.r.t.\ $\node{i,m}$ exists, and $\lnot K_i\exists0$.
\end{itemize}
\end{lemma}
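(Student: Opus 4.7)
The plan is to argue by contradiction, using \cref{no-earlier-k0,no-earlier-k1} as the essential ammunition. Assume that some uniform-consensus protocol $P$ strictly dominates the given $Q$. Since the paper has already argued that it suffices to consider fips, I would take $P$ to be a fip; then transitivity of $\dom$ together with $Q\dom\UPz$ yields $P\dom\UPz$. Strict domination produces an adversary $\alpha$ and a process $i$ such that $i$ decides in $r'=P[\alpha]$ at some time $m$ strictly earlier than in $r=Q[\alpha]$.

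Next I would apply the contrapositives of \cref{no-earlier-k0} and \cref{no-earlier-k1} to the decision of $i$ at $\node{i,m}$ in $r'$: the former excludes the combination $K_i\exists0\wedge\lnot K_i\cz$, while the latter excludes $\lnot K_i\exists0$ paired with the existence of a hidden path w.r.t.\ $\node{i,m}$. Taking the conjunction of these two exclusions and simplifying leaves exactly the disjunction built into the hypothesis on $Q$: either $K_i\cz$, or the conjunction of $\lnot K_i\exists0$ with the absence of a hidden path w.r.t.\ $\node{i,m}$, must hold at $(r',m)$.

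The final step is to transport this disjunction from $(r',m)$ to $(r,m)$ and then invoke the decision rule built into $Q$. Since $P$ and $Q$ are both fips on the same adversary $\alpha$, the view $\Ga_\alpha(i,m)$ coincides in $r$ and $r'$. By \cref{0-chain} and \cref{lem:u-know} the truth values of $K_i\exists0$ and $K_i\cz$ are determined by this view, and existence of a hidden path w.r.t.\ $\node{i,m}$ is by definition a property of this view. Hence whichever disjunct holds at $(r',m)$ also holds at $(r,m)$, whereupon the hypothesis on $Q$ forces $i$ to decide in $r$ by time $m$, contradicting the choice of $m$. This shows $Q$ is unbeatable.

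I expect the main obstacle to be precisely this transport step: the $K_i$ operator is technically indexed by a system, hence by a protocol, so one must justify that $K_i\exists0$ and $K_i\cz$ take the same truth value at $(r,m)$ in $R_Q$ as at $(r',m)$ in $R_P$. The graph-based characterizations established earlier (\cref{0-chain} and \cref{lem:u-know}) reduce both knowledge facts, as well as the hidden-path condition, to purely view-level properties, making the matter one of careful bookkeeping rather than new ideas.
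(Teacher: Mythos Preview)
Your proof is correct and is essentially a detailed unpacking of the paper's one-line argument (``Directly from \cref{no-earlier-k0,no-earlier-k1}''): take a strictly dominating $P$, get $P\dom\UPz$ by transitivity, apply the two lemmas to the earlier decision point in $P$, and conclude that one of $Q$'s decision conditions already holds there. Your explicit treatment of the transport step---reducing $K_i\exists0$, $K_i\cz$, and the hidden-path condition to view-level properties of $\Ga_\alpha(i,m)$ via \cref{0-chain} and \cref{lem:u-know}---fills in a point the paper leaves implicit, and your handling of it is sound.
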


\begin{proof}
Directly from \cref{no-earlier-k0,no-earlier-k1}.
\end{proof}

By \cref{cor:uni-opt}, we have that
if $\UOptZ$ solves Uniform Consensus, then it does so in an unbeatable fashion.

\begin{lemma}
$\UOptZ\dom\UPz$
\end{lemma}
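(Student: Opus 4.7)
The plan is to handle the two decision behaviours of $\UPz$ separately. Fix an adversary $\alpha$ and a process $i$ that decides at some time $m$ in $\UPz[\alpha]$; we must show that $i$ decides by time $m$ in $r\eqdef\UOptZ[\alpha]$. If $\UPz$ decides $0$ at $m$, this occurs exactly when $K_i\cz$ first holds, which is also the rule-1 trigger of $\UOptZ$, so $\UOptZ$ decides $0$ at $m$ as well. The remaining case is that $\UPz$ decides $1$ at $m=\tee+1$, meaning $\neg K_i\cz$ at every time $\le\tee+1$; assuming $i$ has not already decided in $r$ by $\tee+1$, I will show that rule 2 of $\UOptZ$ fires at $(r,\tee+1)$, i.e., both (a) $\neg K_i\exists 0$ holds at $(r,\tee+1)$ and (b) some time $k\le\tee+1$ is revealed to $\node{i,\tee+1}$ in $r$.

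For (a), I would argue by contradiction. Suppose $K_i\exists 0$ at $(r,\tee+1)$. For any run $r'$ of $\UOptZ$ with $r'_i(\tee+1)=r_i(\tee+1)$, we also have $K_i\exists 0$ at $(r',\tee+1)$, and \cref{lem:knowing0} (Dwork--Moses) forces $K_j\exists 0$ at $(r',\tee+1)$ for every process $j$ active at $\tee+1$ in $r'$. Since $\tee<n$, at least one such $j^{*}$ is in fact correct in $r'$, whence $\cz$ holds at $(r',\tee+1)$. As this holds for every such $r'$, we conclude $K_i\cz$ at $(r,\tee+1)$, contradicting the case hypothesis.

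For (b), I would reuse the hidden-path accounting from the proof of \cref{optz-f1}. If no time $k\le\tee+1$ were revealed to $\node{i,\tee+1}$, then for each $0\le k<\tee+1$ some node $\node{j_k,k}$ is not revealed; as in that proof, the $j_0,\ldots,j_\tee$ would then have to be $\tee+1$ distinct faulty processes, exceeding the cap of $\tee$ crashes in any adversary of $\gammacr$. Hence some such time is revealed, rule 2 of $\UOptZ$ fires at $(r,\tee+1)$, and $i$ decides by time $\tee+1$.

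The main obstacle is step (a): converting $\neg K_i\cz$ at $\tee+1$ into $\neg K_i\exists 0$ at the same time. This is precisely where the Dwork--Moses collapse of individual knowledge of initial values across active processes buys us something, leveraging $\tee<n$ to guarantee, in every run indistinguishable from $r$ at $\node{i,\tee+1}$, some active correct process that witnesses $\cz$; the implication would fail at earlier times, and is essentially what justifies $\UPz$ stopping at $\tee+1$ in the first place.
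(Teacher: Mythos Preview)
Your proof is correct and follows essentially the same approach as the paper's. The paper's proof is a terse one-liner that invokes precisely the two facts you establish: that at time $\tee+1$ no hidden path exists (your part (b), argued via \cref{optz-f1}), and that $K_i\exists 0$ iff $K_i\cz$ at time $\tee+1$ by \cref{lem:knowing0} (your part (a)); it then concludes by the definitions of the two protocols. You spell out the case split on the value decided in $\UPz$ and unwind the knowledge argument for (a) more explicitly, but the underlying content is identical.
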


\begin{proof}
As explained above, at time $\tee+1$ no hidden paths exist (see the proofs of \cref{optz-f1,thm:optmaj}), and furthermore, by \cref{lem:knowing0} we have at time $\tee+1$ that $K_i\exists0$ iff $K_i\cz$. The claim therefore holds by definition of $\UOptZ$ and $\UPz$.
\end{proof}

\begin{theorem}
\label{u-solve}
$\UOptZ$ ~solves Uniform Consensus in $\gammacr$. Furthermore,
\begin{itemize}
\item
If $f \ge t-1$, then all decisions are made by time $f+1$ at the latest.
\item
Otherwise, all decisions are made by time $f+2$ at the latest.
\end{itemize}
 \end{theorem}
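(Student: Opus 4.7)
The plan is to verify Validity, Uniform Agreement, and the stopping-time bound (which in turn yields Decision), leveraging \cref{lem:u-know,lem:correct-uni,0-chain} and the hidden-path counting argument of \cref{optz-f1}.

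Validity is direct from the structure of $\UOptZ$. If every initial value is $0$, then $K_i\exists 0$ holds at time $0$ for every $i$, so by clause~(a) of \cref{lem:u-know}, $K_i\cz$ holds at time $1$ and every correct process decides $0$. If every initial value is $1$, then $\lnot K_i\exists 0$ always holds, so rule~(1) never fires, and correct processes decide $1$ via rule~(2) once the stopping-time bound below is established.

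For Uniform Agreement, suppose toward contradiction that in some run $r$, process $i$ decides $0$ at time $m_i$ while process $j$ decides $1$ at time $m_j$. Rule~(2) firing at $j$ yields a revealed time $k\le m_j$ together with $\lnot K_j\exists 0$ at $m_j$. I first claim that no process $p$ active at time $k$ in $r$ can have $K_p\exists 0$ at $k$: if such $p$ existed, revealedness of $\node{p,k}$ to $\node{j,m_j}$ via clause~(1) of \cref{def:revealed} would force $K_j\exists 0$ at $m_j$, while clause~(2) would force $p$ crashed by $k$, either way a contradiction. A straightforward induction on $m\ge k$ then shows that no active process at any time $\ge k$ in $r$ knows $\exists 0$: any fresh learner at time $m$ must have received in round $m$ from some $q$ that was active at $m-1$ with $K_q\exists 0$ at $m-1$, violating the hypothesis. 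In particular, no correct process ever knows $\exists 0$ in $r$, contradicting $K_i\cz$ at $m_i$.

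For the stopping-time bound, I analyze when an active process $i$ can remain undecided at $(r,m)$. If $\lnot K_i\exists 0$ at $m$, then rule~(2) has not fired, so no time $k\le m$ is revealed, and the hidden-path counting argument of \cref{optz-f1} forces $m\le f$. Otherwise $K_i\exists 0$ at $m$; since $i$ is not yet decided, $K_i\exists 0$ cannot have held at any $m'<m$ (else clause~(a) of \cref{lem:u-know} would have triggered $K_i\cz$ at $m'+1\le m$), so $K_i\exists 0$ first holds at $m$. By \cref{0-chain} this requires a $0$-chain $j_0,\dots,j_m=i$ of length exactly $m$, and each $j_\ell$ with $\ell\le m-2$ must have crashed in round $\ell+1$ (else it would have sent $\exists 0$ to $i$, forcing $K_i\exists 0$ earlier), giving $m-1$ distinct failures and $m\le f+1$. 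Hence at $m=f+2$ we have $K_i\exists 0$ at $f+1$, so $K_i\cz$ at $f+2$ by clause~(a) and $i$ decides $0$. For the sharper $f+1$ bound when $f\ge t-1$, the only remaining case is $K_i\exists 0$ first at $m=f+1$: each $j_\ell$ with $\ell<f$ withheld its round-$\ell+1$ message from $i$, so $i$ directly detects all $f$ failures, giving $d=f$; clause~(b) of \cref{lem:u-know} then requires only $t-d=t-f\le 1$ witness $p$ with $K_i(K_p\exists 0\text{ at }f)$, which $j_f$ supplies, so $K_i\cz$ at $f+1$ and $i$ decides $0$ at $f+1$. I expect the main obstacle to be Uniform Agreement: the information-locking induction must reason globally about $r$ rather than locally from $j$'s view, and must carefully ensure that any newly-learning active process at $m$ receives its information from a \emph{sender} active at $m-1$, to which the induction hypothesis applies.
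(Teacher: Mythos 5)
Your proof is correct, and it reaches the same conclusions as the paper's, but its stopping-time argument is structured differently from the paper's proof. The paper tracks the quantity $\tilde{m}\le m$, the latest time for which a hidden path exists w.r.t.\ $\node{i,m}$; it proves $\tilde m\ge m-1$ (via clause~(a) of \cref{lem:u-know}) and $\tilde m\le \knownf{i,\tilde m}\le f$, obtaining $m\le f+1$ for an undecided process in a single unified argument, and then extracts $f=\knownf{i,m}<t-1$ for the sharper bound. You instead split on whether $K_i\exists 0$ holds at $m$: in the negative case you reuse the hidden-path counting of \cref{optz-f1} verbatim to get $m\le f$, while in the positive case you pass through \cref{0-chain} to exhibit a $0$-chain of length exactly $m$, observe that $j_0,\dots,j_{m-2}$ must each crash in their respective rounds (else $K_i\exists 0$ would have held earlier), and count $m-1$ distinct failures; for the sharper bound you note $\knownf{i,f+1}=f$ and invoke clause~(b) of \cref{lem:u-know} with $j_f$ as the sole needed witness. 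Both decompositions work; yours localizes the delicate counting inside the $K_i\exists 0$ branch, while the paper's $\tilde m$ device handles both branches at once and makes the $\knownf{\cdot}$ chain of inequalities explicit. Your Uniform Agreement argument is essentially the paper's, re-derived from scratch: instead of citing \cref{lem:rev} and the $\notnz$-persistence induction from the proof of \cref{solve}, you directly argue from \cref{def:revealed} that no process active at the revealed time $k$ knows $\exists 0$, and propagate this forward. One minor point worth making explicit: to reach the conclusion that \emph{no} correct process \emph{ever} knows $\exists 0$ (including at times $<k$, needed in case $m_i<k$), you should note that knowledge at a correct process persists, so any earlier instance would survive to time $k$ and contradict the base case. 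Your Validity argument proves the stated condition directly (all-$v$ inputs imply decisions on $v$), whereas the paper proves the stronger statement that any decided value must appear as an initial value; either suffices.
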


\begin{proof}

\Decision:
In some run of $\UOptZ$, let $i$ be a process and let $m$ be a time s.t.\ $i$ is active at $m$ but has not decided until $m$, inclusive.
Let $\tilde{m}\le m$ be the latest time not later than $m$ s.t.\ a hidden path exists w.r.t.\ $\node{i,\tilde{m}}$. We claim that as $i$ is undecided at $m$, we have $\tilde{m} \ge m-1$;
indeed, otherwise, by $i$ being undecided at $\tilde{m}+1$ despite the absence of a hidden path w.r.t.\ $\node{i,\tilde{m}+1}$, we would have $K_i\exists0$ at $\tilde{m}+1$, and so,
by \cref{lem:u-know}, we would have $K_i\cz$ at $\tilde{m}+2\le m$ --- a contradiction to $i$ being undecided at $m$.

As a hidden path exists w.r.t.\ $\node{i,\tilde{m}}$, we have, as in the proofs of \cref{optz-f1,thm:optmaj}, that $\tilde{m}\le f$; in fact, the same proof shows the even stronger claim $\tilde{m}\le\knownf{i,\tilde{m}}$ --- we we will later return to this inequality. As $\tilde{m}\le f$, we therefore have that $m\le\tilde{m}+1\le f+1$.
We thus have that every process that is active at time $f+2$, decides by this time at the latest.

Before moving on to show \Validity\ and \UniAg, we first complete the analysis of stopping times. 
Assume that $m=f+1$.
($i$ is still a process that is active but undecided at $m$.)
As $f = m-1 \le \tilde{m} \le \knownf{i,\tilde{m}} \le  \knownf{i,m}\le f$, we
we have that both $\tilde{m}=m-1$ and $\knownf{i,m}=f$.
As $\tilde{m}=m-1$, we have that no hidden path exists w.r.t.\ $\node{i,m}$. As $i$ is undecided at $m$,
we thus have, by definition of $\UOptZ$, that $K_i\exists0$ while $\lnot K_i\cv$ at $m$.
We therefore have that $K_i\exists0$ at $m$ for the first time.
Therefore, as $m>\tilde{m}\ge0$, there exists a process $j$ such that $K_j\exists0$ at $m-1$ and s.t.\ $\node{j,m-1}$ is seen by $\node{i,m}$. Thus, by \cref{lem:u-know}
and since $\lnot K_i\cv$, we have $\knownf{i,m}<t-1$, and so $f=\knownf{i,m}<t-1$.

We thus have that if $f=t-1$, then every process that is active at time $f+1$ decides by this time at the latest.

We move on to show \Validity\ and \UniAg. Henceforth, 
let $i$ be a (possibly faulty) process that decides in some run of $\UOptZ$, let $m$ be the decision time of $i$, and
let $\mathtt{v}$ be the value upon which $i$ decides.

\Validity:
If $\mathtt{v}=0$, then by definition $K_i\cz$ at $m$, and so $K_i\exists0$ at $m$, and in particular $\exists0$. If $\mathtt{v}=1$, then by definition $\lnot K_i\exists0$, and so the initial value of $i$ is $1$, and so $\exists1$. Either way, we have $\existsv$ as required.

\UniAg:
It is enough to show that if $\mathtt{v}=1$, then $0$ is never decided upon in the current run.
For the rest of this proof we assume, therefore, that $\mathtt{v}=1$; therefore, by definition of $\UOptZ$, we have that both $\lnot K_i\exists0$ and no hidden path exists w.r.t.\ $\node{i,m}$. By \cref{lem:rev}, we therefore have that $K_i\notnz$ at $m$, and in particular $\notnz$ at $m$. By induction, as in the proof of \cref{solve}, we have that $\notnz$ at every time later than $m$. In particular, we have that no correct process ever learns of an initial value of $0$ (as $\notnz$ would never hold from that point on), and so $\cz$ never holds; therefore, $K_j\cz$ never holds for any $j$, and so by definition of $\UOptZ$ no process ever decides upon $0$, and the proof is complete.
\end{proof}

\begin{proof}[Proof of \cref{thm:u-opt}]
%Statement: 
%$\UOptZ$ is an unbeatable \defemph{uniform} consensus protocol in which all decisions are made by time $f+2$ at the latest, and if $f \ge \tee-1$, then all decisions are made by time $f+1$ at the latest.

The claim follows from \cref{cor:uni-opt,u-solve}; in the boundary case of $t\!=\!0$ (which is not covered by \cref{cor:uni-opt}), we note that $\UOptZ$ and $\OptZ$ coincide, as do the problems of uniform consensus
and consensus; hence $\UOptZ$ is unbeatable, and \cref{thm:u-opt} holds, in that case as well.
\end{proof}

\begin{proof}[Proof of \cref{lem:ubeats}]
%Statement:
%If $2\le \tee\le n-2$, then $\UOptZ$ strictly dominates the opt-EDAUC protocol of \cite{CBS-uni}. Moreover, there exists an adversary
%for which $\decideiO$ is performed after~1 round in $\UOptZ$, and after $\tee+1$ rounds in opt-EDAUC. 

The proof has a similar structure to that of~\cref{lem:beats}. opt-EDAUC decides either one round after the sender set repeats, or at time~$\tee+1$. As argued in the proof of \cref{lem:beats}, when the sender set repeats there is a round~$k$ all of whose nodes are revealed. If they don't contain evidence of an initial value of~0, then $\UOptZ$ decides immediately. Otherwise, by \cref{lem:u-know}(a) a correct process will know~$\cz$ and decide one round later, and if this occurs at time~$m=\tee+1$, then by \cref{lem:u-know}(b)
it will decide immediately. 
An adversary~$\beta$ on which $\UOptZ$ beats opt-EDAUC with the claimed margins is a simplified version of the adversary~$\alpha$ defined  in the proof of \cref{lem:beats}. 
Denote the processes by~$\Proc=\{1,2,\ldots,n\}$. 
All initial values in~$\beta$ are~0. 
In round~1, two processes crash---process~1 and process~$2$, with process~1 sending only to process~$n$ and nobody else, and process~2 sending to everyone except process~$n$.
No process fails in round~2, and in each of the rounds $m=3,\ldots,\tee$, process~$m$ crashes without sending any messages. 
Since precisely~$\tee$ processes fail in~$\beta$ we have that $\beta\in\Crash(\tee)$.  
For $3\le m\le \tee$, every correct process fails to hear from process~$m$ in round~$m$ for the first time.  
Every correct process $i\ne n$ fails to hear from process~1 in round~1 and from process~2 in round~2, while process~$n$ fails to hear from~2 in round~1 and from process~1 in round~2. In the protocol opt-EDAUC of~\cite{CBS-uni}, no process decides before its sender set repeats, and thus all decisions are taken at time~$\tee+1$ when the adversary is~$\beta$. 
In $\UOptZ$, every correct process~$i$ sees $n-1\ge\tee+1$ values of~0 in the first round. By \cref{lem:u-know}(b) it follows that $K_i\cz$ holds at time~1, the rule for $\decideiZ$ in~$\UOptZ$ is satisfied, and  process~i decides~0 at time~1. 
\end{proof}

\subsection{Efficient Implementation of Full-Information Protocols}\label{efficient}

We now sketch the structure of communication-efficient implementations for the protocols proposed in the paper:
\begin{lemma}
\label{nlogn}
For each of the protocols $\OptZ$, $\OptMaj$, 
and $\UOptZ$ 
there is a protocol with identical decision times for all adversaries, in which every process sends at most $O(f\log n)$ bits overall to each other process. 
\end{lemma}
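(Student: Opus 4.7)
The plan is to show, for each of the three protocols, that process $i$'s decision at time $m$ depends only on a succinct summary of its local state---not on the full communication graph $\Ga(i,m)$---and that this summary can be communicated incrementally using $O(f\log n)$ bits per ordered pair of processes over the course of a run.

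First I would identify the information each protocol actually consults. In all three cases, the decision rules depend only on (a)~the failure pattern as learned by $i$, and (b)~$i$'s knowledge of initial values (and, for $\UOptZ$, of the earliest times at which other processes became ``$0$-aware''). The failure pattern can be encoded as a set of $(k,m')$ pairs, where $m'$ is the round in which process $k$ is first detected as crashed; this costs $O(\log n)$ bits per failure and hence $O(f\log n)$ bits of crash data in total. The ``revealed-time'' test of \cref{def:revealed} and the $K_i\cv$ test of \cref{lem:u-know} both reduce to bookkeeping on these data. For $\OptZ$ and $\UOptZ$, the only initial-value information needed is the single bit ``$i$ has seen an initial value~$0$'', while for $\OptMaj$ process~$i$ must additionally know the specific set of initial values it has seen, in order to evaluate $K_i(\Maj=\veee)$ and $\Majvals{i,m}$.

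The implementation I propose is: in round~$1$, each active process sends its own initial value ($1$ bit) to every other process; in each subsequent round, $i$ sends to $j$ only the new information it acquired in the preceding round---namely, newly-detected crashes (each a $(k,m')$ pair of $O(\log n)$ bits); for $\UOptZ$, the round in which each newly-$0$-aware process first became so ($O(\log n)$ bits per event); and, for $\OptMaj$, the initial values of processes whose round-$1$ message did not reach $j$ (each an $O(\log n)$-bit item). After round~$1$, the only initial values not already universally held belong to the at-most-$f$ processes that crashed in round~$1$, so at most $f$ ``late'' initial values are ever forwarded. Likewise, at most $f$ crash events and at most $f$ first-time ``$0$-awareness'' events per pair are ever reported. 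Since the protocols halt by time $f+2$ at the latest (\cref{optz-f1,thm:optmaj,u-solve}), summing over the run gives the claimed $O(f\log n)$-bit bound per pair.

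The main obstacle is verifying that this reduced local state truly suffices to reproduce the full-information decisions at identical times. I would exhibit a deterministic map from each process's reduced state to the predicates consulted by the decision rule---``some time $k\le m$ is revealed to $\node{i,m}$'', $K_i\exZ$, $K_i(\Maj=\veee)$, and $K_i\cv$---and verify, using \cref{lem:rev,lem:u-know} and the definition of revealed nodes, that these predicates evaluate identically in the reduced and full-information settings. Because the predicates depend only on the failure pattern seen by $i$ and on monotone ``seen''-style flags, and both are faithfully transmitted by the scheme above, decisions in the compressed protocol occur in exactly the same rounds as in the original full-information version.
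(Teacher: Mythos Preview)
Your proposal is correct and follows essentially the same approach as the paper's proof sketch: both observe that decisions in these protocols depend only on (a subset of) the vector of initial values and on the failure information, and both implement the protocols via incremental reports of newly discovered values and crash times, summing to $O(f\log n)$ bits per directed channel. The paper's sketch additionally includes an explicit ``\texttt{I'm\_alive}'' heartbeat for rounds in which a process has nothing new to report (needed so that recipients can still detect silence as a crash), whereas you instead bound the number of active rounds via the stopping-time results; apart from that and your somewhat finer per-protocol breakdown, the arguments coincide.
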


\begin{proof}[Proof (Sketch)]
Moses and Tuttle in \cite{MT} show how to implement full-information protocols in the crash failure model with linear-size messages. In our case, a further improvement is possible, since decisions in all of the protocols depend only on the identity of hidden nodes and on the vector of initial values. In a straightforward implementation, we can have a process $i$ report  ``{\tt value}$(j) = \valv$'' once for every $j$ whose initial value it discovers, and ``{\tt failed\_at}$(j) = \ell$'' once where~$\ell$ is the earliest failure round it knows for $j$. In addition, it should send an ``{\tt I'm\_alive}'' message in every round in which it has nothing to report. Process~$i$ can send at most one {\tt value} message and two 
{\tt failed\_at} messages for every $j$. Since {\tt I'm\_alive} is a constant-size message sent fewer than~$f+2$ times, and since encoding~$j$'s ID along with a failure round number $m\le f+2$ requires $\log n$ bits, a process~$i$ sends a total of $O(f \log n)$ bits overall. 
\end{proof}

\subsection{Different Types of Unbeatability}
\label{sec-notions}

We first formally define last-decider unbeatability.

\begin{definition}[Last-Decider Domination and Unbeatability]
\leavevmode
\begin{itemize}
\item
A decision protocol $Q$ \defemph{last-decider dominates} a protocol~$P$ in~$\gamma$, denoted by $Q\boldsymbol{\overset{\smash{l.d.}}{\dom}_\gamma} P$ if, for all adversaries $\alpha$, if $i$ the last decision in~$P[\alpha]$ is at time $m_i$, then all decisions in $Q[\alpha]$ are taken before or at $m_i$. Moreover, we say that $Q$  \defemph{strictly last-decider dominates} $P$
if $Q\overset{\smash{l.d.}}{\dom}_\gamma P$ and  $P\!\!\boldsymbol{\not}\!\!\!\overset{\smash{l.d.}}{\dom}_\gamma Q$. I.e., if for some $\alpha\in\gamma$ the last decision in $Q[\alpha]$ is {\em strictly before} the last decision in $P[\alpha]$.
\item
A protocol $P$ is a \defemph{last-decider unbeatable} solution to a decision task~$S$ in a context~$\gamma$ if $P$ solves~$S$ in~$\gamma$ and no protocol $Q$ solving~$S$ in~$\gamma$ strictly last-decider dominates~$P$.
\end{itemize}
\end{definition}

\begin{remark}
\leavevmode
\begin{itemize}
\item
If $Q\boldsymbol{\dom_\gamma} P$, then $Q\boldsymbol{\overset{\smash{l.d.}}{\dom}_\gamma} P$. (But not the other way around.)
\item
None of the above forms of strict domination implies the other.
\item
None of the above forms of unbeatability implies the other.
\end{itemize}
\end{remark}

\vspace{-3mm}
Last-decider domination does not imply domination in the sense of the rest of this paper (on which our proofs is based). 
Nonetheless, the specific property of protocols dominating $\OptZ$, $\OptMaj$, 
and $\UOptZ$, which we use to prove that these protocols are unbeatable, holds also for protocols that only last-decider dominate these protocols. 

\begin{lemma}\label{last-dom-sufficient}
\leavevmode
\begin{parts}
\item\label{last-dom-sufficient-pz}
Let $Q\overset{\smash{l.d.}}{\dom}\Pz$ satisfy \Decision. If $K_i\exists0$ at $m$ in a run $r\!=\!Q[\alpha]$ of $Q$, then $i$ decides in $r$ no later than at $m$.
\item\label{last-dom-sufficient-maj}
Let $Q\overset{\smash{l.d.}}{\dom}\OptMaj$ satisfy \Decision. If $K_i(\Maj=v)$ for $v \in \{0,1\}$ at $m$ in a run $r\!=\!Q[\alpha]$ of $Q$, then $i$ decides in $r$ no later than at $m$.
\item\label{last-dom-sufficient-upz}
Let $Q\overset{\smash{l.d.}}{\dom}\UPz$ satisfy \Decision. If $K_i\cz$ at $m$ in a run $r\!=\!Q[\alpha]$ of $Q$, then $i$ decides in $r$ no later than at $m$.
\end{parts}
\end{lemma}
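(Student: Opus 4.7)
The plan is to prove all three parts by a common strategy: given $(i, m, \alpha)$ satisfying the relevant hypothesis with $r = Q[\alpha]$, construct an auxiliary adversary $\beta$ such that (a) $Q[\beta]_i(m) = r_i(m)$, so that $\beta$ leaves $i$'s local state at time $m$ undisturbed, and (b) every decision in the reference protocol's run on $\beta$---$\Pz[\beta]$ for part 1, $\OptMaj[\beta]$ for part 2, or $\UPz[\beta]$ for part 3---is taken by time $m$. Once such a $\beta$ is available, last-decider domination combined with (b) forces every decision in $Q[\beta]$ to occur by $m$, and (a) then transfers this conclusion to $Q[\alpha] = r$: process $i$ decides by $m$ in $r$.

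For part 1, let $V$ denote $i$'s view $\Ga(i,m)$ in $r$. By $K_i\exists0$ at $(r,m)$ and \cref{0-chain}, there is a process $j_0$ with $v_{j_0} = 0$ and $\node{j_0,0} \in V$. Construct $\beta$ by (i) keeping the same initial values as $\alpha$; (ii) having the failure pattern coincide with $\alpha$ on every node and every edge of $V$; (iii) outside $V$, never having a process crash beyond what is already visible in $V$, delivering every possible message, and in round $1$ in particular letting $j_0$ additionally deliver to every $l$ with $\node{l,1} \notin V$. Property (a) holds because the incoming edges of every $V$-node in $\beta$ exactly mirror those in $\alpha$, and no new edges into $V$-nodes are introduced---so $i$'s view at time $m$ in $\beta$, determined recursively from these edges, equals $V = r_i(m)$. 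For property (b), a short induction on the round shows that every process alive at time $m$ in $\beta$ knows $\exists 0$ by time $m$: every process outside $V$ at time $1$ receives the seed directly from $j_0$; thereafter the maximum-delivery policy outside $V$, together with the fact that $i$ itself already knows $\exists 0$ at $m$, guarantees that $\exists 0$-knowledge saturates all surviving processes by time $m$. Processes that crash before $m$ either decide prior to their crash or never decide, and in either case contribute no decision later than $m$.

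For part 2, the same construction works with $j_0$ replaced by the $\ge n/2$ (or $>n/2$) time-$0$ nodes in $V$ whose initial value is $v$, which together witness $K_i(\Maj = v)$; arranging each such seed to deliver its initial value to every outside-$V$ process in round $1$, saturation by time $m$ ensures every alive-at-$m$ process in $\OptMaj[\beta]$ knows $\Maj = v$ and decides $v$ by $m$. For part 3, we split on the two cases of \cref{lem:u-know}: in case (a), $K_i \exists 0$ already held at time $m-1$, and applying the part-$1$ construction at time $m-1$ makes every alive-at-$m$ process know $\cz$ by $m$, since such a process hears by $m$ from someone who knew $\exists 0$ at $m-1$ and is still correct; in case (b), the $\tee - d$ witnessing processes whose time-$(m{-}1)$ knowledge of $\exists0$ is recorded in $V$ serve directly as seeds of $\cz$ for the same spreading argument.

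The main obstacle is verifying property (b) in full detail: one must ensure that the extra outside-$V$ deliveries really do saturate every surviving process by time $m$ while preserving $V$ exactly, keeping the total crash count at most $\tee$ (which holds because $\beta$'s crashes are a subset of $\alpha$'s $V$-visible crashes), and---for part 3---not accidentally causing $\node{i,m}$'s state to witness a different set of correct processes than in $\alpha$. The reachability-style analysis required here closely parallels the indistinguishability constructions already used in the proofs of \cref{thm:optz}, \cref{thm:optmaj}, and \cref{thm:u-opt}, so no conceptually new tools are required.
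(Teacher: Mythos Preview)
Your overall strategy---construct an adversary $\beta$ with $Q[\beta]_i(m)=r_i(m)$ in which the reference protocol finishes by time~$m$, then invoke last-decider domination---is exactly the paper's. The difference lies in how $\beta$ is built, and here your proposal is both more complicated than necessary and, in Part~3, incomplete.

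For Parts~1 and~2 the paper's construction is far simpler than your saturation scheme. Since $K_i\exists0$ at $m>0$, there is a single process~$j$ with $K_j\exists0$ at $m-1$ and $\node{j,m-1}\in r_i(m)$; the paper takes $\beta$ to agree with $\alpha$ on $i$'s view and to have $i$ and $j$ never fail. Because $j$ is non-faulty, $j$ delivers to everyone in round~$m$, so every process active at~$m$ knows $\exists0$ immediately---no inductive spreading argument is needed. Your construction (preserve $V$, flood outside $V$) can be made to work for Part~1, but note that as written it fails at $m=0$: you keep the same initial values, so a process with initial value~$1$ does not know $\exists0$ at time~$0$ and does not decide in $\Pz[\beta]$ at time~$0$. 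The paper handles $m=0$ separately by setting all initial values to~$0$.

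For Part~3 your sketch has a genuine gap. Hearing at time~$m$ ``from someone who knew $\exists0$ at $m-1$ and is still correct'' does \emph{not} yield $K_p\cz$: process~$p$ cannot tell that the sender is correct. By \cref{lem:u-know}, $K_p\cz$ at~$m$ requires either $K_p\exists0$ already at $m-1$, or at least $\tee-\knownf{p,m}$ distinct time-$(m{-}1)$ witnesses. Your spreading argument guarantees at best that every survivor knows $\exists0$ by time~$m$; it does not guarantee knowledge at $m-1$, nor does it make $\knownf{p,m}$ large. The paper's construction addresses this directly: it picks a set $I$ of $\tee-\knownf{i,m}-1$ processes (seen by $\node{i,m}$, disjoint from $\{i,j\}$) and has them all crash at time $m-1$ sending only to~$i$. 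This forces $\knownf{p,m}\ge \tee-1$ for every survivor $p\ne j$, so a single witness $j$ suffices for clause~(b) of \cref{lem:u-know}; and $j$ itself satisfies clause~(a). This crash-injection idea is the missing ingredient in your Part~3 outline.
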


The main idea in the proof of each of the parts of \cref{last-dom-sufficient} is to show that $i$ considers it possible that
all other active processes also know the fact stated in that part,
and so they must all decide by the current time in the corresponding run of the dominated protocol. Hence, the last decision in that run is made in the current time; thus, by last-decider domination, $i$ must decide.
The proofs for the first two parts are somewhat easier, as in each of these parts, any process at $m$ that sees (at least) the nodes seen by $\node{i,m}$
(or has the same initial value, if $m\!=\!0$) also knows the relevant
fact stated in that part. We demonstrate this by proving \cref{last-dom-sufficient-pz}; the analogous 
proof of \cref{last-dom-sufficient-maj} is left to the reader.

\begin{proof}[Proof of \cref{last-dom-sufficient-pz} of \cref{last-dom-sufficient}]
If $m\!=\!0$, then there exists a run $r'\!=\!Q[\beta]$ of $Q$, s.t.~~\emph{1)} $r'_i(0)\!=\!r_i(0)$,~~\emph{2)} in $r'$ all initial values are $0$, and~~\emph{3)} $i$ never fails in $r'$. Hence, in $\Pz[\beta]$ all decisions
are taken at time $m\!=\!0$, and therefore so is the last decision. Therefore, the last decision in $r'$ must be taken at time $0$. As $i$ never fails in $r'$, by \Decision\ it must decide at some
point during this run, and therefore must decide at $0$ in $r'$. As $r_i(0)\!=\!r'_i(0)$, $i$ decides at $0$ in $r$ as well, as required.

If $m\!>\!0$, then there exists a process $j$ s.t.\ $K_j\exists0$ at $m-1$ in $r$ and $\node{j,m-1}$ is seen by $\node{i,m}$. Thus, there exists a run $r'\!=\!Q[\beta]$ of $Q$,
s.t.~~\emph{1)} $r'_i(m)\!=\!r_i(m)$, and~~\emph{2)} $i$ and $j$ never fail in $r'$. Thus, all processes that are active at $m$ in $r'$ see $\node{j,m-1}$ in $r'$ and therefore know $\exists0$ in $r'$.
Hence, in $\Pz[\beta]$ all decisions are taken by time $m$, and therefore so is the last decision. Therefore, the last decision in $r'$ must be taken no later than at time $m$.
As $i$ never fails in $r'$, by \Decision\ it must decide at some point during this run, and therefore must decide by $m$ in $r'$. As $r_i(m)\!=\!r'_i(m)$, $i$ decides by $m$ in $r$ as well, as required.
\end{proof}

As the proof of \cref{last-dom-sufficient-upz} is slightly more involved, we show it as well.

\begin{proof}[Proof of \cref{last-dom-sufficient-upz} of \cref{last-dom-sufficient}]
If $m\!=\!0$, then by \cref{lem:u-know}, $t\!=\!0$. There exists a run $r'\!=\!Q[\beta]$ of $Q$, s.t.~~\emph{1)} $r'_i(0)=r_i(0)$, and~~\emph{2)} in $r'$ all initial values are $0$. Therefore,
as $t\!=\!0$, we have by \cref{lem:u-know} that all processes know $\cz$ at $m\!=\!0$ in $r'$. Hence, in $\UPz[\beta]$ all decisions
are taken at time $m\!=\!0$, and therefore so is the last decision. Therefore, the last decision in $r'$ must be taken at time $0$ as well. Since $t\!=\!0$, $i$ never fails in $r'$, and so by \Decision\ it must decide at some
point during this run, and therefore must decide at $0$ in $r'$. As $r_i(0)\!=\!r'_i(0)$, $i$ decides at $0$ in $r$ as well, as required.

If $m\!>\!0$, then there exists a process $j$ s.t.\ $K_j\exists0$ at $m\!-\!1$ in $r$ and $\node{j,m-1}$ is seen by $\node{i,m}$ in $r$.
Furthermore, as $t\!<\!n$, there exists a set of processes $I$
s.t.~~\emph{1)} $i,j \notin I$,~~\emph{2)} $|I|=t\!-\!\knownf{i,m}\!-\!1$, and~~\emph{3)} $\node{k,m\!-\!1}$ is seen by $\node{i,m}$ for every $k \in I$. Thus, there exists a run $r'=Q[\beta]$ of $Q$, s.t.~~\emph{1)} $r'_i(m)\!=\!r_i(m)$,~~\emph{2)} $i$ and $j$ never fail in $r'$,~~\emph{3)} all of $I$ fail in $r'$ at $m\!-\!1$, successfully sending messages only to $i$, and~~\emph{4)} every process at $m\!-\!1$ in $r'$ that is not seen by $\node{i,m}$, is not seen by any other process at $m$ as well. We henceforth reason about $r'$. Every process $k \ne j$ that is active at $m$ sees $\node{j,m\!-\!1}$ and furthermore satisfies
$\knownf{k,m}\ge\knownf{i,m}+|I|=t-1$. Thus, by \cref{lem:u-know}, $K_k\cz$ at $m$, and thus $k$ decides at $(\UPz[\beta],m)$.
Additionally, as $K_j\exists0$ at $m\!-\!1$, by \cref{lem:u-know} $K_j\cz$ at $m$, and thus $j$ decides at $(\UPz[\beta],m)$.
Hence, in $\UPz[\beta]$ all decisions are taken by time $m$, and therefore so is the last decision. Therefore, the last decision in $r'$ must be taken no later than at time $m$.
As $i$ never fails in $r'$, by \Decision\ it must decide at some point during this run, and therefore must decide by $m$ in $r'$. As $r_i(m)=r'_i(m)$, $i$ decides by $m$ in $r$ as well, as required.
\end{proof}

\begin{proof}[Proof of \cref{thm:last-decider}]
%Statement:
%The protocols $\OptZ$ and $\OptMaj$ are also last-decider unbeatable for consensus, 
%while~$\UOptZ$ is
%last-decider unbeatable for uniform consensus.

As explained above, \cref{thm:last-decider} follows from \cref{last-dom-sufficient}, and from the proofs of 
\cref{thm:optz,thm:optmaj,thm:u-opt}.
\end{proof}

\end{document}